\newtheorem*{rep@theorem}{\rep@title}
\newcommand{\newreptheorem}[2]{%
	\newenvironment{rep#1}[1]{%
		\def\rep@title{#2 \ref{##1}}%
		\begin{rep@theorem}}%
		{\end{rep@theorem}}}
\newtheorem{prop}{Proposition}
\newtheorem{lem}{Lemma}
\newtheorem{defi}{Definition}
\newcommand{\bs}[1]{\boldsymbol{#1}}
\newcommand{\e}{\mathrm{e}}
\newcommand{\id}{\mathds{1}}
\newcounter{MYtempeqncnt}
\begin{document}

\title{Quantum Pin Codes}

\author{Christophe~Vuillot
        and~Nikolas~P.~Breuckmann% <-this % stops a space

\thanks{C. Vuillot, previously QuTech, TU Delft, now Inria, France, \href{mailto:christophe.vuillot@inria.fr}{christophe.vuillot@inria.fr}.}%
\thanks{N. P. Breuckmann, University College London, \href{mailto:n.breuckmann@ucl.ac.uk}{n.breuckmann@ucl.ac.uk}.}% <-this % stops a space
\thanks{ This paper was presented at QEC19 in London.}%
}

\maketitle

\begin{abstract}
	We introduce quantum pin codes: a class of quantum CSS codes.
	Quantum pin codes are a generalization of quantum color codes and Reed-Muller codes and share a lot of their structure and properties.
	Pin codes have gauge operators, an unfolding procedure and their stabilizers form so-called $\ell$-orthogonal spaces meaning that the joint overlap between any $\ell$ stabilizer elements is always even.
	This last feature makes them interesting for devising magic-state distillation protocols, for instance by using puncturing techniques.
	We study examples of these codes and their properties.
\end{abstract}

\section{Introduction}
\IEEEPARstart{T}{he realization} of a fault-tolerant universal quantum computer is a tremendous challenge.
At each level of the architecture, from the hardware implementation up to the quantum software, there are difficult problems that need to be overcome. 
Hovering in the middle of the stack, quantum error correcting codes influence both hardware design and software compilation.
They play a major role not only in mitigating noise and faulty operations but also in devising protocols to distill the necessary resources that grant universality to an error corrected quantum computer \cite{campbell_roads_2017}.
The study and design of quantum error correcting codes is therefore one of the major tasks to be undertaken on the way to universal quantum computation.

A well-studied class of quantum error correcting codes are Calderbank-Shor-Steane codes (CSS codes)  \cite{calderbank_good_1996,steane_andrew_multiple-particle_1996}, which are stabilizer quantum codes \cite{gottesman_stabilizer_1997, calderbank_quantum_1997}.
The advantage of CSS codes over general stabilizer codes is their close connection to linear codes which have been studied in classical coding theory.
A CSS code can be constructed by combining two binary linear codes.
Roughly speaking, one code performs parity checks in the Pauli $X$-basis and the other performs parity checks in the Pauli $Z$-basis.
Not any two binary linear codes can be used: it is necessary that any two pairs of code words from each code space have to have even overlap.
Several families of CSS codes have been devised based on geometrical, homological or algebraic constructions \cite{grassl_codes_1997, freedman_projective_2001, freedman_$z_2$-systolic_2002, kitaev_fault-tolerant_2003, bacon_operator_2006, bombin_topological_2006, kovalev_quantum_2013, couvreur_construction_2013, guth_quantum_2014, tillich_quantum_2014, audoux_tensor_2019, leverrier_quantum_2015}, however, it is still open which parameters can be achieved.

Besides being able to protect quantum information, quantum error correcting codes must also allow for some mechanism to process the encoded information without lifting the protection.
It is always possible to find some operations realizing a desired action on the encoded information but these operations may spread errors in the system.
Therefore we shall only consider operations to be fault-tolerant if they do not spread errors.
For instance, if the operation acts separately on each qubit of a code it cannot spread single qubit errors to multi-qubit errors.
This is called a transversal gate, but not any code admits such gates.
More generally, for many codes in the CSS code family it is possible to fault-tolerantly implement Clifford operations, which are all unitary operations preserving Pauli operators under conjugation.
Clifford operations by themselves do not form a universal gate set.
Several techniques to obtain a universal gate set, by supplementing the non-Clifford $T$ gate to Cliffords for example, have been devised \cite{bravyi_universal_2005, bombin_gauge_2015}, among which magic state distillation is currently the most promising candidate.

In this work we introduce a new class of CSS codes, which we call \emph{quantum pin codes}.
They are inspired from $D$-dimensional quantum color codes \cite{bombin_exact_2007} which are known for their transversal gates.
Quantum pin codes form a large family while at the same time have structured stabilizer generators.
Namely, they form $\ell$-orthogonal spaces, meaning that the common overlap between any $\ell$ stabilizer elements is even.
This structure is necessary for codes to admit transversal phase gates and it can be leveraged to obtain codes that can be used within magic state distillation protocols.
This structure comes from an underlying multi-ary relation with a single simple property and we show different ways of constructing such relations.
While the whole family is too large to feature interesting transversal gates for all its members it could lead to interesting subfamilies.
Besides one can obtain magic state distillation protocols from any relation defining a quantum pin code by using puncturing techniques. Moreover, using a slightly different code definition, one can obtain a quantum code with transversal $T$ gates realizing a logical circuit of $CCZ$ gates.
Moreover the construction of pin codes differs substantially from previous approaches making it an interesting space to explore further.

In Section~\ref{sec:pincodes}, after introducing some notations and terminology, we define quantum pin codes, explain their relation to quantum color codes and give some concrete approaches to construct them.
In Section~\ref{sec:multiorthogonality}, we discuss the conditions for transversal implementation of phase gates on a CSS code and magic state distillation.
In Section~\ref{sec:prop}, we investigate the properties of pin codes.
Finally, in Section~\ref{sec:examples}, we study concrete examples of pin codes obtained from Coxeter groups and chain complexes.
We also discuss applications for magic state distillation.

\section{Pin codes}
\label{sec:pincodes}

\subsection{Quantum stabilizer codes, CSS codes and subsystem codes}
A quantum code can be defined by considering an Abelian subgroup $S$ of the Pauli group on $n$ qubits such that $-I \not\in S$.
That is to say a commuting group generated by tensor product of the single-qubit Pauli operators and identity:
\begin{eqnarray}
	X = \begin{pmatrix}
		0 & 1 \\ 1 & 0
	\end{pmatrix},& Y = \begin{pmatrix}
	0 & -i \\ i & 0
\end{pmatrix},\\
	Z = \begin{pmatrix}
1 & 0 \\ 0 & -1
\end{pmatrix},& I = \begin{pmatrix}
1 & 0 \\ 0 & 1
\end{pmatrix}.\label{eq:Paulis}
\end{eqnarray}
The quantum code associated with $S$ is defined as the common $+1$-eigenspace of all elements of $S$.
\begin{equation}
	\mathcal{C} = \left \{\ket{\psi}\;\middle\vert\; \forall s\in S,s\ket{\psi} = \ket{\psi}\right \}.
\end{equation}
Quantum codes defined in this way are called \emph{stabilizer codes}, the generator elements of $S$ are called stabilizer generators and generic elements of $S$ simply stabilizers.
A generating set of $S$ defines stabilizer checks which can be measured in order to infer information about the errors.
The logical operators of a stabilizer code are the Pauli operators which commute with all stabilizers.
That is to say they belong to the centralizer of the stabilizer group.
The fact that they commute with stabilizers makes them preserve the code space.
A logical operator which also belongs to the stabilizer group is called trivial as it not only preserve the code space but also each code state.
Logical operators outside of the stabilizer group are non-trivial as they act non-trivially on the code states while preserving the codes space.
For more background on stabilizer codes see \cite{gottesman_stabilizer_1997}.
A stabilizer code is called a Calderbank-Shor Steane (CSS) code \cite{calderbank_good_1996, steane_andrew_multiple-particle_1996} if there exists a set of stabilizer checks which only act non-trivially as either Pauli-$X$ or Pauli-$Z$ on each qubit in their support.
In that case the code is most conveniently described by two classical binary linear codes, $\mathcal{C}_X\subset\mathbb{F}_2^n$ and $\mathcal{C}_Z\subset\mathbb{F}_2^n$, generating the $X$-type and $Z$-type stabilizers respectively.
To form a valid stabilizer code the $X$-type stabilizers and $Z$-type stabilizer must commute which constraints $\mathcal{C}_X$ and $\mathcal{C}_Z$ to be included in each others dual space:
$\mathcal{C}_X\subset\mathcal{C}_Z^\perp$, note that this condition is symmetric.
The logical operators of a CSS code also split into $X$- and $Z$-type operators, they are given by~$\mathcal{C}_Z^\perp$ and $\mathcal{C}_X^\perp$ respectively.
Code states of CSS codes are simply expressed as equal superposition of computational basis states over cosets:
\begin{equation}
	\ket{\overline{x}} = \frac{1}{\sqrt{2^{\vert C_X\vert}}}\sum_{y\in\mathcal{C}_X}\ket{y\oplus x},\label{eq:csscodestate}
\end{equation}
where $x\in\mathcal{C}_Z^\perp$ and $\overline{x}$ designate the coset of $x$ in $\mathcal{C}_Z^\perp/\mathcal{C}_X$.

Subsystem codes are a generalization of stabilizer codes.
A subsystem code is defined by any (non-Abelian) subgroup of the Pauli group~$G$ which is called the \emph{gauge group}.
In order to define the code we define the stabilizer subgroup $S$ of $G$ to be the center of~$G$, i.e. the largest Abelian subgroup of~$G$.
The code is then defined as the stabilizer code of $S$.
The reason to define a code in this way is that a generating set of $G$ (called \emph{gauge checks}) can be of lower weight than any generating elements of $S$.
This allows to infer the eigenvalues of the stabilizer checks by measuring the lower-weight gauge checks.

\subsection{Terminology and formalism}

Consider $D+1$ finite, disjoint sets, $(L_0,\ldots,L_D)$ which we call \emph{levels}. 
The elements in each of the levels are called \emph{pins}. 
If a pin $p$ is contained in a set $L_j$ then $j$ is called the \emph{rank} of~$p$.
Since all the $L_j$ are disjoint each pin has a unique rank.

Consider a $(D+1)$-ary relation on the $D+1$ levels $L_0,\ldots,L_D$, that is to say a subset of their Cartesian product $F\subset L_0\times\cdots\times L_D$.
The tuples in the relation~$F$ will be called \emph{flags}.

A subset of the ranks, $T\subset\{0,\ldots,D\}$, is called a \emph{type}.
We will consider tuples of pins coming from a subset of the levels selected by a type $T$ and call them \emph{collection} of pins of type $T$.
A collection of pins of type $T=\{j_1,\dots,j_k\}$, is therefore an element $s\in L_{j_1}\times\cdots\times L_{j_k}$.
Note that we can interchangeably view a collection of pins as a tuple or a set as long as no two pins come from the same level in the set.

We now define specific subsets of flags, called pinned sets, using projections.
\begin{defi}[Projection of type $T$]
	\label{def:projtt}
	Given a set of flags $F$ and a type $T=\{j_1,\dots,j_k\}$, the projection, $\Pi_T$, is defined as the natural Cartesian product projection acting on the flags
	\begin{eqnarray*}
		\Pi_T :&F&\;\rightarrow \;L_{j_1}\times\cdots\times L_{j_k}\\
		&(p_0,\dots,p_D)&\;\mapsto\;(p_{j_1},\dots,p_{j_k}).
	\end{eqnarray*}
\end{defi}
Note that the projection of empty type, $\Pi_{\emptyset}$, is also well defined: for any $f\in F$ we have $\Pi_{\emptyset}(f) = ()$.
\begin{defi}[Pinned set]
	\label{def:pinset}
	Let $F$ be a set of flags, $s$ be a collection of pins of type~$T$ and $\Pi_T$ be the corresponding projection as defined above.
	We define the pinned set of type~$T$ and collection of pins~$s$, $P_T(s)$, as the preimage of~$s$ under the projection~$\Pi_T$,
	\[P_T(s) = \Pi_T^{-1}\left (s\right )\subset F.\]
\end{defi}
In words: a pinned set is the set of flags whose projection of a given type $T$ yields a given collection of pins, $s$.
A definition of a pinned set which is equivalent to the one given above is 
\begin{align}\label{eq:altcharac}
P_T&(p_{j_{1}},\ldots,p_{j_{k}}) = \nonumber\\
&F\cap\left ( L_0\cdots\times\Set{p_{j_1}}\times\cdots\times\Set{p_{j_k}}\times\cdots L_D\right ).
\end{align}
Proving that these definitions are equivalent amounts to proving
\begin{equation}
\Pi_T^{-1}\left (s\right ) = F\cap\left ( L_0\cdots\times\Set{p_{j_1}}\times\cdots\times\Set{p_{j_k}}\times\cdots L_D\right ).
\end{equation}
To prove the inclusion of the left-hand side in the right-hand side we observe that an element in the set from the left hand-side is in $F$ by definition of $\Pi_T$ and in $\left (L_0\cdots\times\Set{p_{j_1}}\times\cdots\times\Set{p_{j_k}}\times\cdots L_D\right )$ since $s=(p_{j_1},\ldots,p_{j_k})$.
To prove the inclusion of the right-hand side in the left-hand side one takes any element $f\in F\cap\left ( L_0\cdots\times\Set{p_{j_1}}\times\cdots\times\Set{p_{j_k}}\times\cdots L_D\right )$ and notice that $\Pi_T(f) = s$.

The pinned set with respect to the empty type is none other than the full set of flags, $F$.
For convenience, we will refer to a pinned set defined by a collection with $k$ pins as a \mbox{\emph{$k$-pinned set}}.

If one wants to form a mental image one can imagine a pin-board with pins of different colors for each levels on it.
Then the flags can be represented by cords each attached to one pin of each level, see Fig.~\ref{fig:pinsdrawing} as an example.
\begin{figure}[h]
	\centering
	\includegraphics[width=.33\linewidth]{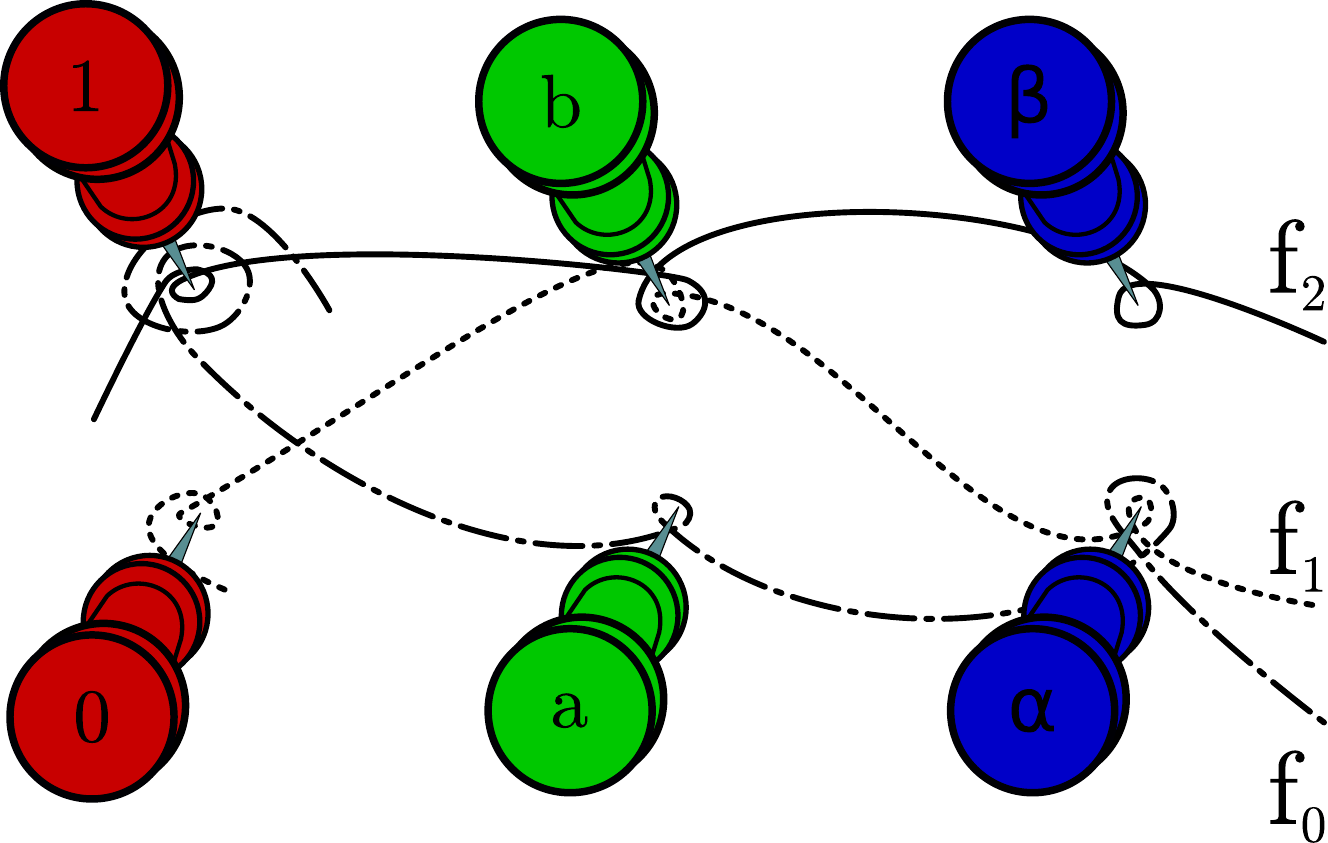}\hfill
	\includegraphics[width=.33\linewidth]{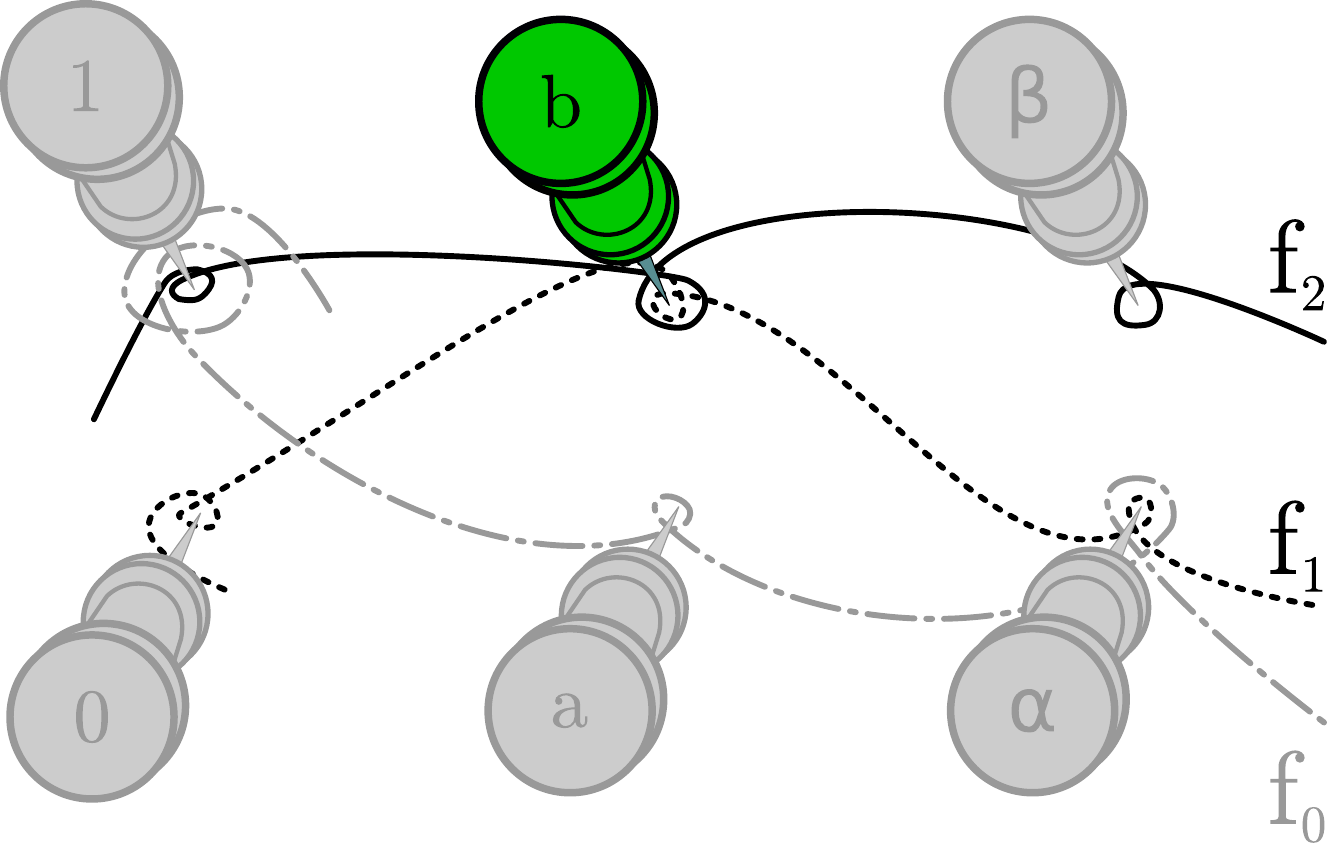}\hfill
	\includegraphics[width=.33\linewidth]{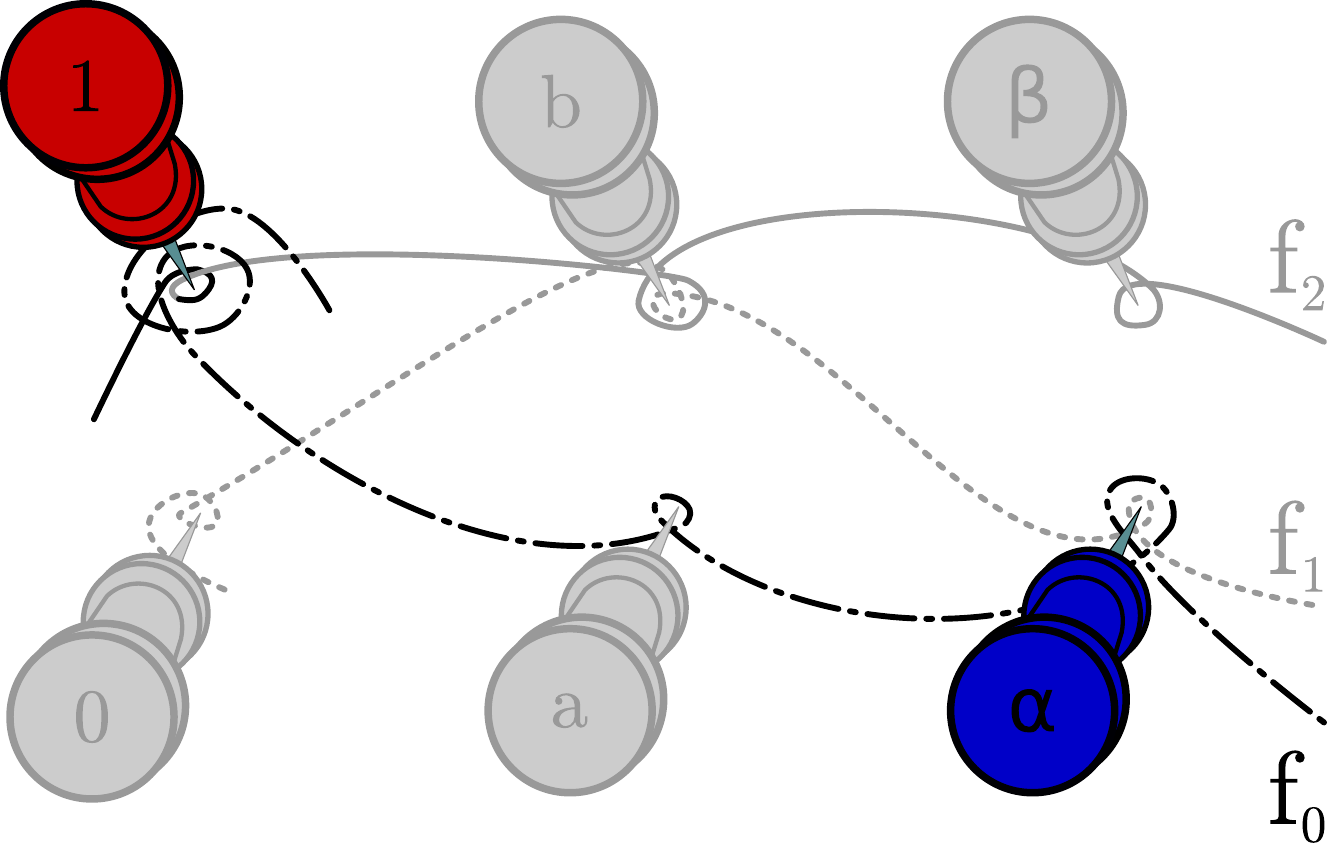}
	
	\caption{Illustration of three levels (red, green and blue) each containing two pins and a relation containing three flags ($f_0$, $f_1$ and $f_2$) symbolized by cords attached to the pins.
	The pinned set $P_{\rm green}(b)$ is composed of the flags $f_1$ and $f_2$.
	The pinned set $P_{({\rm red, blue})}(1,\alpha)$ only contains the flag $f_0$.}
	\label{fig:pinsdrawing}
\end{figure}

The structure of pinned sets layed out above is such that they intersect and decompose nicely.
This is captured by the following two propositions.
\begin{prop}[Intersection of pinned sets]
	\label{prop:interpinset}
	Given $F$, a set of flags, let $s_1$ and $s_2$ be two collections of pins of types $T_1$ and $T_2$ respectively.
	Then the intersection of the two pinned sets $P_{T_1}(s_1)$ and $P_{T_2}(s_2)$ is either empty or a pinned set of type $T_1\cup T_2$ characterized by the collection of pins $s_1\cup s_2$, 
	\begin{align*}
	P_{T_1}(s_1)&\cap P_{T_2}(s_2) = \\&\begin{cases}P_{T_1\cup T_2}(s_1\cup s_2)&\text{ if }\left \vert s_1\cup s_2\right \vert = \left \vert T_1\cup T_2\right \vert\\\emptyset&\text{ otherwise.}\end{cases}
	\end{align*}
\end{prop}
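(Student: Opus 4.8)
The plan is to work directly from the product characterization of pinned sets in~\eqref{eq:altcharac} and reduce the statement to a rank-by-rank comparison of constraints. Write $s_1 = (c_j^{(1)})_{j\in t_1}$ and $s_2 = (c_j^{(2)})_{j\in t_2}$. By~\eqref{eq:altcharac}, a flag $f = (c_0,\dots,c_D)\in F$ lies in $P_{t_1}(s_1)$ iff $c_j = c_j^{(1)}$ for all $j\in t_1$, and in $P_{t_2}(s_2)$ iff $c_j = c_j^{(2)}$ for all $j\in t_2$. Hence $f\in P_{t_1}(s_1)\cap P_{t_2}(s_2)$ iff it satisfies both families of constraints simultaneously, i.e.\ for each rank $j$ the coordinate $c_j$ lies in the factor $A_j$, where $A_j = \Set{c_j^{(1)}}$ for $j\in t_1\setminus t_2$, $A_j = \Set{c_j^{(2)}}$ for $j\in t_2\setminus t_1$, $A_j = \Set{c_j^{(1)}}\cap\Set{c_j^{(2)}}$ for $j\in t_1\cap t_2$, and $A_j = C_j$ otherwise. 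So $P_{t_1}(s_1)\cap P_{t_2}(s_2) = F\cap (A_0\times\cdots\times A_D)$.

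I would then split into the two cases governing the factors on $t_1\cap t_2$. If $c_j^{(1)} = c_j^{(2)}$ for every $j\in t_1\cap t_2$, then each $A_j$ is a singleton for $j\in t_1\cup t_2$ and all of $C_j$ otherwise; the union $s_1\cup s_2$ is then a genuine collection of pins of type $t_1\cup t_2$ (exactly one pin per rank in $t_1\cup t_2$), so $\lvert s_1\cup s_2\rvert = \lvert t_1\cup t_2\rvert$, and the product $A_0\times\cdots\times A_D$ is precisely the product defining $P_{t_1\cup t_2}(s_1\cup s_2)$ via~\eqref{eq:altcharac}; intersecting with $F$ gives $P_{t_1\cup t_2}(s_1\cup s_2)$, the first branch. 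If instead $c_j^{(1)}\neq c_j^{(2)}$ for some $j\in t_1\cap t_2$, then $A_j = \emptyset$, so the whole product is empty and the intersection is empty; simultaneously $s_1\cup s_2$ contains two distinct pins of rank $j$, whence $\lvert s_1\cup s_2\rvert > \lvert t_1\cup t_2\rvert$, placing us in the ``otherwise'' branch. In both cases the stated formula is reproduced.

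The only point requiring care is the identification of a collection of pins with a set, noted in the text just before the statement: one must verify that the cardinality test $\lvert s_1\cup s_2\rvert = \lvert t_1\cup t_2\rvert$ is equivalent to $s_1$ and $s_2$ agreeing on every shared rank, which is exactly what makes $s_1\cup s_2$ a legitimate collection of pins. Once that bookkeeping is in place, everything is a routine unwinding of Definitions~\ref{def:projtt} and~\ref{def:pinset} together with the elementary identity $\Pi_t^{-1}(s_1)\cap \Pi_t^{-1}(s_2)$-style manipulation of preimages; I do not anticipate a substantive obstacle.
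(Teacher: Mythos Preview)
Your proposal is correct and is precisely the approach the paper takes: the paper's proof consists of the single sentence that the claim follows directly from the product characterization~\eqref{eq:altcharac}, and you have simply written out that verification in full. Your case split on whether $s_1$ and $s_2$ agree on the shared ranks is the natural unpacking of that one-line argument.
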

\begin{IEEEproof}
	We write $s_1 = (p_{i_1},\ldots,p_{i_{k_1}})$ and $s_2 = (q_{j_1},\ldots,q_{j_{k_2}})$.
	Consider the case where $T_1\cap T_2 = \emptyset$.
	Then it is always the case that $\left \vert s_1\cup s_2\right \vert = \left \vert T_1\cup T_2\right \vert$ since the levels are disjoint.
	Using the alternative characterization of pinned sets given in \eqref{eq:altcharac} the conclusion then straightforwardly follows.
	Consider the case where $T_1\cap T_2 \neq \emptyset$.
	Either there exist a rank $j\in T_1\cap T_2$ such that $p_j \neq q_j$ then given $f_1\in P_{T_1}(s_1)$ and $f_2\in P_{T_2}(s_2)$ one has $\Pi_j(f_1) = p_j \neq q_j = \Pi_j(f_2)$ hence $f_1\neq f_2$ and so $P_{T_1}(s_1)\cap P_{T_2}(s_2) = \emptyset$.
	In this case it also holds that $\vert s_1\cup s_2\vert > \vert T_1 \cup T_2\vert$ since for every shared rank in $j\in T_1\cap T_2$ for which $p_j\neq q_j$ the difference $\vert s_1\cup s_2\vert - \vert T_1 \cup T_2\vert$ is increased by one.
	In the other case for all $j\in T_1\cap T_2$ we have $p_j = q_j$.
	Hence it holds that $\vert s_1\cup s_2\vert = \vert T_1 \cup T_2\vert$ and $P_{T_1}(s_1)\cap P_{T_2}(s_2) = P_{T_1\cup T_2}(s_1\cup s_2)$ using \eqref{eq:altcharac} again.
\end{IEEEproof}
\begin{prop}[Pinned set decomposition]
	\label{prop:pinsetdecomp}
	Given $F$, a set of flags, let $s$ be a type $T$ collection of pins and let $T^\prime$ be a type containing $T$, i.e. $T^\prime\supset T$.
	The pinned set $P_T(s)$ is partitioned into some number, say $m$, of pinned sets, each characterized by a type $T^\prime$ collections of pins containing $s$, i.e. $s^\prime_j \supset s$,
	\[P_T(s) = \bigsqcup_{j=1}^mP_{T^\prime}(s^\prime_j).\]
\end{prop}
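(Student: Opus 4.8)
The plan is to exhibit the partition explicitly by stratifying $P_t(s)$ according to the values the flags take on the extra ranks $t'\setminus t$. Write $t'\setminus t = \{i_1,\ldots,i_r\}$, so $t' = t \sqcup \{i_1,\ldots,i_r\}$. For each flag $f\in P_t(s)$ consider its further projection $\Pi_{t'\setminus t}(f)\in C_{i_1}\times\cdots\times C_{i_r}$; call its value a ``label'' $\ell(f)$. This partitions $P_t(s)$ into the fibers of the map $\ell$: two flags land in the same block iff they agree on all ranks in $t'\setminus t$. Let $\{\sigma_1,\ldots,\sigma_m\}$ be the distinct labels actually attained (so $m$ is finite since all $C_j$ are finite, and $m\ge 1$ unless $P_t(s)=\emptyset$, the degenerate case handled by reading the empty union as the empty set). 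Setting $s'_j = s \cup \sigma_j$, this is a collection of pins of type $t'$ containing $s$, and by construction $\ell^{-1}(\sigma_j)$ is exactly the set of flags that project to $s$ on $t$ and to $\sigma_j$ on $t'\setminus t$.

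The key step is then to identify $\ell^{-1}(\sigma_j)$ with $P_{t'}(s'_j)$, which is immediate from the alternative characterization in Eq.~\eqref{eq:altcharac}: a flag lies in $P_{t'}(s'_j)$ iff its entry in each rank of $t$ matches $s$ and its entry in each rank of $t'\setminus t$ matches $\sigma_j$, i.e. iff it lies in $P_t(s)$ and has label $\sigma_j$. Alternatively one can invoke Proposition~\ref{prop:interpinset} directly: $P_{t'}(s'_j) = P_t(s)\cap P_{t'\setminus t}(\sigma_j)$, since $|s'_j| = |s| + r = |t| + |t'\setminus t| = |t'|$ so the non-empty case of that proposition applies, and these intersections over the attained labels $\sigma_j$ manifestly cover $P_t(s)$ and are pairwise disjoint.

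Finally I would assemble the pieces: the blocks $P_{t'}(s'_j)$ for $j=1,\ldots,m$ are pairwise disjoint (distinct labels cannot be attained by the same flag) and their union is all of $P_t(s)$ (every flag in $P_t(s)$ has exactly one label, which is one of the $\sigma_j$), so $P_t(s) = \bigsqcup_{j=1}^m P_{t'}(s'_j)$ as claimed. I do not anticipate a genuine obstacle here; the only things to be careful about are the bookkeeping with the $|s'_j|=|t'|$ condition needed to apply Proposition~\ref{prop:interpinset}, and making explicit that we only range over labels actually realized by some flag (so that no block is empty and $m$ is well-defined), together with the convention that an empty union denotes $\emptyset$ to cover the case $P_t(s)=\emptyset$.
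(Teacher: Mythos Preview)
Your proposal is correct and is essentially the same argument as the paper's: the paper defines $S=\Pi_{t'}(P_t(s))$ and partitions $P_t(s)$ as $\bigsqcup_{s'\in S}P_{t'}(s')$, which is exactly your labeling by $\Pi_{t'\setminus t}$ followed by adjoining $s$ (since flags in $P_t(s)$ already project to $s$ on $t$, the two projections carry the same information). The paper likewise uses containment $P_{t'}(s')\subset P_t(s)$ for one inclusion, the definition of $S$ for the other, and Proposition~\ref{prop:interpinset} for disjointness; your version is just a bit more explicit about the bookkeeping and the degenerate empty case.
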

\begin{IEEEproof}
	Let $s$ be a type $T$ collection of pins and let $T^\prime$ be a type such that $T^\prime\supset T$.
	Define the following set of collections of pins
	\begin{equation}
	S = \Pi_{T^\prime}\left (P_T(s)\right ),
	\end{equation}
	then it is the case that
	\begin{equation}
	P_T(s) = \bigsqcup_{s^\prime\in S} P_{T^\prime}(s^\prime).
	\end{equation}
	Indeed, since $T^\prime\supset T$ and $\forall s^\prime\in S,\,s^\prime\supset s$ we have that $\forall s^\prime\in S,\,P_{T^\prime}(s^\prime)\subset P_T(s)$ showing the right to left inclusion.
	The left to right inclusion follows from the definition of $S$.
	Finally the fact that the union is disjoint follows from Prop.~\ref{prop:interpinset}.
\end{IEEEproof}

\subsection{Definition of a $(x,z)$-pin code}

Equipped with the notions layed out in the previous section, we now construct quantum codes.
They are defined by a choice of flags $F$ and two natural integers $x$ and $z$ which fulfill the condition $x+z \leq D$.
The flags are identified with qubits, so that $n = \vert F\vert$. 
The $X$-stabilizers generators are defined by all the $x$-pinned sets and the $Z$-stabilizers generators by the $z$-pinned sets in the following way.
For each $x$-pinned set we define the Pauli operator acting as the Pauli-$X$ operator on the qubits corresponding to the flags in the pinned set and as the identity on the rest; we then add it to the generating set of the $X$-stabilizers.
We do similarly for the $z$-pinned sets to form the generating set of the $Z$-stabilizers.
In order to ensure the correct commutation relations between the $X$- and $Z$-stabilizers it is sufficient to enforce the following condition on the relation $F$.
\begin{defi}[Pin code relation]
	A $(D+1)$-ary relation, $F$, is a \emph{pin code relation} if all $D$-pinned sets have even cardinality.\label{def:pincoderel}
\end{defi}
This property is sufficient for the overlap between two pinned sets to be even every time they are pinned by a small enough number of pins.
This is summarized in the following proposition.
\begin{prop}[Even overlap]\label{prop:evenoverlap}
	Let $F$ be a $(D+1)$-ary pin code relation as per Definition~\ref{def:pincoderel}.
	Let $x$ and $z$ be two natural integers such that, \[x+z\leq D.\]
	Then, the intersection between any $x$-pinned set and any $z$-pinned set has even cardinality.
\end{prop}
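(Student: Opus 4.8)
The plan is to reduce the claim about an $x$-pinned set and a $z$-pinned set to the defining property of a pin code relation, namely that all $D$-pinned sets have even cardinality. First I would apply Proposition~\ref{prop:interpinset} to the intersection $P_{t_1}(s_1)\cap P_{t_2}(s_2)$, where $\vert t_1\vert = x$ and $\vert t_2\vert = z$. If the intersection is empty, its cardinality is $0$, which is even, and we are done. Otherwise it equals a pinned set $P_{t}(s)$ with $t = t_1\cup t_2$ and $s = s_1\cup s_2$, and the key point is that $\vert t\vert = \vert t_1\cup t_2\vert \leq \vert t_1\vert + \vert t_2\vert = x+z \leq D$. So it suffices to show that every $k$-pinned set with $k\leq D$ has even cardinality.

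The main step is then an induction on $D-k$, going from $k=D$ down to smaller $k$, using Proposition~\ref{prop:pinsetdecomp}. The base case $k=D$ is exactly Definition~\ref{def:pincoderel}. For the inductive step, suppose every $(k+1)$-pinned set has even cardinality, and let $P_t(s)$ be a $k$-pinned set. Choose any rank $r\notin t$ (possible since $k < D+1$, in fact $k\leq D$ so there is at least one rank not in $t$) and set $t^\prime = t\cup\{r\}$. Proposition~\ref{prop:pinsetdecomp} gives a partition
\[
P_t(s) = \bigsqcup_{j=1}^m P_{t^\prime}(s^\prime_j),
\]
where each $s^\prime_j$ is a collection of type $t^\prime$ of size $k+1$, so each $P_{t^\prime}(s^\prime_j)$ is a $(k+1)$-pinned set and hence has even cardinality by the inductive hypothesis. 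Therefore $\vert P_t(s)\vert = \sum_{j=1}^m \vert P_{t^\prime}(s^\prime_j)\vert$ is a sum of even numbers, hence even. This completes the induction and shows every $k$-pinned set with $k\leq D$ has even cardinality.

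Combining the two parts finishes the proof: any $x$-pinned set intersected with any $z$-pinned set is either empty or a $k$-pinned set with $k\leq x+z\leq D$, and in both cases the cardinality is even. I do not expect any serious obstacle here; the only things to be a little careful about are (i) that $t_1$ and $t_2$ may overlap, so $\vert t_1\cup t_2\vert$ can be strictly smaller than $x+z$ — but the inequality still goes the right way — and (ii) making sure the induction is set up so that it is the size $k$ of the type (equivalently the number of pins) that decreases, not something else. Everything else is a direct appeal to Propositions~\ref{prop:interpinset} and~\ref{prop:pinsetdecomp} together with Definition~\ref{def:pincoderel}.
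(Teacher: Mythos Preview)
Your argument is correct and follows the same strategy as the paper: apply Proposition~\ref{prop:interpinset} to see that the intersection is either empty or a pinned set with at most $D$ pins, then use Proposition~\ref{prop:pinsetdecomp} together with Definition~\ref{def:pincoderel} to conclude even cardinality. The only cosmetic difference is that the paper invokes Proposition~\ref{prop:pinsetdecomp} once, taking $t^\prime$ of size $D$ directly, whereas you unwind the same step as an induction adding one rank at a time; both are valid uses of the proposition.
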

\begin{IEEEproof}
	By Proposition~\ref{prop:interpinset}, the intersection of a $x$-pinned set and a $z$-pinned set is either empty or a pinned set with at most $x+z\leq D$ pins.
	In turn by Proposition~\ref{prop:pinsetdecomp}, the intersection can be partitionned into $D$-pinned sets which all have even cardinality since the relation $F$ is a pin code relation.
\end{IEEEproof}
Using a pin code relation we can therefore define a CSS code as follows.
\begin{defi}[$(x,z)$-pin code]
	Given a pin code relation, $F$, on $(D+1)$ sets and two natural integers $(x,z)\in\mathbb{N}^2$, such that $x+z\leq D$, we define the corresponding $(x,z)$-pin code by associating the elements of $F$ with qubits, all the $x$-pinned sets with $X$-stabilizer generators and all the $z$-pinned sets with $Z$-stabilizer generators.
	By Proposition~\ref{prop:evenoverlap} the defined code is a valid CSS code.
\end{defi}
A first remark is that the choice of $x=0$ (or~$z=0$) is not particularly interesting since in this case there is a single $X$-stabilizer (or $Z$-stabilizer) acting on all the qubits.
A second remark is that in the strict case, where $x+z < D$, the code will contain many low weight logical operators which are naturally identified as gauge operators. 
This is explained in details in Sec.~\ref{sec:gaugePC}.

Table~\ref{tab:summarypincodes} summaries the objects defined and their role to define a quantum code.
The main task is to construct a pin code relation which permits to define the rest.
Before explaining how to construct pin code relations we show that Reed-Muller codes can be generated by pinned-sets of a specific relation, and that quantum color codes are a subclass of pin codes.

\begin{table}[h]
	\centering
	\caption{Summary of the objects defining quantum pin codes.}
	\label{tab:summarypincodes}
	\begin{tabular}{l c | c}
		\multicolumn{2}{c}{Objects} & Quantum code\\
		\hline\\[-.5em]
		Level: & $L_j$ & $-$\\
		Pin: & $p\in L_j$ & $-$\\
		Type: & $T\subset\{0,\ldots,D\}$ & $-$\\
		Collection: & $s\in L_{j_1}\times\cdots\times L_{j_k}$ & $-$\\
		Relation: & $F\subset L_0\times\cdots\times L_D$ & All qubits\\
		Flag: & $f\in F$ & Qubit\\
		$x$-pinned set: &$P_T(s)\subset F\quad (\vert T\vert = x)$ & $X$-stabilizer\\
		$z$-pinned set: &$P_T(s)\subset F\quad (\vert T\vert = z)$ & $Z$-stabilizer\\
	\end{tabular}
\end{table}

\subsection{Relation to Reed-Muller codes}
\label{sub:reedmuller}

Given integers $r$ and $m$, the classical Reed-Muller code, $\mathcal{RM}(r,m)$, is defined as the space generated by the vectors of all evaluations of polynomials over $m$ variables of degree at most $r$:
\begin{eqnarray}
	\mathcal{RM}(r,m) = \Big\{\left (p(x)\right )_{x\in \mathbb{F}_2^m}:\; &p\in\mathbb{F}_2[X_0,\ldots,X_{m-1}],\nonumber\\
	&{\rm deg}(p)\leq r\Big\}.
\end{eqnarray}

We now show that $\mathcal{RM}(r,m)$ is generated by the pinned sets with $r$ pins of a certain pin code relation $F$.
Consider $m$ levels, each containing two pins, 
\begin{equation}
	\forall j\in\{0,\ldots,m-1\},\;L_j = \{0,1\},\label{eq:RMlevels}
\end{equation}
and the complete $m$-ary relation,
\begin{equation}
F_{\mathcal{RM}}^m = L_0\times\cdots\times L_{m-1},\label{eq:RMrelation}
\end{equation}
see also the left of Figure~\ref{fig:EiffelTCC}.
Consider now a $r$-pinned set defined, with type $T=\{j_1, \ldots, j_r\}$, and pins $b=(b_{j_1},\ldots,b_{j_r})$.
One can check that a flag, $f\in F=\mathbb{F}_2^m$, belongs to $P_T(b)$ if and only if the following degree $r$ polynomial, $p_{T,b}$, evaluates to $1$,
\begin{equation}
	p_{T,b} = \prod_{\substack{j\in T\\b_j=1}}X_j\prod_{\substack{k\in T\\b_k=0}}\left (1-X_k\right ).
\end{equation}
So the pinned sets with $r$ pins generate the following code,
\begin{eqnarray}
	\mathcal{P}(r,m) = 
	\Big\langle\left (p_{T,b}(x)\right )_{x\in\mathbb{F}_2^m}:&\;T\subset\{0,\ldots,m-1\},\nonumber\\
	&\vert T\vert = r,\, b\in\mathbb{F}_2^m\Big\rangle.
\end{eqnarray}
Then we just have to check that these generate all polynomial of degree at most $r$.
By definition they generate polynomials constituted of a product of $r$ elements being either $X_j$ or $(1-X_j)$.
Let's suppose, for some $\ell\leq r$, they can generate all such product with only $\ell$ terms.
Then we can contruct all product with only $\ell-1$ terms, $q$, as follows
\begin{equation}
	q = q\cdot(1-X_j) + q\cdot X_j,
\end{equation}
where $X_j$ is a variable that does not appear in $q$.
It follows by induction that they generate all degree at most $r$ polynomials and so
\begin{equation}
	\mathcal{RM}(r,m) = \mathcal{P}(r,m).
\end{equation}

Another way to see this is to use the decomposition property of pinned sets (Proposition~\ref{prop:pinsetdecomp}) and generate the lower degree monomial directly with pinned sets with less pins and decompose these pinned sets into disjoint union of $r$-pinned sets.\\

Classical Reed-Muller codes have been used in several different ways to define quantum codes. 
Some of their useful properties include that the dual of $\mathcal{RM}(r,m)$ is $\mathcal{RM}(m-1-r,m)$ and that if $r\leq s$ then $\mathcal{RM}(r,m)\subseteq\mathcal{RM}(s,m)$.
Some constructions use directly Reed-Muller codes to define the $X$ and $Z$ stabilizers \cite{steaneQuantumReedMullerCodes1999, zhangQuantumReedMullerCodes1997, rengaswamy_optimality_2020}.
In \cite{steaneQuantumReedMullerCodes1999} the $X$-stabilizers are given by $\mathcal{RM}(r,m)$ and the $Z$-stabilizers are given by $\mathcal{RM}(r-1,m)$. 
In addition some of the generators for the $X$-stabilizers can be modified to increase the distance of the code by half while loosing the CSS property.
Before this modification this construction can be directly interpreted as the quantum pin code defined by the relation $F_{\mathcal{RM}}^m$ from \eqref{eq:RMrelation} and by setting $x=r$ and $z=r-1$ with $2r-1\leq m-1$ which ensures that $x+z\leq D=m-1$.
In the papers \cite{zhangQuantumReedMullerCodes1997, rengaswamy_optimality_2020}, the $X$-stabilizers are given by $\mathcal{RM}(r-1, m)$ and the $Z$-stabilizers by $\mathcal{RM}(m-r-1, m)$.
This can be directly interpreted as the quantum pin code defined by the relation $F_{\mathcal{RM}}^m$ from \eqref{eq:RMrelation} and by setting $x=r-1$ and $z=m-r-1$ (then automatically $x+z\leq D= m-1$).
 
Several other works use the shortened versions of Reed-Muller codes \cite{knillThresholdAccuracyQuantum1996, bravyi_universal_2005, sarvepalliNonbinaryQuantumReedMuller2005, bravyi_magic-state_2012, campbellMagicStateDistillationAll2012, landahlComplexInstructionSet2013, andersonFaultTolerantConversionSteane2014}.
The shortened version of the Reed-Muller code consists in removing the constant $1$ from the polynomials and throwing away the evaluation bit $p(0)$ (it is now always $0$).
The idea is that if one where to take $\mathcal{RM}(r, m)$ and $\mathcal{RM}(m-r-1, m)$ as $X$ and $Z$ stabilizers respectively, one would get exactly $0$ logical qubits since these two codes are dual of one another.
Taking the shortened versions one gets a single logical qubit.
This can be interpreted as a quantum pin code defined by the relation $F_{\mathcal{RM}}^m$ from \eqref{eq:RMrelation}, by setting $x$ and $z$ such that $x+z=D=m-1$, and declaring as \emph{free pins} the $1$ in all levels $L_j$ in \eqref{eq:RMlevels}, see Section~\ref{sec:bound}.

\subsection{Relation to quantum color codes}

The formalism and definitions for quantum pin codes can be viewed as a generalization of quantum color codes without boundaries \cite{bombin_topological_2006, bombin_exact_2007, kubica_universal_2015}.
However it is also possible to integrate the notion of boundaries into the general framework of pin codes but we delay these considerations to Sec.~\ref{sec:bound}.
A $D$-dimensional color code is defined by a homogeneous, $D$-dimensional, simplicial complex, triangulating a $D$-manifold, whose vertices are $(D+1)$-colorable.
A $D$-dimensional, simplicial complex is called homogeneous if every simplex of dimension less than $D$ is a face of some $D$-simplex.
The (Poincar\'e) dual of a simplicial complex as defined above is sometimes called a colex \cite{bombin_exact_2007}, it consists in a tessellation where the vertices are $(D+1)$-valent and the $D$-cells are $(D+1)$-colorable.
In the original simplicial complex, the qubits are identified with the $D$-simplices and one chooses two natural integers, $\left (\tilde{x},\tilde{z}\right )\in\mathbb{N}^2$ with $\tilde{x}+\tilde{z}\leq D-2$, to define the $X$- and $Z$-checks using the $\tilde{x}$- and $\tilde{z}$-simplices in the following way.
Each $X$-stabilizer generator is identified with a $\tilde{x}$-simplex which acts as Pauli-$X$ on all $D$-simplices in which it is contained.
Similarly, each $Z$-stabilizer generator operate on all the $D$-simplices as Pauli-$Z$ in which the corresponding $\tilde{z}$-simplex is contained, respectively.

This definition can be restated in the language of pin codes: the $D+1$ levels, $L_0,\ldots,L_D$, are indexed by the $D+1$ colors and each level contains the vertices of a given color.
The flags, $F$, are defined using the $D$-simplices (each containing $D+1$ vertices); this defines a relation, $F\subset L_0\times\cdots\times L_D$ thanks to the colorability condition as each $D$-simplex will not contain two vertices of the same color.
One can further check that the relation $F$ is a pin code relation as stated in Def.~\ref{def:pincoderel}.

\begin{prop}{($D$-simplices form a pin code relation).}
	Let $\mathcal{T}$ be a homogeneous $D$-dimensional simplicial complex whose vertices are $(D+1)$-colorable, triangulating a $D$-manifold.
	Given a $(D+1)$-coloration with colors numbered from $0$ to $D$, define $L_j$ as the set of vertices of color number $j$ in $\mathcal{T}$.
	The relation $F\subset L_0\times\cdots\times L_D$ representing the set of $D$-simplices is a pin code relation.
\end{prop}
\begin{IEEEproof}
	$D$-pinned sets correspond to $(D-1)$-simplices which are contained in exactly two $D$-simplices since the simplicial-complex triangulates a $D$-manifold without boundaries.
	This shows that it is a pin code relation.
\end{IEEEproof}
Then subsets of the $(D+1)$ colors correspond to types and any $k$-simplex corresponds directly to a collection of pins of type given by the $(k+1)$ different colors of the $(k+1)$ vertices of the $k$-simplex.
The corresponding $(k+1)$-pinned set contains all the $D$-simplices containing the original $k$-simplex.
As such all the non-empty $(k+1)$-pinned sets are given by all the collection of pins and type corresponding to all the $k$-simplices. 
With these consideration, we see that choosing $x=\tilde{x}+1$ and $z = \tilde{z}+1$, the corresponding $(x,z)$-pin code is the same as the original color code.

\begin{prop}{(Quantum color codes are quantum pin codes).}
	Let $\mathcal{T}$ be a homogeneous $D$-dimensional simplicial complex whose vertices are $(D+1)$-colorable, triangulating a $D$-manifold.
	For integers $x$ and $z$ such that $x+z \leq D-2$, the $(x-1, z-1)$-color code defined on $\mathcal{T}$ is equal to the $(x,z)$-pin code defined by the relation on the colored vertices given by the set of $D$-simplices.
\end{prop}
\begin{IEEEproof}
	By inspection, using Table~\ref{tab:colorcodespincodes}, one checks that the qubits and stabilizers exactly coincide in the definition of the $(x-1, z-1)$-color code and the $(x,z)$-pin code.
\end{IEEEproof}

\begin{table}[h]
	\centering
		\renewcommand{\arraystretch}{1.3}
	\caption{Correspondence between color codes and quantum pin codes.}
	\label{tab:colorcodespincodes}
	\begin{tabular}{|p{10em}  | l c|}
		\hline
		\centering Color code & \multicolumn{2}{c|}{Quantum pin code}\\
		\hline%\\[-.5em]
		\centering Set of vertices of color $\!j$ & Level: & $L_j$\\\hline
		\centering Vertex of color $j$ & Pin: & $p\in L_j$\\\hline
		\centering Set of $D$-simplices & Relation: & $F\subset L_0\times\cdots\times L_D$\\\hline
		\centering A given $D$-simplex & Flag: & $f\in F$\\\hline
		\centering A given $k$-simplex & Collection: & $s\in L_{j_0}\times\cdots\times L_{j_k}$\\\hline
		\centering All $D$-simplices containing a given $k$-simplex & $(k+1)$-pinned set: & $P_T(s)\subset F$\\\hline
	\end{tabular}
\end{table}

An example for $D=2$, based on the hexagonal color code, is shown in Figure~\ref{fig:hexex}.
To summarize: to go from color codes to pin codes one just forgets the geometry, keeping only the $(D+1)$-ary relation given by the $(D+1)$-colored $D$-simplices.

Importantly, pin codes are more general, as there are pin code relations which are not derived from these specific simplicial complexes.
In the next section after recalling a concrete color code construction we give two more general constructions of pin code relations.

\subsection{Constructing pin codes}
\begin{figure}[h]
	\centering
	\subfloat[]{\includegraphics[width=.49\linewidth]{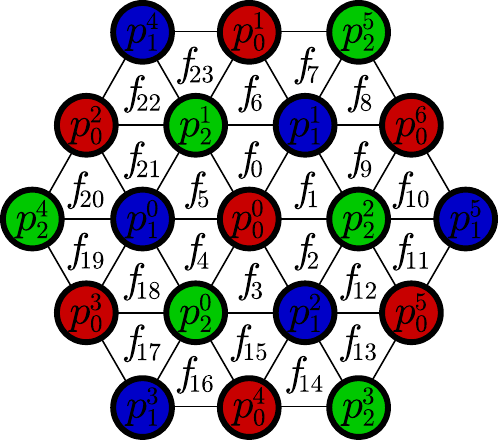}%
		\label{fig:hexex}}\hfil
	\subfloat[]{\includegraphics[width=.49\linewidth]{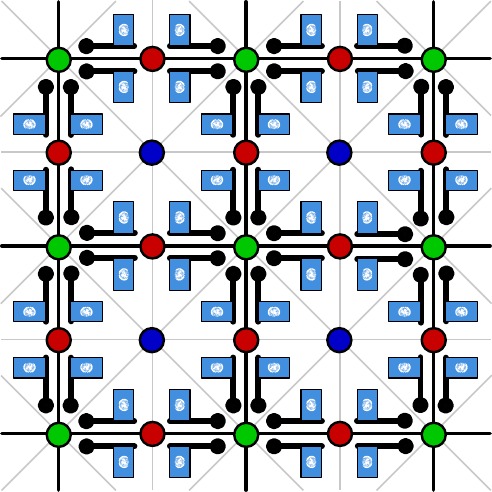}%
		\label{fig:flags}}
	\caption{(a) An example of flag set based on the triangular lattice: $F =\left  \{f_0,\ldots,f_{23}\right \}$ from $D+1=3$ levels: $L_0 = \left \{p_0^0,\ldots,p_0^3\right \}$ in red, $L_1 = \left \{p_1^0,\ldots,p_1^3\right \}$ in blue, $L_2 = \left \{p_2^0,\ldots,p_2^3\right \}$ in green.
	The figure wraps around on itself according to the arrows so that there are no boundaries, and the surface obtained is a torus.
	(b) Schematic representation of the flags of the square lattice and the corresponding pins.
	Each triple of incident vertex (green pin), edge (red pin) and face (blue pin) is a flag.
	They are symbolized in the picture as actual flags put closest to the elements in the triple (vertex, edge, face) they stand for. 
	The corresponding color code is the well known 4.8.8 color code: there are eight flags around each vertex, four around each edge and eight around each face.}
	\label{fig:hexflags}
\end{figure}

\subsubsection{Color codes from tilings}\label{sec:construction_tilings}
In \cite{bombin_exact_2007}, the authors explain how to obtain a colex from any tiled $D$-manifold.
The idea is to successively inflate the $(D-1)$-cells, $(D-2)$-cells, \ldots, $0$-cells into $D$-cells.
The dual of the tiling obtained is then a $(D+1)$-colorable triangulation of the $D$-manifold, see Appendix~A of \cite{bombin_exact_2007}.
This can also be understood directly, without inflating the cells, as follows:
Separate all the cells into $(D+1)$ sets, $L_0,\ldots,L_D$, according to their dimensions, i.e. $L_j$ contains all the $j$-cells.
We can now define a $(D+1)$-ary relation on the cells via the incidence relation.
Two cells of different dimension are incident if and only if one is a sub-cell of the other.
An element of this relation, i.e. a $(D+1)$-tuple containing a $0$-cell (a vertex), a $1$-cell (an edge), etc\ldots up to a $D$-cell, is called a flag.
See for example Fig.~\ref{fig:flags} for a representation of the flags of the square lattice.
The flag relation obtained this way is the same as the one after going through the inflating procedure and it is a pin code relation.

\begin{figure}
	\centering
	\subfloat[]{{\raisebox{4.5em}{\includegraphics[width=.11\linewidth]{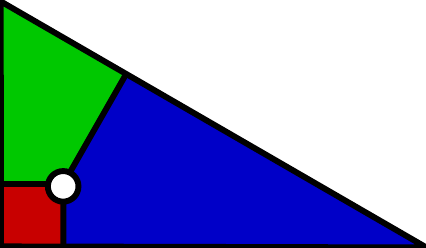}}}%
		\label{fig:wythoff_triangle}}\hfil
	\subfloat[]{\includegraphics[width=.42\linewidth]{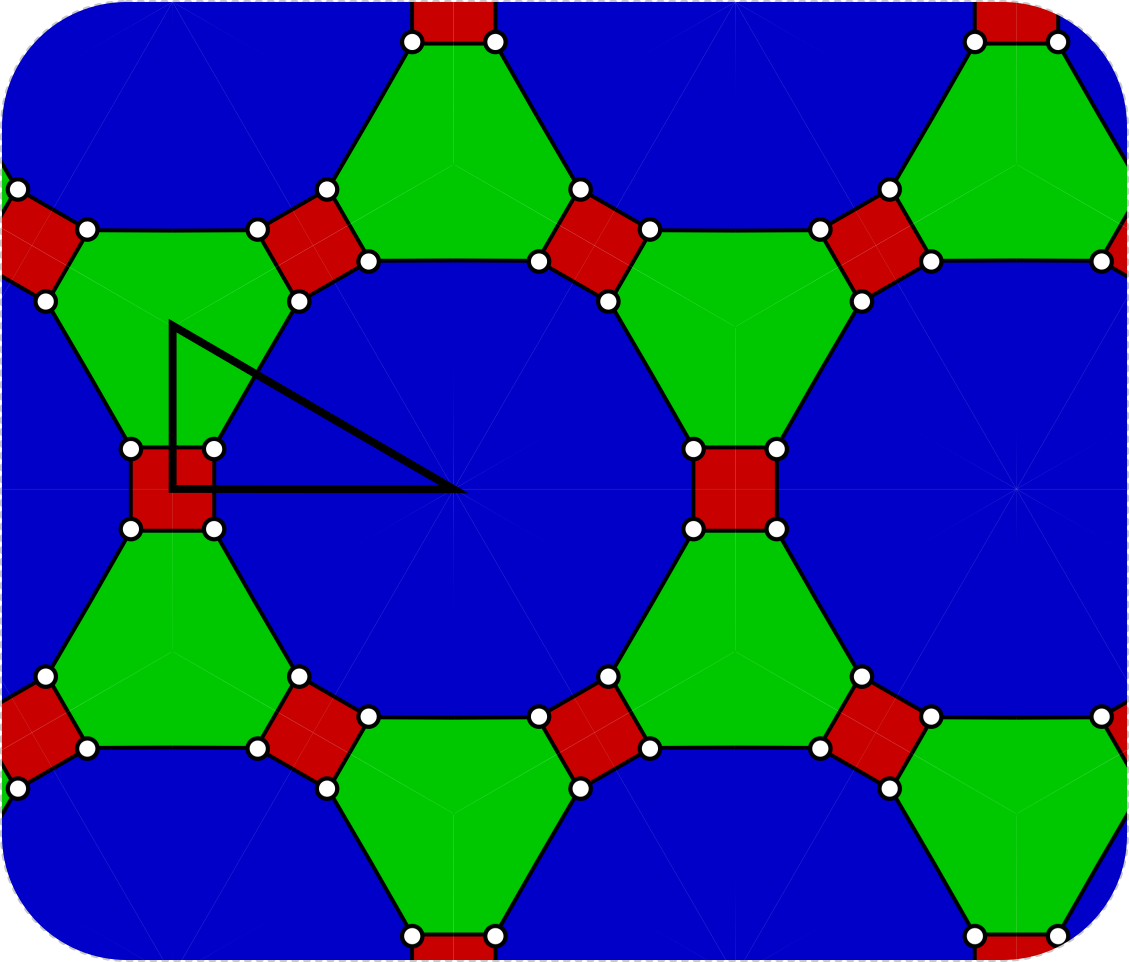}%
		\label{fig:wythoff_lattice}}\hfil
	\subfloat[]{\includegraphics[width=.22\linewidth]{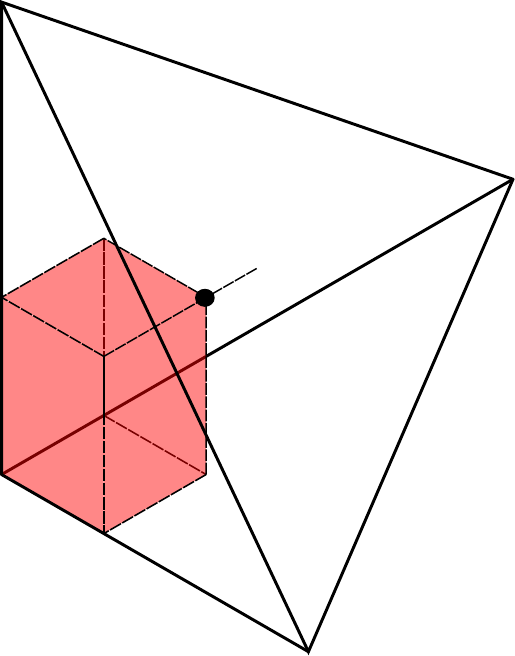}%
		\label{fig:wythoff3D}}
	\caption{(a) The fundamental triangle of the Wythoff construction.  A vertex (white cirlce) is placed in the middle and three edges are drawn to the boundary.  The three regions are colored red, green and blue.
		(b) Reflecting the fundamental triangle along its sides creates a three colored lattice. In this case the 4.6.12 lattice. Note that we can obtain the red/blue/green shrunk lattice by moving the vertex into the red/blue/green corner, see Sec.~\ref{sec:unfold}.
		(c). The Wythoff construction can be generalized to higher dimensions as well. This example shows the vertex put into the middle of a 3D simplex, a tetrahedron, and one of the four corner cells colored in red (see main text for more information).}
	\label{fig:wythoff_construction}
\end{figure}

A similar way to construct $(D+1)$-colorable tessellations in~$D$ dimensions directly is to use the Wythoff construction~\cite{coxeter1973regular}.
The construction is quite general, but for simplicity, let us start on the 2D euclidean plane.
Consider a right-angled triangle and draw a point into its interior. From this point we draw three lines, each intersecting a boundary edge in a right angle (see Fig.~\ref{fig:wythoff_triangle}).
This creates three regions in the triangle which we assign three different colors.
We can now reflect the triangle along its boundary edges.
The internal points of the original and the reflected triangles become the vertices of a uniform tiling. The faces of the tiling are colored by the three colors and by construction no two faces of the same color are adjacent (see Fig.~\ref{fig:wythoff_lattice}).
If the angles of the triangle are $2\pi/r$, $2\pi/s$ and $2\pi/l$ then the result will be a $r$.$s$.$l$-tiling, meaning that the three faces around a vertex will have $r$, $s$ and $l$ number of sides.

This idea readily generalizes to higher dimensions by placing a vertex into a $D$-dimensional simplex and drawing lines to the mid-point of the $D-1$-dimensional faces of the simplex (see Fig.~\ref{fig:wythoff3D} for the case $D=3$).
The faces of the simplex are simplices themselves, so this process can be iterated until $D=2$.
Reflecting along the faces of the $D$-simplex gives rise to a uniform tiling of the $D$-dimensional space with $D$-cells being colored by $D+1$ colors and no two cells of the same color sharing a $D-1$-dimensional face.
The number of vertices of the $D$-cells is then determined by the orbit of the reflections along all but one of the sides.

The color codes from regular tilings can be obtained this way, for example, in 2D Euclidean space, the hexagonal, 4.8.8. or 4.6.12. color codes or more generally both Euclidean and hyperbolic tilings in any dimension.

In order to obtain a finite code from the infinite tessellation we can either introduce boundaries or by introducing periodic boundary conditions. The details will depend on the individual lattice.

\subsubsection{Coxeter group approach}

The Wythoff construction of the previous section can be generalized in the language of group theory.
The reflections used to tile the space form groups called \emph{Coxeter groups}.
The fundamental triangles can be seen as group elements and reflections about their sides are generators of the group. 
In this section we will explain the more general approach based on these groups or more generally on finite groups which are generated by elements with even order.
A Coxeter group is a finitely presented group with reflections as generators, denoted as
\begin{equation}
G = \langle a_0,\ldots,a_D\vert a_0^2 = \cdots = a_D^2 = \left (a_ia_j\right )^{k_{ij}} = r_k = \cdots = 1\rangle,
\end{equation}
where the $k_{ij}$ are integers defining relations between the generators, $r_k$ are additional relations between generators and~$1$ is the trivial element~\cite{coxeter_regular_1973, davis_geometry_2007}.
Define the subgroups, $H_j$ for \mbox{$j\in\{0,\ldots,D\}$}, as
\begin{equation}
H_j = \big\langle\{a_0,\ldots,a_D\}\setminus\{a_j\}\big\rangle.
\end{equation}
Define the levels, $L_j$, as the sets of left cosets for each $H_j$, i.e.
\begin{equation}
L_j = \left \{gH_j \mid g\in G\right \}.
\end{equation}
The cosets of a subgroup always form a partition of the full group.
So for every $j\in\{0,\ldots,D\}$, a group element $g\in G$ uniquely defines a coset $p_j\in L_j$ such that $g\in p_j$.
Hence, each group element defines a $(D+1)$-tuple of cosets, $(p_0,\ldots,p_D)\in L_0\times\cdots\times L_D$.
Taking the set of all such tuples defines a $(D+1)$-ary relation on the cosets,  $F\subset L_0\times\cdots\times L_D$, which is a pin code relation.
The fact that this $F$ is a pin code relation can be verified by the following argument.
A $k$-pinned set here correspond to the intersection of $k$ different cosets with respect to $k$ different subgroups $H_j$.
It always holds that the intersection of several cosets is either empty or is a coset with respect to the intersection of the subgroups of the original cosets.
Hence non-empty $k$-pinned sets are cosets with respect to a subgroup, $H_{j_1,\ldots,j_k}$,
\begin{equation}
H_{j_1,\ldots,j_k} = \bigcap_{i=1}^kH_{j_i}.
\end{equation}
Each subgroup $H_{j_i}$ is generated by all generators of $G$ except one, this means that $H_{j_1,\ldots,j_k}$ contains at least a subgroup generated by $D-k+1$ of the generators,
\begin{equation}
H_{j_1,\ldots,j_k} \supseteq \big\langle\{a_0,\ldots,a_D\}\setminus\{a_{j_1},\ldots,a_{j_k}\}\big\rangle.\label{eq:subgroupinter}
\end{equation}
In particular, the $D$-pinned sets are cosets with respect to a subgroup that contains $\langle a_j\rangle$ for some $j$.
We note that $\langle a_j\rangle$ has order two, since $a_j$ is a reflection, and thus all $D$-pinned sets have even order.
In well behaved cases the containment in \eqref{eq:subgroupinter} will actually be an equality, but this is not guaranteed due to relations between generators that may exist.

In the case where the Coxeter group describes the symmetries of a tiling this construction is equivalent to the Wythoff construction described above.
But one also obtains more general pin codes when considering Coxeter groups not defining tilings or more general finite groups with generators of even order.

Note that it is generally not possible to directly obtain the properties of the quantum pin code derived from a Coxeter group from the group's representation and the choice of subgroups~$H_i$.

\begin{table}[h]
	\centering
		\renewcommand{\arraystretch}{1.3}
	\caption{Correspondence between Coxeter groups and quantum pin codes.}
	\label{tab:coxetergrouppincodes}
	\begin{tabular}{|p{10em}  | l c|}
		\hline
		\centering Coxeter group & \multicolumn{2}{c|}{Quantum pin code}\\\hline
%		\hline\\[-.5em]
		\centering All left cosets of $H_j= \big\langle\{a_0,\ldots,a_D\}\setminus\{a_j\}\big\rangle$ & Level: & $L_j$\\\hline
		\centering A given left coset $gH_j$ & Pin: & $p\in L_j$\\\hline
		\centering All group elements $G$ & Relation: & $F\subset L_0\times\cdots\times L_D$\\\hline
		\centering A group element $g\in G$ & Flag: & $f\in F$\\\hline
		\centering $k+1$ left cosets $(g_0H_{j_0},\ldots,g_kH_{j_k})$ & Collection: & $s\in L_{j_0}\times\cdots\times L_{j_k}$\\\hline
		\centering Elements of a left coset $gH_{j_0,\ldots,j_k}$ & $(k+1)$-pinned set: & $P_T(s)\subset F$\\\hline
	\end{tabular}
\end{table}

In Sec.~\ref{sec:hyperbolic} we explore in more details the construction of pin codes from 3D hyperbolic Coxeter groups and give some explicit examples.\\

\subsubsection{Chain complex approach}
\label{sec:chaincomplexeapproach}

An other way of obtaining a pin code relation is from $\mathbb{F}_2$ chain complexes of length $D+1$.
For our purposes, these algebraic objects are composed of $(D+1)$ finite-dimensional vector spaces over $\mathbb{F}_2$, say $\mathcal{C}_0,\ldots,\mathcal{C}_D$, together with $D$ linear maps called boundary maps, $\partial_j:\mathcal{C}_j\rightarrow\mathcal{C}_{j-1}$, which are such that
\begin{equation}
\forall j\in\{0,\ldots,D-1\},\,\partial_j\circ\partial_{j+1} = 0.\label{eq:boundarymap}
\end{equation}
For example the tiling of a $D$-manifold can be seen as a chain complex, taking the $j$-cells as a basis for the $\mathcal{C}_j$ vector space and the natural boundary map.
We have shown how to get a pin code relation from such a tiling by taking its flags, but it can as well be obtained from any $\mathbb{F}_2$ chain complex. For more background see~\cite{rotman2008introduction}.

The construction works as follows: choose a basis set $L_j$ for each vector space $\mathcal{C}_j$.
The $L_j$ basis sets are the levels and the basis elements the pins.
Then use the boundary map,~$\partial$, to define binary relations, $R_{j,j+1}\subset L_{j}\times L_{j+1}$, where $(p_{j},p_{j+1})\in R_{j, j+1}$ if $p_j$ appears in the decomposition of  $\partial\left (p_{j+1}\right )$ over the basis set $L_{j}$.
Then the relation $F\subset L_0\times\cdots\times L_D$ is defined as follows
\begin{equation}
F = \left \{(p_0,\ldots,p_D)\mid \forall j,\;\left (p_j,p_{j+1}\right )\in R_{j,j+1}\right \}.
\end{equation}
The relation $F$ obtained like this is almost a pin code relation.
All the pinned sets of type $t = \{0,\ldots,D\}\setminus \{j\}$ with $0<j<D$ have even cardinality since their size is given by the number of paths between the pin $p_{j+1}$ and the pin $p_{j-1}$ which has to be even by the property of the boundary map $\partial$ given in \eqref{eq:boundarymap}.
For pinned sets of type $T = \{1,\ldots,D\}$ or $T = \{0,\ldots,D-1\}$ it is not generally the case that they have even cardinality.
Although this can be easily fixed by adding at most two pins:
the idea is then to add one rank-$0$ pin, $b_0$, in the level $L_0$ and add all pairs $(b_0,c^\star)$ such that
\begin{equation}
\left \vert\{p_0 \mid (p_0,c^\star)\in R_{0,1}\}\right \vert = 1\pmod2,
\end{equation}
to the new relation $R_{0,1}$.
Then do the same for the level $D$, adding $b_D$ in $L_D$.
After this modification the resulting flag relation $F$ is a pin code relation.

\begin{table}[h]
	\centering
		\renewcommand{\arraystretch}{1.3}
	\caption{Correspondence between chain complexes and quantum pin codes.}
	\label{tab:chaincomplexpincodes}
	\begin{tabular}{|p{10em}  | l c|}
			\hline
		\centering Chain Complex & \multicolumn{2}{c|}{Quantum pin code}\\
		\hline%\\[-.5em]
		\centering Basis set of $\mathcal{C}_j$& Level: & $L_j$\\\hline
		\centering A basis element $p_j\in\mathcal{C}_j$ & Pin: & $p\in L_j$\\\hline
		\centering All basis elements $(p_0,\ldots,p_D)$ such that $p_{j}\in\partial p_{j+1}$ & Relation: & $F\subset L_0\times\cdots\times L_D$\\\hline
		\centering A tuple $(p_0,\ldots,p_D)$ such that $p_{j}\in\partial p_{j+1}$ & Flag: & $f\in F$\\\hline
		\centering $k+1$ basis elements from different levels $\left (p_{j_0}, \ldots, p_{j_k}\right )$ & Collection: & $s\in L_{j_0}\times\cdots\times L_{j_k}$\\\hline
		\centering All tuples $\left (q_0,\ldots,q_D\right )\in F$ such that $\left (q_{j_0},\ldots,q_{j_k}\right ) = \left (p_{j_0},\ldots,p_{j_k}\right )$& $(k+1)$-pinned set: & $P_T(s)\subset F$\\\hline
	\end{tabular}
\end{table}

Note that this way of obtaining a quantum code from any $\mathbb{F}_2$ chain complex is fundamentally different from the usual homological code construction.
In the homological code construction one chooses one of the levels, say $L_j$, and identifies its elements with qubits.
Then the $Z$-stabilizer generators are given by the boundary of the elements in $L_{j+1}$ and the $X$-stabilizer generators by the coboundary of the elements in~$L_{j-1}$.
These are different from the flags and pinned sets used to define a pin code.

In Sec.~\ref{sec:hp} we give some explicit pin codes constructed from chain complexes.

\subsection{Remarks}
\label{sec:splitable}
While some flag relations $F$ obtained from Coxeter groups can be equivalently viewed as coming from some $\mathbb{F}_2$ chain complex, the converse does not necessarily hold.
Indeed not every multi-ary relation can be decomposed into a sequence of binary relations, the hexagonal lattice depicted in Fig.~\ref{fig:hexex} is an example of such a relation which cannot be decomposed this way.
The other way around, not all flag relations obtained from a $\mathbb{F}_2$ chain complex can be seen as coming from a Coxeter groups as in general they would lack the regular structure required. 

Depending on the pin code relation, $F$, it can happen that some pinned sets can in fact be safely split when defining the stabilizers.
That is to say, one can separate them into several disjoint sets of flags defining each an independent stabilizer still commuting with the rest of the stabilizers.
For example this is the case for Coxeter groups for which \eqref{eq:subgroupinter} is strict, i.e.
\begin{align}
\big\langle\Set{a_{i_1},\ldots,a_{i_s}}\big\rangle\cap\big\langle\Set{a_{j_1},\ldots,a_{j_t}}\big\rangle \supsetneq\nonumber\\ \big\langle\Set{a_{i_1},\ldots,a_{i_s}}\cap\Set{a_{j_1},\ldots,a_{j_t}}\big\rangle.\label{eq:c-group}
\end{align}
In this case the cosets with respect to the first group can be further split into cosets with respect to the second one without harming the commutation relations.
Groups generated by reflections for which \eqref{eq:subgroupinter} is always an equality are called C-groups \cite{mcmullen_abstract_2002}.
If the stabilizers are still defined as whole pinned sets, in cases where they could be split, then these smaller sets of qubits would be logical operators which would be detrimental to the overall distance of the code.

\section{Transversal gates and magic state distillation}
\label{sec:multiorthogonality}

In this section we present independently of pin codes what structure is desirable for CSS codes to admit transversal phase gates of different levels of the Clifford hierarchy.
The presentation here is close in spirit to that of \cite{campbell_unified_2017, haah_codes_2018}.
It is included here to fix terminology and to be self-contained.

Given $\ell$ binary row vectors, $\bs{v}^1, \ldots, \bs{v}^\ell\in\mathbb{F}_2^n$, we denote their element-wise product as, $\bs{v}^1\wedge\cdots\wedge\bs{v}^\ell$, its $j^\text{th}$ entry is given by
\begin{equation}
\left [\bigwedge_{m=1}^\ell\bs{v}^m\right ]_j =\left [\bs{v}^1\wedge\bs{v}^2\wedge\cdots\wedge\bs{v}^\ell\right ]_j = v_j^1v_j^2\cdots v^p_j.
\end{equation}
The Hamming weight of a binary vector $\bs{v}$ is denoted as $\vert\bs{v}\vert$, it is given by the sum of its entries.
We also define the notions of multi-even and multi-orthogonal spaces:
\begin{defi}[Multi-even space]\label{def:multieven}
	Given an integer, $\ell\in\mathbb{N}$, a subspace $\mathcal{C}\subset\mathbb{F}_2^n$, is called $\ell$-even if all vectors in $\mathcal{C}$ have Hamming weight divisible by $2^\ell$:
	\[\forall\bs{v}\in\mathcal{C},\;\left \vert\bs{v}\right \vert = 0 \pmod{2^\ell}.\]
\end{defi}

\begin{prop}[Characterization of multi-even spaces]\label{prop:equmultieven}
	Given $\ell\in\mathbb{N}$, a subspace $\mathcal{C}\subset\mathbb{F}_2^n$ is $\ell$-even if and only if for any integer $s\in\{1,\ldots,\ell\}$ and any $s$-tuple of vectors, $(\bs{v}^1,\ldots,\bs{v}^s)\in\mathcal{C}^s$, it holds that
	\begin{equation}
	\left|\bs{v}^1\wedge\cdots\wedge\bs{v}^s\right| = 0 \pmod{2^{\ell - s +1}}.\label{eq:multieven}
\end{equation}
\end{prop}
The proof is deferred to Appendix~\ref{sec:proofmultiorthogonality}, it makes use of the following convenient lemma for converting binary addition to regular integer addition.
\begin{lem}[Binary addition and integer addition]\label{lem:idbinadd}
	Denote binary addition with $\oplus$ and integer addition with $+$ or $\sum$.
	Given $r$ binary vectors $\bs{w}^1,\ldots,\bs{w}^r\in\mathbb{F}_2^n$ it holds that
	\begin{equation}
	\bigoplus_{m=1}^r\bs{w}^m = \sum_{s=1}^r(-2)^{s-1}\sum_{1\leq m_1<\cdots<m_s\leq r}\,\bigwedge_{i=1}^s\bs{w}^{m_i}.\label{eq:idbinadd}
	\end{equation}
	Similarly for their Hamming weights
	\begin{equation}
	\left \vert\bigoplus_{m=1}^r\bs{w}^m\right \vert = \sum_{s=1}^r(-2)^{s-1}\sum_{1\leq m_1<\cdots<m_s\leq r}\,\left \vert\bigwedge_{i=1}^s\bs{w}^{m_i}\right \vert.\label{eq:idbinaddhweight}
	\end{equation}
\end{lem}
The proof of this lemma is deferred to Appendix~\ref{sec:proofmultiorthogonality} where we also give an illustrative example.
The alternative characterization given in Prop.~\ref{prop:equmultieven} for a multi-even space, as per Def.~\ref{def:multieven}, shows that some strong divisibility conditions on the overlap between vectors of the space have to be imposed.

It is possible to relax these conditions and define what is called a multi-orthogonal space.

\begin{defi}[Multi-orthogonal space]\label{def:multiorth}
	Given an integer, $\ell\in\mathbb{N}$, a subspace $\mathcal{C}\subset\mathbb{F}_2^n$, is called $\ell$-orthogonal if for any $\ell$-tuple of vectors, $\left (\bs{v}^1,\ldots,\bs{v}^\ell\right )\in\mathcal{C}^\ell$,
	\begin{equation}
	\left|\bs{v}^1\wedge\cdots\wedge\bs{v}^\ell\right| = 0 \pmod 2.
	\label{eq:multiorth}
	\end{equation}
\end{defi}

These two characterizations, Prop.~\ref{prop:equmultieven} and Def.~\ref{def:multiorth}, are expressed as conditions on any tuple of vectors from the space.
Although it is strong enough to verify the conditions only on a basis of the space, which is proven in Appendix~\ref{sec:proofmultiorthogonality}.

\begin{prop}[multi-even/orthogonal space verification on a basis]\label{prop:basisverif}
	It is necessary and sufficient to verify \eqref{eq:multieven} or \eqref{eq:multiorth} on a basis to ensure that a space is multi-even or multi-orthogonal respectively.
\end{prop}

The single-qubit phase gates are denoted as
\begin{equation}
R_\ell = \begin{pmatrix}
1 & 0\\
0 &\omega_\ell
\end{pmatrix},\qquad \omega_\ell = \e^{i\frac{2\pi}{2^\ell}}.
\end{equation}
For instance $R_1 = Z$, $R_2=S$ and $R_3 = T$ in the usual notations.

\subsection{Weighted polynomials and transversal gates}

Given $k$ qubits and some integer $\ell$, we consider quantum gates, $U_{F_\ell}$, acting diagonally on the computational basis, such that for $\bs{x}\in\mathbb{F}_2^k$,
\begin{equation}
U_{F_\ell}\ket{\bs{x}} = \omega_\ell^{{F_\ell}(\bs{x})}\ket{\bs{x}},
\end{equation}
where ${F_\ell}$ is a so-called weighted polynomial of the form
\begin{align}
{F_\ell}(\bs{x})&= \sum_{s=1}^\ell2^{s-1}\sum_{1\leq m_1<\cdots<m_s\leq \ell}\alpha_{m_1\ldots m_s}\cdot x_{m_1}\cdots x_{m_s},
\end{align}
with coefficients $\alpha_{m_1\ldots m_s}$ in $\mathbb{F}_{2^\ell}$.
Any such gate $U_{F_\ell}$ belongs to the $\ell\text{th}$ level of the Clifford hierarchy \cite{Cui_diagonal_2017}.
Examples, and generating set for $\ell=3$, are given in Table~\ref{tab:weightedpoly}.
\begin{table}[h!]
	\centering
	\renewcommand{\arraystretch}{1.3}
	\setlength{\tabcolsep}{2.5pt}
	\caption{Weighted polynomials corresponding to some well known gate, for $\ell=3$, arranged according number of qubits involved and level of the Clifford hierarchy.}
	\label{tab:weightedpoly}
	\begin{tabular}{| c | c | c | c |}
		\hline
		\!$U_{F_3}\leftrightarrow F_3(\bs{x})$ & 1st level                 & 2nd level                           & 3rd level                              \\ \hline
		1 qubit                              & \!$Z\otimes\id\otimes\id \leftrightarrow  4x_1\!$ & $S\otimes\id\otimes\id \leftrightarrow 2x_1$            & $T\otimes\id\otimes\id \leftrightarrow x_1$                \\
		2 qubits                             & -                         & ${\rm C}Z\otimes\id \leftrightarrow  4x_1x_2$ & ${\rm C}S\otimes\id \leftrightarrow 2x_1x_2$     \\
		3 qubits                             & -                          & -                                    & ${\rm CC}Z \leftrightarrow 4x_1x_2x_3$ \\ \hline
	\end{tabular}
\end{table}

The goal is to implement such a gate on the logical level of a quantum error correcting code by the transversal application of some phase gates.
A transversal application of the single-qubit phase gate $R_\ell$ means that each qubit is acted on with $R_\ell$, which amounts to applying the tensor product $R_\ell^{\otimes n} = R_\ell\otimes\cdots\otimes R_\ell$ on the full system.
Given an $\llbracket n, k, d\rrbracket $ quantum CSS code, define $G$ as the $r\times n$ matrix whose rows describe a generating set of the $X$-stabilizers of the code, and define $L$ as the $k\times n$ matrix whose rows describe a basis for the $X$-logical operators.
The code state in this basis corresponding to $\bs{x}\in\mathbb{F}_2^k$ can then be expressed as
\begin{equation}
\ket{\overline{\bs{x}}} =\frac{1}{\sqrt{2^{r}}} \sum_{\bs{y}\in\mathbb{F}_2^{r}}\ket{\bs{x}L\oplus \bs{y}G}.
\end{equation}

Applying transversally the gate $R_\ell$ on this code state, $\ket{\overline{\bs{x}}}$, yields
\begin{equation}
R_\ell^{\otimes n}\ket{\overline{\bs{x}}} = \frac{1}{\sqrt{2^{r}}} \sum_{\bs{y}\in\mathbb{F}_2^{r}}\omega_\ell^{\vert \bs{x}L\oplus \bs{y}G\vert}\ket{\bs{x}L\oplus \bs{y}G}.\label{eq:transvR}
\end{equation}
Using \eqref{eq:idbinadd}, the power of $\omega_\ell$ above can be rewritten in three parts as 
\begin{equation}
\vert\bs{x}L\oplus\bs{y}G\vert = F_\ell(\bs{x}) + F^\prime_\ell(\bs{y}) + F^{\prime\prime}_\ell(\bs{x},\bs{y}),
\end{equation}
where we defined
\begin{align}
F_\ell(\bs{x}) &= \left \vert\bs{x}L\right \vert\label{eq:Flx} \\
F^\prime_\ell(\bs{y}) &= \left \vert\bs{y}G\right \vert \label{eq:Fly}\\
F^{\prime\prime}_\ell(\bs{x}, \bs{y}) &= -2\left \vert\bs{x}L\wedge\bs{y}G\right \vert \label{eq:Flxy}
\end{align}

Using again \eqref{eq:idbinadd}, one can express these three parts as weighted polynomials whose coefficients are given by the different overlap between $X$-logical operator generators, between $X$-stabilizer generators or between both, see Appendix~\ref{sec:multiorth}.

Provided that it is possible to cancel the action of $F^\prime_\ell(\bs{y})$ and $F^{\prime\prime}_\ell(\bs{x},\bs{y})$ then the resulting operation would correspond to the gate $U_{F_\ell}$ on the logical qubits of the code.
The following two properties of CSS codes are designed to get rid of these two unwanted parts.
\begin{prop}[Exact transversality]\label{prop:transv}
	Let $\mathcal{C}$ be a CSS code, given an integer $\ell$, the code $\mathcal{C}$ allows for the transversal application of $R_\ell$ if the following conditions hold:
	\begin{enumerate}[label=(\roman*)]
		\item The $X$-stabilizers form an $\ell$-even space.
		\item Element-wise products of a $X$-logical operator and a $X$-stabilizer always have Hamming weight divisible by $2^{\ell-1}$.
	\end{enumerate}
	The gate performed at the logical level is then given by the weighted polynomial in \eqref{eq:Flx}.
\end{prop}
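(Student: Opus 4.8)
The plan is to start from the expansion in Eq.~\eqref{eq:transvR}, split the phase into the three pieces $F_\ell(\bs{x})$, $F^\prime_\ell(\bs{y})$, $F^{\prime\prime}_\ell(\bs{x},\bs{y})$ as in Eqs.~\eqref{eq:Flx}--\eqref{eq:Flxy}, and show that conditions (i) and (ii) kill the last two pieces modulo $2^\ell$, leaving exactly the phase $\omega_\ell^{F_\ell(\bs{x})}$ in front of each term $\ket{\bs{x}L\oplus\bs{y}G}$. Since that phase is independent of $\bs{y}$, it factors out of the sum over $\bs{y}\in\mathbb{F}_2^r$ and the remaining sum is exactly $\sqrt{2^r}\ket{\overline{\bs{x}}}$, so $R_\ell^{\otimes n}\ket{\overline{\bs{x}}} = \omega_\ell^{F_\ell(\bs{x})}\ket{\overline{\bs{x}}}$, i.e.\ the logical gate $U_{F_\ell}$ with the weighted polynomial of Eq.~\eqref{eq:Flx}.

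First I would handle $F^\prime_\ell(\bs{y}) = |\bs{y}G|$. Writing $\bs{y}G = \bigoplus_{m} y_m \bs{g}^m$ where $\bs{g}^m$ are the rows of $G$, and applying the identity Eq.~\eqref{eq:idbinadd}, $|\bs{y}G|$ becomes an integer combination of hamming weights of element-wise products $|\bs{g}^{m_1}\wedge\cdots\wedge\bs{g}^{m_s}|$ with $s\le r$; but only terms with $s\le \ell$ can survive modulo $2^\ell$ after accounting for the $(-2)^{s-1}$ prefactors, and condition (i) together with the second characterization in Def.~\ref{def:multieven} says each such product weight is divisible by $2^{\ell-s+1}$, hence each term $(-2)^{s-1}|\cdots|$ is divisible by $2^\ell$. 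Therefore $F^\prime_\ell(\bs{y}) \equiv 0 \pmod{2^\ell}$ and contributes a trivial phase. Next, for $F^{\prime\prime}_\ell(\bs{x},\bs{y}) = -2|\bs{x}L\wedge\bs{y}G|$, I expand $\bs{x}L\wedge\bs{y}G$ bilinearly over the two generating sets and use Eq.~\eqref{eq:idbinadd} again; the outermost factor of $2$ means I only need the resulting hamming weights to be divisible by $2^{\ell-1}$. Each such weight is (after the same bookkeeping) a weight of an element-wise product of one vector from the $X$-logical span and one from the $X$-stabilizer span, together with further stabilizer factors; condition (ii) gives divisibility by $2^{\ell-1}$ for the basic product, and the multi-even structure from (i) supplies the extra factors of $2$ when more stabilizer generators are wedged in. Hence $F^{\prime\prime}_\ell(\bs{x},\bs{y})\equiv 0\pmod{2^\ell}$ as well.

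The main obstacle is the second step: one must be careful that condition (ii), stated only for a single logical times a single stabilizer, actually controls all the cross terms that appear when $\bs{x}L$ and $\bs{y}G$ are each themselves binary sums of generators and then wedged together — i.e.\ terms of the form $|\bs{l}^{a_1}\wedge\cdots\wedge\bs{l}^{a_p}\wedge\bs{g}^{b_1}\wedge\cdots\wedge\bs{g}^{b_q}|$ with $p,q\ge1$. The idempotency $\bs{v}\wedge\bs{v}=\bs{v}$ lets one reduce to distinct generators, and the point is that such a product is the element-wise product of \emph{some} logical vector (namely $\bs{l}^{a_1}\wedge\cdots\wedge\bs{l}^{a_p}$, which lies coordinatewise below a genuine logical representative) with a stabilizer-type vector; combined with the multi-even tower from (i) one extracts the needed power of two. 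This reduction — showing the abstract identity Eq.~\eqref{eq:idbinadd} plus idempotency lets the two "on a generating set" hypotheses propagate to the full spaces — is the only non-routine part; the rest is bookkeeping with powers of two. I would relegate the detailed coefficient accounting to Appendix~\ref{sec:multiorth}, as the excerpt already anticipates, and in the main text give the three-line argument that (i) and (ii) annihilate $F^\prime_\ell$ and $F^{\prime\prime}_\ell$ modulo $2^\ell$ and that the surviving $\bs{y}$-independent phase reassembles $\ket{\overline{\bs{x}}}$.
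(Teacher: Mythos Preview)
Your overall strategy---show that conditions (i) and (ii) force $F^\prime_\ell(\bs{y})\equiv F^{\prime\prime}_\ell(\bs{x},\bs{y})\equiv 0\pmod{2^\ell}$ so that only the $\bs{y}$-independent phase $\omega_\ell^{F_\ell(\bs{x})}$ survives---is exactly the paper's. But you over-complicate the execution, and in doing so create a gap that is not actually present.

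The paper's argument is one line: $\bs{y}G$ \emph{is} an $X$-stabilizer (an element of the stabilizer space, not merely a generator), so condition~(i) in its first formulation gives $F^\prime_\ell(\bs{y})=|\bs{y}G|\equiv 0\pmod{2^\ell}$ directly. Likewise $\bs{x}L$ \emph{is} an $X$-logical operator and $\bs{y}G$ \emph{is} an $X$-stabilizer, so condition~(ii)---which is stated for arbitrary logicals and stabilizers, not for generators---gives $|\bs{x}L\wedge\bs{y}G|\equiv 0\pmod{2^{\ell-1}}$ immediately, whence $F^{\prime\prime}_\ell(\bs{x},\bs{y})=-2|\bs{x}L\wedge\bs{y}G|\equiv 0\pmod{2^\ell}$. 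No expansion into rows of $L$ and $G$ is required at all.

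Your ``main obstacle'' is therefore self-inflicted: it only appears because you chose to expand into generators. Worse, your proposed resolution of it contains a genuine error. The vector $\bs{l}^{a_1}\wedge\cdots\wedge\bs{l}^{a_p}$ is in general \emph{not} an $X$-logical operator---the coordinatewise product of two logical representatives need not lie in the row span of $L$ (nor in $L$ plus stabilizers)---so you cannot invoke condition~(ii) on it. To make a generator-level argument work you would need hypotheses on mixed wedges of several logical and several stabilizer generators, which is precisely the kind of condition that appears in Prop.~\ref{prop:quasitransv}, not here. The paper sidesteps all of this by phrasing~(ii) at the level of arbitrary elements and applying it once.
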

Indeed the two conditions above exactly give
\begin{equation}
F^{\prime}_\ell(\bs{y}) = F^{\prime\prime}_\ell(\bs{x},\bs{y}) = 0 \pmod{2^\ell},
\end{equation}
which precisely enforce that the actions of $F^\prime_\ell(\bs{y})$ and $F^{\prime\prime}_\ell(\bs{x},\bs{y})$ are trivial.
We can also settle for a weaker condition under which the unwanted part is not trivial but belong to the $(\ell-1)$th level of the Clifford hierarchy.
\begin{prop}[Quasi-transversality]\label{prop:quasitransv}
	Let $\mathcal{C}$ be a $\llbracket n,k,d\rrbracket $ CSS code, with $r\times n$ generating matrix, $G$, for its $X$-stabilizers and $k\times n$ generating matrix, $L$, for its $X$-logical operators.
	Given an integer $\ell$, the code $\mathcal{C}$ allows for the transversal application of $R_\ell$ up to a $(\ell-1)$th-level Clifford correction if the following conditions hold:
	\begin{enumerate}[label=(\roman*)]
		\item The $X$-stabilizers form an $\ell$-orthogonal space.
		\item For any choice of $s\geq1$ $X$-logical operators and $t\geq1$ $X$-stabilizers with $s+t\leq \ell$
		\begin{align*}
		\left\vert\bigwedge_{i=1}^s \bs{L}^{m_i}\bigwedge_{j=1}^t\bs{G}^{n_j}\right \vert = 0 &\pmod{2}.
		\end{align*}
	\end{enumerate}
The gate performed at the logical level after correction is then given by the weighted polynomial in \eqref{eq:Flx}.
\end{prop}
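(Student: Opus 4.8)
The plan is to continue from Eq.~\eqref{eq:transvR}: after transversal $R_\ell$, each term of the code state $\ket{\overline{\bs{x}}}$ picks up the phase $\omega_\ell^{F_\ell(\bs{x})+F'_\ell(\bs{y})+F''_\ell(\bs{x},\bs{y})}$. The factor $\omega_\ell^{F_\ell(\bs{x})}$ already realizes the logical gate $U_{F_\ell}$ with $F_\ell(\bs{x})=|\bs{x}L|$ as in Eq.~\eqref{eq:Flx}, so what remains is to show that under conditions (i)--(ii) the residual phase $Q(\bs{x},\bs{y}):=F'_\ell(\bs{y})+F''_\ell(\bs{x},\bs{y})$ is, modulo $2^\ell$, equal to \emph{twice} a weighted polynomial of the level-$(\ell-1)$ shape. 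Choosing $G$ and $L$ to have full row rank (harmless), the string $\bs{z}=\bs{x}L\oplus\bs{y}G$ determines $\bs{x}$ (the logical value, modulo the $X$-stabilizer space) and then $\bs{y}$, both as $\mathbb{F}_2$-affine functions of $\bs{z}$, so $Q$ becomes a function $H$ of $\bs{z}$ alone; if $Q\equiv 2H\pmod{2^\ell}$ with $H$ of the level-$(\ell-1)$ shape, then the diagonal gate $V:\ket{\bs{z}}\mapsto\omega_{\ell-1}^{-H(\bs{z})}\ket{\bs{z}}$ lies in the $(\ell-1)$th level of the Clifford hierarchy and $V R_\ell^{\otimes n}$ agrees with $U_{F_\ell}$ on a basis of the code space, hence preserves it and implements $U_{F_\ell}$. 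This is the same pattern as the proof of Prop.~\ref{prop:transv}, except that now the residual phase need not vanish but only drop one Clifford level.

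\emph{Expanding the residual phase.} Using Eq.~\eqref{eq:idbinadd} to convert the binary sums $\bs{y}G=\bigoplus_j y_j\bs{G}^j$ and $\bs{x}L=\bigoplus_m x_m\bs{L}^m$ into integer sums (as in Appendix~\ref{sec:multiorth}), one obtains
\[F'_\ell(\bs{y})=\sum_{s\ge1}(-2)^{s-1}\!\!\sum_{n_1<\cdots<n_s}\!\!y_{n_1}\cdots y_{n_s}\,\Bigl|\,\bigwedge_{i=1}^s\bs{G}^{n_i}\,\Bigr|,\]
and, since bitwise AND distributes over $\oplus$, writing $\bs{x}L\wedge\bs{y}G=\bigoplus_{m,j}x_m y_j(\bs{L}^m\wedge\bs{G}^j)$, applying Eq.~\eqref{eq:idbinadd} to this sum over pairs, and collecting terms by the set $S$ of distinct indices $m$ and the set $T$ of distinct indices $j$ that occur (using $\bs{v}\wedge\bs{v}=\bs{v}$), the coefficient of $\bigl(\prod_{S}x\bigr)\bigl(\prod_{T}y\bigr)$ in $|\bs{x}L\wedge\bs{y}G|$ comes out as $\bigl(\sum_r(-2)^{r-1}N_r(s,t)\bigr)\,\bigl|\bigwedge_S\bs{L}\wedge\bigwedge_T\bs{G}\bigr|$, with $s=|S|$, $t=|T|$, and $N_r(s,t)$ the number of $r$-element subsets of $S\times T$ that meet every row and every column.

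\emph{The two key facts and the reduction mod $2^\ell$.} First, over $\mathbb{F}_2$ a wedge of repeated copies of a vector equals that vector, so padding to an $\ell$-tuple shows that condition (i) ($\ell$-orthogonality of the $X$-stabilizers) forces $|\bs{G}^{n_1}\wedge\cdots\wedge\bs{G}^{n_s}|$ to be even for every $s\le\ell$. Second, a short inclusion--exclusion / generating-function computation evaluates the combinatorial sum in closed form,
\[\sum_{r}(-2)^{r-1}N_r(s,t)=(-1)^{s+t}\,2^{\,s+t-2}\qquad(s,t\ge1).\]
With these in hand, reduce mod $2^\ell$: in $F'_\ell$ the terms with $s\ge\ell$ vanish (the prefactor $2^{s-1}$ kills $s\ge\ell+1$, and evenness of the $s=\ell$ overlap kills $s=\ell$), and each surviving term is $2^{s-1}$ times an even overlap, hence divisible by $2^s$; in $F''_\ell=-2|\bs{x}L\wedge\bs{y}G|$ the coefficient of a degree-$(s+t)$ monomial equals $\pm 2^{\,s+t-1}\bigl|\bigwedge_S\bs{L}\wedge\bigwedge_T\bs{G}\bigr|$, which vanishes mod $2^\ell$ for $s+t\ge\ell$ and, for $s+t<\ell$, carries the overlap that condition (ii) makes even, hence is divisible by $2^{s+t}$. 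Thus every surviving coefficient of a degree-$d$ monomial in $Q$ is divisible by $2^d$; therefore $Q$ is even mod $2^\ell$ and the degree-$d$ coefficients of $Q/2$ are divisible by $2^{d-1}$, i.e.\ $Q/2$ is of the level-$(\ell-1)$ weighted-polynomial shape. Re-expressing $\bs{y}$ and $\bs{x}$ as $\mathbb{F}_2$-affine functions of $\bs{z}$ via Eq.~\eqref{eq:idbinadd} again preserves this shape, which yields the required $H$ and hence the correction $V$; the logical gate is then $U_{F_\ell}$ with $F_\ell$ as in Eq.~\eqref{eq:Flx}.

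\emph{Main obstacle.} The delicate point is the $2$-adic bookkeeping in the last paragraph, and in particular the closed-form identity $\sum_r(-2)^{r-1}N_r(s,t)=(-1)^{s+t}2^{s+t-2}$: it is exactly this extra power of $2$ that makes the mixed term $F''_\ell$ lose one full level relative to the naive degree count, so that condition (ii) --- imposed only modulo $2$ --- suffices to push it into the $(\ell-1)$th Clifford level. Everything else reduces to Eq.~\eqref{eq:transvR}, the binary-to-integer identity Eq.~\eqref{eq:idbinadd}, and the elementary $\mathbb{F}_2$ fact $\bs{v}\wedge\bs{v}=\bs{v}$.
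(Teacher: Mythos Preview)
Your argument is correct and lands in the same place as the paper's, but you have taken a longer road to the expansion of $F''_\ell$. The paper (see Appendix~\ref{sec:multiorth}) applies Eq.~\eqref{eq:idbinadd} \emph{once} to the full $(k+r)$-term XOR
\[
\bs{x}L\oplus\bs{y}G=\bigoplus_{m} x_m\bs{L}^m\;\oplus\;\bigoplus_{j} y_j\bs{G}^j,
\]
and simply sorts the resulting terms by how many factors come from $L$ versus $G$. The purely-$L$ terms give $F_\ell$, the purely-$G$ terms give $F'_\ell$, and the mixed terms directly yield
\[
F''_\ell(\bs{x},\bs{y})=\sum_{\substack{s\ge1,\,t\ge1}}(-2)^{s+t-1}\sum_{\substack{m_1<\cdots<m_s\\ n_1<\cdots<n_t}}\Bigl|\bigwedge_{i}\bs{L}^{m_i}\wedge\bigwedge_{j}\bs{G}^{n_j}\Bigr|\prod_i x_{m_i}\prod_j y_{n_j},
\]
so the prefactor $(-2)^{s+t-1}$ appears for free. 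Your route --- first splitting via $|a\oplus b|=|a|+|b|-2|a\wedge b|$, then distributing $\wedge$ over $\oplus$ to get a $kr$-term XOR, then invoking the identity $\sum_r(-2)^{r-1}N_r(s,t)=(-1)^{s+t}2^{s+t-2}$ --- reproduces exactly this prefactor, but the combinatorial identity you flag as the ``main obstacle'' is an artifact of the detour, not a genuine difficulty. Once the prefactor is in hand, the two proofs coincide: conditions (i) (with the padding trick $\bs{v}\wedge\bs{v}=\bs{v}$) and (ii) make every overlap even, one factors out a $2$, and the residual is a level-$(\ell-1)$ weighted polynomial. One small wording slip: for $s+t=\ell$ the prefactor is only $2^{\ell-1}$, so that case still needs condition~(ii) to vanish mod $2^\ell$, not the prefactor alone. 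Your final step of rewriting the correction in $\bs{z}$-coordinates is fine but also slightly heavier than needed: the paper simply observes that the correction is $U_{\tilde F_{\ell-1}}$ conjugated by the (Clifford) decoding circuit, which automatically keeps it in the $(\ell-1)$th level.
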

Indeed, under this condition it follows that, see Appendix~\ref{sec:multiorth}
\begin{equation}
\omega_\ell^{F^\prime_\ell(\bs{y})+F^{\prime\prime}_\ell(\bs{x},\bs{y})} = \omega_\ell^{2\tilde{F}_{\ell - 1}(\bs{x},\bs{y})} = \omega_{\ell-1}^{\tilde{F}_{\ell - 1}(\bs{x},\bs{y})},\label{eq:l-1correction}
\end{equation}
where $\tilde{F}_{\ell-1}$ is a properly weighted polynomial which defines a $(\ell -1)$th-level Clifford correction to be applied.
The exact correction is given by the conjugation of $U_{\tilde{F}_{\ell-1}}$ by a decoding circuit for $\mathcal{C}$, see \cite{campbell_unified_2017}.
Note that we can also define intermediate conditions so that the correction belongs to the $(\ell - q)$th level of the Clifford hierarchy for some $1\leq q \leq\ell$.
Here we have just defined the two extreme ones, for which the correction belongs either to the $0$th level or the $(\ell-1)$th level of the Clifford hierarchy.
Propositions~\ref{prop:transv} and \ref{prop:quasitransv} where stated in slightly different form in \cite{campbell_unified_2017}.

\subsection{Magic-state distillation}

Given a code which exhibits exact transversality or quasi-transversality it is possible to devise magic-state distillation protocols.
We describe briefly a variant here, see also \cite{bravyi_universal_2005, bravyi_magic-state_2012, jones_multilevel_2013, campbell_unified_2017, campbell_unifying_2017, haah_codes_2018, hastings_distillation_2018}.

A magic state enables the implementation of some gate on another state.
The most common example is the state $\ket{A} = T\ket{+}$ which can be used to implement a $T$ gate using only a CNOT gate, a measurement and possibly a $S$ correction (see for example Figure~2 of \cite{campbell_roads_2017}).
If one uses a CSS code to encode information then the CNOT gate on the encoded level can be done transversally between two encoded blocks.
The main difficulty lies in obtaining an encoded magic state of good quality.

The usual starting point is one logical qubit being encoded in a \emph{base code} chosen to protect them from noise and with access to fault-tolerant implementations of all Clifford gates.
Then a common protocol consists in concatenating the base code with a \emph{distillation code}, say of parameters $\llbracket n,k,d\rrbracket $, which admits the implementation of logical $T$ gates on the encoded level by applying $T$ gates to the physical qubits.
Then using $n$ possibly low fidelity magic states encoded in the base code, one applies a transversal $T$ gate on $\ket{+}$ states at the level of the distillation code using the circuit described in the previous paragraph.
Measuring the checks of the distillation code conditioning on seeing a trivial syndrome and decoding to the base code one obtains $k$ magic states encoded in the base code of better quality.

Provided that the quality of the initial magic states is not too low, repeating the protocol sufficiently many times will reach any desired accuracy.
Then the amount of resources spent will directly depend on the parameters of the distillation code $\llbracket n,k,d\rrbracket $.
The efficiency of the protocol is often summarized in just one quantity:
\begin{equation}
\gamma = \frac{\log(n/k)}{\log(d)},\label{eq:gamma}
\end{equation}
since the average number of output distilled magic states at a desired accuracy, $\epsilon_{\rm out}$, per initial noisy magic state is given by $1/O\left (\log(\epsilon_{\rm out}^{-1}\right )^\gamma)$.
Hence, the smaller $\gamma$ the more efficient the distillation protocol.
Previously it was conjectured that $\gamma$ has to be at least $1$. %TODO reference?
However, it has recently been shown that $\gamma<1$ is achievable \cite{hastings_distillation_2018}.

\subsection{Puncturing techniques}
\label{sub:puncturingtechnique}

As explained above, most magic state distillation procedures rely on a distillation code which is a CSS code for which the $T$ gate is transversal and correspond to a transversal $T$ gate on the logical level of the code.
These codes are called tri-orthogonal codes \cite{bravyi_universal_2005, bravyi_magic-state_2012}, and rely on $3$-orthogonal or $3$-even spaces.

In more details and stated in general for the $R_\ell$ gate, this $\llbracket n,k,d\rrbracket$ distillation code, $\mathcal{C}_{\rm distill.}$, has to fulfill the conditions of exact transversality as per Prop.~\ref{prop:transv} as well as having 
\begin{equation}
	\left \vert\bs{x}L\right \vert = \sum_{j=1}^k x_j \pmod{2^\ell},\label{eq:polytransvT}
\end{equation}
where $L$ is a $k\times n$ generating matrix for the $X$-logical operators of $\mathcal{C}_{\rm distill.}$.
Fulfilling the exact transversality conditions of Prop.~\ref{prop:transv} ensures that one can transversally apply the $R_\ell$ gate on the code and \eqref{eq:polytransvT} ensures that the resulting operation on the logical level is the transversal $R_\ell$ gate.
If one has easy access to operations in the $(\ell-1)$th level of the Clifford hierarchy (as is usual in the case $\ell=3$) then the following weaker conditions are enough.
The code $\mathcal{C}_{\rm distill.}$ has to fulfill the conditions of quasi-transversality as per Prop.~\ref{prop:quasitransv} as well as having
\begin{equation}
	\left \vert\bs{x}L\right \vert = \sum_{j=1}^k x_j \pmod{2}.\label{eq:polyquasitransvT}
\end{equation}

Starting from an $\ell$-even, $\ell$-orthogonal space respectively, it is simple to obtain such codes by puncturing them \cite{haah_magic_2017, haah_codes_2018, hastings_distillation_2018}.
We describe this technique here and will apply it to some $\ell$-orthogonal spaces obtained from pin code relations in Section~\ref{sub:puncturingexamples}.
The idea goes as follows: consider an $\ell$-even subspace of $\mathbb{F}_2^n$ of dimension $m$ and take a $m\times n$ generating matrix, $G\in\mathbb{F}_2^{m\times n}$.
Using Gaussian elimination it is always possible to change the basis and put the matrix $G$ in the following form:
\begin{equation}
	G = \bordermatrix{
		~ & \leftarrow k \rightarrow & \leftarrow n \rightarrow \cr 
		k \updownarrow& \id & G_1\cr
		r \updownarrow& 0 & G_0},
\end{equation}
where $k$ is some chosen integer such that $1\leq k \leq m$ and $r=m-k$.
To obtain this form one just performs row operations.
This corresponds to puncturing the first $k$ positions of the space but note that one can perform some permutation of the columns before the Gaussian elimination which would yield some different $G_0$ and $G_1$ and corresponds to puncturing some other $k$ positions.
The rows of $G$ are a basis of the space and hence, by Prop.~\ref{prop:basisverif}, they verify \eqref{eq:multieven}.
We can deduce that the CSS code defined by $G_0$ as the generating matrix for the $X$-stabilizers and $G_1$ as the generating matrix for the $X$-logical operators verify the exact transversality conditions in Prop.~\ref{prop:transv}.
The $Z$-stabilizers of this code are given by the space dual to the space generated by the rows of both $G_0$ and $G_1$.
Moreover one can readily compute that 
\begin{equation}
	\left \vert\bs{x}L\right \vert = \left (2^\ell -1\right )\sum_{j=1}^k x_j \pmod{2^\ell}.\label{eq:puncttransvT}
\end{equation}
Note that we did not obtain a transversal $R_\ell$ on the logical level but rather $R_\ell^{-1}$.
The difference is $R_\ell^2$ which is in the $(\ell-1)$th level of the Clifford hierarchy.

Starting from a $\ell$-orthogonal space instead and going through the same procedure it is straightforward to see that one obtains a CSS code for which $R_\ell$ is quasi-transversal as per Prop.~\ref{prop:quasitransv} and that the logical operation is also the transversal $R_\ell$ up to a $(\ell-1)$th level correction.

These type of codes when used in a distillation procedure distill $n$ magic states into $k$ ones of better quality which depends on the distance of the code that has to be computed independently.

\section{Properties of quantum pin codes}
\label{sec:prop}

In this section we examine the properties of pin codes.
Since their definition is fairly general, their properties depend on the precise choice of pin code relations $F$.
We stay as general as possible and state precisely when the pin code relations need to be restricted.

\subsection{Code parameters and basic properties}
First we investigate the LDPC (Low Density Parity Check) property.
A code family is LDPC if it has stabilizer checks of constant weight and each of its qubits are acted upon by a constant number of checks.
LDPC codes were first considered by Gallager in the classical setting~\cite{gallager1962low}.
For pin codes, both properties depend on the relation $F$, but it is fairly easy to construct LDPC families.
For instance, pin codes based on Coxeter groups with fixed relations between generators and one growing compactifying relation can be LDPC, see Sec.~\ref{sec:hyperbolic}.
As another example, pin codes from chain complexes with fixed length $D+1$, sparse boundary map and growing dimension of the levels are LDPC as well.

Let us examine a simple example: choose some $D\in\mathbb{N}$, a set, $C$, of size $2m$ for some $m\in\mathbb{N}$ and the complete relation on $D+1$ copies of C: $F=C^{D+1}$.
One can easily verify that the relation $F$ is a pin code relation as $C$ has even cardinality.
The number of flags is $n_q = \vert F\vert = (2m)^{D+1}$ and the number of $x$- and $z$-pinned sets are $n_x=\binom{D+1}{x}\times(2m)^x$ and $n_z =\binom{D+1}{z}\times(2m)^z$.
If one considers growing $m$ then, the code would not be LDPC, but more strikingly the ratio of number of stabilizer checks to number of qubits would go to zero.
This illustrates that for a fixed $D$ the complete relation leads to very high rate and very low distance.
To get interesting codes, one either needs to vary $D$, or find some other relations with a number of flags growing significantly slower than the complete relation.

Concerning logical operators, we first note that they have even weight.
\begin{prop}[Logical operators have even weight]\label{prop:evenlogical}
	Let $F$ be a pin code relation on $D+1$ sets and let $\mathcal{C}$ be the associated $(x,z)$-pin code for $(x,z)\in\{1,\ldots,D\}^2$ with $x+z\leq D$.
	Then the $X$- and $Z$-logical operators of $\mathcal{C}$ have even weight. 
\end{prop}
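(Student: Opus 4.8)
The plan is to show that any $X$-logical operator has even weight; the argument for $Z$-logical operators is identical by the symmetry $x\leftrightarrow z$. An $X$-logical operator is (up to multiplication by $X$-stabilizers) a binary vector $\bs{\ell}\in\mathbb{F}_2^n$ supported on the flags, which commutes with every $Z$-stabilizer, i.e. $\bs{\ell}$ has even overlap with every $z$-pinned set, and which is not itself in the span of the $x$-pinned sets. I want to extract parity of $|\bs{\ell}|$ from the commutation constraints alone. The key observation is that, because $x\ge 1$ and $z\ge 1$ with $x+z\le D$, we have $z\le D-1$, so there is ``room'' for at least one more pin beyond the $z$ pins fixing a $z$-pinned set.

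First I would fix a rank $j\in\{0,\dots,D\}$ not used by whatever type I work with, and use Proposition~\ref{prop:pinsetdecomp} to partition the full flag set $F$ into the $1$-pinned sets $\{P_{\{j\}}(c)\}_{c\in C_j}$ (those that are nonempty). Then $|\bs{\ell}| = \sum_{c\in C_j}|\bs{\ell}\cap P_{\{j\}}(c)|$, so it suffices to show each term $|\bs{\ell}\cap P_{\{j\}}(c)|$ is even. Now $P_{\{j\}}(c)$ is a $1$-pinned set, hence (relabelling) a valid candidate set for a ``$z'$-pinned set'' with $z'=1\le z$. The point is that $\bs{\ell}$ restricted to any single $1$-pinned set should be forced to have even weight by the same even-overlap mechanism that makes $\bs{\ell}$ commute with the actual $z$-pinned sets: indeed, a $1$-pinned set and an $(x)$-pinned set intersect in a pinned set with at most $1+x\le 1+(D-z)\le D$ pins — wait, I must be careful here, since $z\ge1$ gives $1+x\le 1+D-z\le D$, so by Proposition~\ref{prop:evenoverlap}-type reasoning (Propositions~\ref{prop:interpinset} and~\ref{prop:pinsetdecomp}) every $1$-pinned set has even overlap with every $x$-pinned set, meaning each $1$-pinned set is itself a valid $Z$-type check for the $(x,1)$ sub-code. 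More directly: since $1\le z$, every $1$-pinned set is a union of $z$-pinned sets by Proposition~\ref{prop:pinsetdecomp}, each of which $\bs{\ell}$ meets evenly, so $|\bs{\ell}\cap P_{\{j\}}(c)|$ is a sum of even numbers, hence even. Summing over $c\in C_j$ gives $|\bs{\ell}|\equiv 0\pmod 2$.

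The only subtlety — and the step I expect to need the most care — is the claim that a single $1$-pinned set decomposes into $z$-pinned sets for which the logical operator has even overlap: the logical operator commutes with the $Z$-\emph{stabilizers}, which are the $z$-pinned sets, so this is immediate once I invoke Proposition~\ref{prop:pinsetdecomp} with $t=\{j\}$ and $t'$ any $z$-element type containing $j$ (possible since $z\le D-1 < D+1$, so we can always enlarge $\{j\}$ to a $z$-set). I should double-check the edge case $z=D$ is excluded here: the proposition assumes $x+z\le D$ with $x\ge1$, forcing $z\le D-1$, so enlarging $\{j\}$ to a $z$-type is always possible and this case does not arise. With that, the proof is just: partition $F$ into $1$-pinned sets along any rank $j$; refine each into $z$-pinned sets; each refined piece meets $\bs{\ell}$ evenly since $\bs{\ell}$ commutes with all $Z$-stabilizers; sum up. The $Z$-logical case follows by swapping the roles of $x$ and $z$ and using that $x\le D-1$.
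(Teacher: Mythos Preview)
Your proof is correct and follows essentially the same approach as the paper. The paper is slightly more direct: it simply fixes a single type $t$ of size $z$, projects the support $L$ of the logical operator via $\Pi_t$, and observes that the resulting $z$-pinned sets $P_t(s)$ (for $s\in\Pi_t(L)$) disjointly cover $L$, each with even intersection; your intermediate partition into $1$-pinned sets before refining to $z$-pinned sets is an unnecessary (but harmless) extra step, and your worry about enlarging $\{j\}$ to a $z$-type is moot since this is possible for any $z\le D+1$.
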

\begin{IEEEproof}
	Let the set $L\subset F$ represent a $X$-logical operator, and let $T$ be a type of size $z$.
	Consider the set, S, of collections of pins given by the projection of type $T$ of the set $L$,
	\begin{equation}
	S = \Pi_T(L).
	\end{equation}
	For every $s\in S$, the pinned set $P_T(s)$ correspond to a $Z$-stabilizer and therefore has an even intersection with $L$.
	Pinned sets of the same type but defined by two different collections of pins are disjoint. 
	Hence, every element in $L$ appear in exactly one of the pinned sets $P_T(s)$ for some $s\in S$ and so the cardinal of $L$ is even.
	The proof for $Z$-logical operators is the same.	
\end{IEEEproof}
One can also prove the following general lower bound on the distance of pin codes.
\begin{prop}[Distance at least $2^{\min(x,z)+1}$]
	\label{prop:distance}
	Let $F$ be a pin code relation on $D+1$ sets and let $\mathcal{C}$ be the associated $(x,z)$-pin code for $(x,z)\in\{1,\ldots,D\}^2$ with $x+z\leq D$.
	Then the minimum distance of $\mathcal{C}$ is at least $2^{\min(x,z)+1}$. 
\end{prop}
\begin{IEEEproof}
	Without loss of generality we can assume $x\leq z$.
	Take any set $S_x \subset F$ containing less than $2^{x+1}$ distinct flags:
	\begin{equation}
		\left \vert S_x\right \vert < 2^{x+1}.
	\end{equation}
	
	Our goal is to  show that this cannot represent a logical Pauli operator.
	To do so we will exhibit a sequence of pins, each pin in this sequence will successively allow us to eliminate half or more of the remaining flags by pinning.
	All the pins of the sequence will define a pinned set corresponding to a stabilizer with overlap on exactly one flag with $S_x$ showing that $S_x$ cannot be a logical Pauli operator.
	
	Since all the flags in $S_x$ are distinct, there necessarily exists a rank $j_x$ such that $\left \vert \Pi_{\{j_x\}}(S_x) \right \vert > 1$, that is to say there exists a level on which they go through at least two different pins.
	We can use the pins in $\Pi_{\{j_x\}}(S_x)$ to partition $S_x$: 
	\begin{equation}
	S_x = \bigsqcup_{p\in \Pi_{\{j_x\}}(S_x)} \left (\mathcal{P}_{\{j_x\}}(p) \cap S_x\right ).
	\end{equation}
	Since this is a partition, one of these subsets necessarily contains half or less of $S_x$.
	That is to say there exists a pin $p_x\in L_{j_x}$ such that 
	\begin{equation}
	0<\left \vert P_{\{j_x\}}(p_x) \cap S_x \right \vert < 2^{x}.
	\end{equation}
	We define a new set, subset of $S_x$:
	\begin{equation}
	S_{x-1} = P_{\{j_x\}}(p_x) \cap S_x,
	\end{equation}
	which is such that
	\begin{equation}
	0<\left \vert S_{x-1} \right \vert < 2^{x}.
	\end{equation}
	There are then two cases: either $\left \vert S_{x-1}\right \vert = 1$ in which case we stop.
	Otherwise we iterate the process on $S_{x-1}$.
	This gives a sequence of ranks $\left (j_x,j_{x-1},\ldots\right )$ and pins of corresponding ranks $\left (p_x, p_{x-1},\ldots\right )$ and a sequence of sets of flags $S_x \supsetneq S_{x-1} \supsetneq \cdots$ which are such that
	\begin{align}
	S_{x-m} &= P_{\{j_{x-m+1}\}}(p_{x-m+1}) \cap S_{x-m+1},\\
	 0<&\left \vert S_{x-m} \right \vert < 2^{x-m+1}.\label{eq:sizeSx}
	 \end{align}
	Note that a rank chosen at some step cannot be chosen again at a subsequent step. Indeed by construction $\left \vert\Pi_{\{j_{x-k}\}}(S_{x-k-1}) \right \vert= \left \vert\{p_{x-k}\}\right \vert = 1$ and since for $m\geq k$ we have $S_{x-k-1} \supsetneq S_{x-m-1}$ then $\left \vert\Pi_{\{j_{x-k}\}}(S_{x-m-1}) \right \vert= \left \vert\{p_{x-k}\}\right \vert = 1$.
	From \eqref{eq:sizeSx} we can infer that we necessarily find a set with only one remaining flag, say $f$, i.e. $S_{x-k} = \{f\}$, and this in at most $x$ steps, hence we have $k \leq x$.
	For this set we have 
	\begin{align}
	\{f\} = S_{x-k} &= P_{\{j_{x-k+1}\}}(p_{x-k+1}) \cap S_{x-k+1}\nonumber\\
	&= P_{\{j_{x-k+1}\}}(p_{x-k+1}) \cap P_{\{j_{x-k+2}\}}(p_{x-k+2}) \cap S_{x-k+2}\nonumber\\
	&\vdots\nonumber\\
	&= \bigcap_{m=0}^{k-1} P_{\{j_{x-m}\}}(p_{x-m}) \cap S_{x}\nonumber\\
	\{f\}&= P_{\{j_x, j_{x-1},\ldots j_{x-k+1}\}}(p_x, p_{x-1}, \ldots, p_{x-k+1}) \cap S_x,
	\end{align}
	where we used the intersection property, Prop.~\ref{prop:interpinset}, for the last equality.
	Since $k\leq x\leq z$ and using the decomposition proposition Prop.~\ref{prop:pinsetdecomp}, the $k$-pinned set $P_{\{j_x, j_{x-1},\ldots j_{x-k+1}\}}(p_x, p_{x-1}, \ldots, p_{x-k+1})$ can represent both a $X$- and a $Z$-stabilizer and its intersection with $S_x$ is on only one flag, hence $S_x$ cannot represent a $X$- nor $Z$-logical operator.
	This concludes the proof that the distance of the code is at least $2^{x+1}$.
\end{IEEEproof}

In order to get odd weight logical operators, one has to introduce free pins, see Sec.~\ref{sec:bound}.
Note that in the presence of free pins, the proof above does not hold anymore.

Making precise statements about the dimension of pin codes is difficult in general.
To get closer to be able to do this we need to study the structure of the logical operators.

\subsection{Colored logical operators and unfolding}
\label{sec:unfold}
\begin{figure*}[h!]
	\centering
	\subfloat[]{\raisebox{.4in}{\includegraphics[width=.33\linewidth]{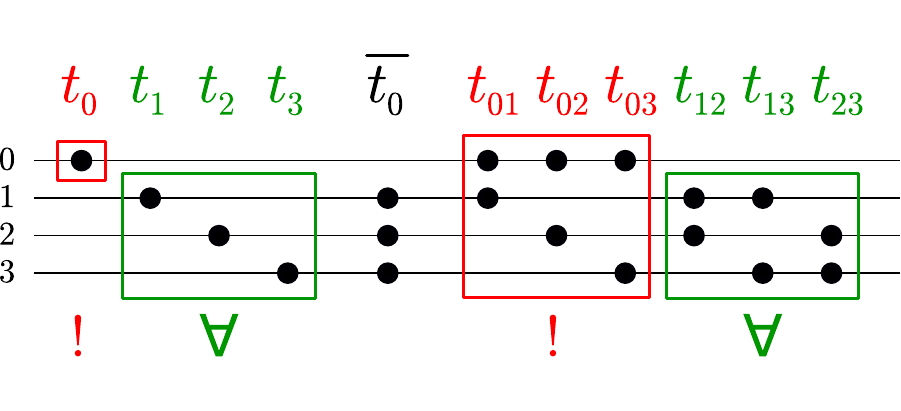}}%
		\label{fig:unfoldchart}}\hfil
	\subfloat[]{\includegraphics[height=1.8in]{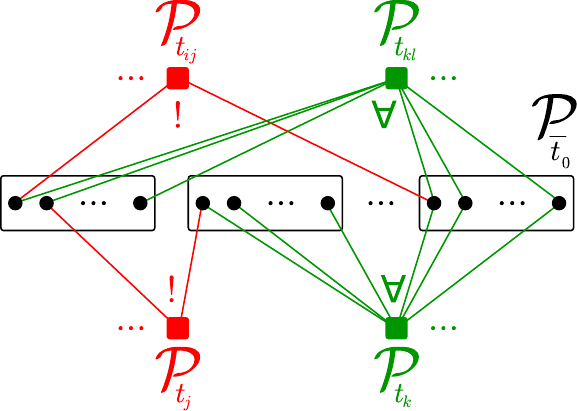}%
		\label{fig:unfold}}\hfil
	\subfloat[]{\includegraphics[height=1.8in]{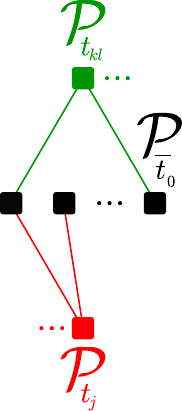}%
		\label{fig:unfoldreduced}}
	\caption{(a) Schematic representation of the different types for the $X$- and $Z$-stabilizers in the case of $D=3$.
		Each line represents a level.
		Types are represented by columns of dots where a dot is present if the level is in the type.
		The different types are classified according to their possible intersection with the complementary type to $t_0$.
		For some, the intersection, if not empty, is a unique flag, they are labeled by the symbol ``!''.
		For the others, the intersection, if not empty, is the full pinned set of type $\overline{t_0}$, they are labeled by the symbol ``$\forall$''.
		(b) The chain complex corresponding to the pin code, highlighting the intersections between the different sorts of $X$- and $Z$-stabilizers and pinned sets of type $\overline{t_0}$.
		(c) The $t_0$-shrunk chain complex derived from the pin code (see main text).}
	\label{fig:stringlogical}
\end{figure*}

\begin{figure*}[h]
	\centering
	\subfloat[]{\raisebox{.4in}{\includegraphics[width=.33\linewidth]{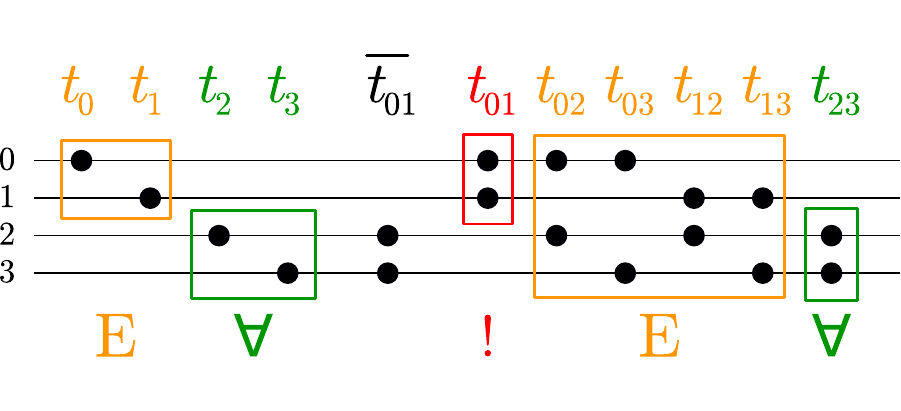}}%
		\label{fig:unfoldchart2}}\hfil
	\subfloat[]{\includegraphics[height=1.8in]{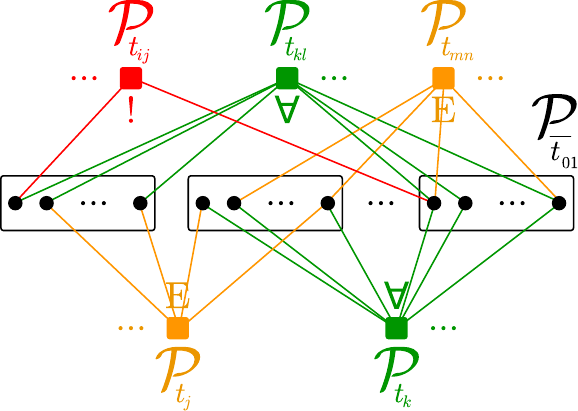}%
		\label{fig:unfold2}}\hfil
	\subfloat[]{\includegraphics[height=1.8in]{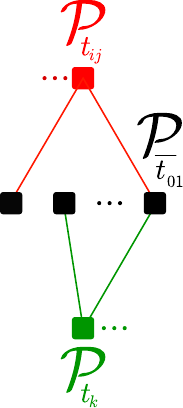}%
		\label{fig:unfoldreduced2}}
	\caption{(a) Schematic representation of the different types classified according to their possible intersection with the complementary type to $t_{01}$.
		Each line represents a level.
		Types are represented by columns of dots where a dot is present if the level is in the type.
		The ones with unique intersection are labeled with ``!'', the ones with even intersection with ``E'' and the one with full containment with ``$\forall$''.
		(b) The chain complex corresponding to the pin code, highlighting the intersections between the different sorts of $X$- and $Z$-stabilizers and pinned sets of type $\overline{t_{01}}$.
		(c) The $t_{01}$-shrunk (co)chain complex derived from the pin code (see main text).}
	\label{fig:membranelogical}
\end{figure*}

The structure of the logical operators of color codes is understood as colored string-nets or membrane nets \cite{bombin_exact_2007} and this structure is directly linked to an unfolding procedure existing for color codes \cite{kubica_unfolding_2015, bhagoji_equivalence_2015,kubica_efficient_2019}.
This structure mostly remains for all pin codes which means that we can map a given quantum pin code (seen as a chain complex) to several smaller chain complexes whose homological representatives can be lifted to logical operators of the original quantum pin code.
In the case of color codes these mappings fully characterize all logical operators.
While this is not true for a general quantum pin code it can bring insight into the structure of some of its logical operators. 

The general idea is to group qubits into sets with even overlap with all except one sort of stabilizer which will correspond to all stabilizers defined by pinned sets of a given type.
Logical operators build out of these sets then only depend on the structure of the one type of stabilizer selected.
Repeating this for different choices of type of stabilizer fully covers all logical operators in the case of color codes.

Consider a pin code relation, $F\subset L_0\times\cdots\times L_D$, and the associated $(x,z)$-pin code.
Define the complement of a type, $T$, denoted as $\overline{T}$:
\begin{equation}
\overline{T} = \{0,\ldots,D\}\setminus T.
\end{equation}
The intersection between a pinned set of type $T$ and a pinned set of type $\overline{T}$ is either empty or it contains exactly one flag.
Furthermore for any another type with the same number of pins as $T$, the corresponding pinned sets have necessarily even overlap with pinned sets of type $\overline{T}$, see Figs.~\ref{fig:unfoldchart} and \ref{fig:unfoldchart2} for visual representations of this.
This means that grouping flags according to pinned set of the complementary type $\overline{T}$ can single out logical operators only having to ensure commutation with pinned sets of type $T$.
For our code, $X$-stabilizers are generated by $x$-pin sets, which come in $\binom{D+1}{x}$ different types.
Take one such type, $T_x$, and group the qubits according to pinned sets of type $\overline{T_x}$.
Now the $Z$-stabilizers are generated by $z$-pinned sets, which come in $\binom{D+1}{z}$ different types.
Some of these types, we denote them as $T_z^{\rm inc.}$, are fully included in $\overline{T}$, which means that pinned sets of such type fully contain any group of qubits  they intersect.
The other types only partially intersect with the groups of qubits.
The situation is schematized in Fig.~\ref{fig:unfold} for $D=3$, $x=1$ and $z=2$.
From these considerations, one can construct a chain complex for which the homology gives candidate $Z$-logical operators.
Take the pinned sets of type $T_x$, for the level $0$, the pinned sets of type $\overline{T_x}$ for the level $1$, and the pinned sets of types $T_z^{\rm inc.}$ for level $2$ and the boundary map is given by the overlaps of these sets.
This is represented in Fig.~\ref{fig:unfoldreduced}, we call it the $T_x$-shrunk chain complex.
Then one can check that an element of the homology of this chain complex can be lifted to a potential $Z$-logical operator for the pin code.
Indeed it would commute with all the $X$-stabilizer, by homology for the stabilizers of type $T_x$ and by construction for the other $X$-stabilizers.
It would also not be simply generated by $Z$-stabilizers of type $T_z^{\rm inc.}$ by homology, and one would have to check for the other $Z$ types.
So it is a valid (potentially trivial) $Z$-logical operator.

The same procedure can be done for each of the $X$ types.
Symmetrically, the same can be done for the $X$-logical with the $Z$ types, and this is represented in Fig.~\ref{fig:membranelogical} in the  case $D=3$, $x=1$ and $z=2$.

Given a type $T$, the chain complexes constructed like this are called $T$-shrunk lattices in the case of color codes~\cite{bombin_exact_2007}.
For color codes obtained from the Wythoff construction described in Sec.~\ref{sec:construction_tilings}, the construction of the $T$-shrunk lattice is fairly direct.
First move the vertex from the middle of the fundamental simplex to the corner corresponding to the first rank in the type $T$, then focus on the opposite face: a simplex of dimension one less which now looks exactly like the beginning of the procedure but in a lower dimension.
Recursively exhaust all the ranks of $T$ in this way by each time adding a vertex in the middle of the current simplex and moving it to the corresponding corner.
 
These shrunk lattices are the basis for the unfolding procedure proved for color codes in all dimensions in \cite{kubica_unfolding_2015, kubica_efficient_2019}.
This procedure establishes a local unitary equivalence between a color codes and the reunion of the homological codes on the shrunk lattices corresponding to all the different types for $X$-stabilizers except one.
The local unitary acts separately on groups of qubits defined by the $X$-stabilizer generators of the type that is not used to produce one of the shrunk lattices.
The proof of the existence of the local unitary relies on the analysis of the so called overlap groups of stabilizers restricted to the support of the $X$-stabilizer generators aforementioned and the corresponding groups of qubits in the shrunk lattices.
The global structure is still present for general pin codes, but for the proof to hold we need to require that the linear dependency between the generators within the overlap groups in the pin code is such that the number of independent generators agrees with the number of independent generators in the corresponding shrunk lattices as an additional assumption.

These shrunk lattices are also the basis for some color code decoders \cite{delfosse_decoding_2014, aloshious_projecting_2018, kubica_three-dimensional_2018, kubica_efficient_2019} but these decoders rely on a lifting procedure from the shrunk lattices to the color code lattice which seems intrinsically geometric as it consists in finding a surface filling inside a boundary.
So it is at this point unclear how to leverage this structure in order to decode general pin codes.

\subsection{Gauge pin codes} 
\label{sec:gaugePC}
In this section we define gauge pin codes from a pin code relation.
Gauge pin codes can also be viewed as a generalization of gauge color codes \cite{bombin_gauge_2015}.

A gauge code, or subsystem code, is a code defined by a so called gauge group instead of a stabilizer group \cite{poulin_stabilizer_2005, bacon_operator_2006}.
For stabilizer codes, the code states are eponymously stabilized by the stabilizer group which is an abelian subgroup of the group of Pauli operators.
For gauge codes, the gauge group is not abelian and hence all gauge operators cannot share a common $+1$-eigenspace.
In this case the code states are stabilized by the center of the gauge group.
Gauge operators not in the center of the gauge group would qualify as logical operators in the case of a stabilizer code but are not used to encode information.

Take a pin code relation $F$ and two positive integers~$x$ and~$z$ such that $x+z<D$.
The associated pin code has its $X$-stabilizer generators defined by all the $x$-pinned sets and its $Z$-stabilizers generators defined by all the $z$-pinned sets.
Since the relation $F$ is a pin code relation, by Prop.~\ref{prop:evenoverlap} any $(D-x)$-pinned set has an even intersection with any $x$-pinned set.
So all the $(D-x)$-pinned sets correspond to some $Z$-logical operators.
On top of that, they generate all the $Z$-stabilizers.
Indeed using Prop.~\ref{prop:pinsetdecomp} and the fact that $D-x>z$ one shows that the $z$-pinned sets decompose into disjoint $(D-x)$-pinned sets.
As such $(D-x)$-pinned set define naturally $Z$-gauge operators which can be measured individually and whose outcomes can be recombined to reconstruct the value of the $Z$-stabilizers defined by $z$-pinned sets.
Symmetrically, the same happens for $(D-z)$-pinned sets which have even overlap with $z$-pinned sets and generate $x$-pinned sets and therefore can be viewed as $X$-gauge operators.

\begin{prop}[Gauge pin code]
	Given a pin code relation $F$ and two natural integers $x$ and $z$ such that $x+z<D$, one can define a gauge pin code by choosing $X$-gauge operators to be the $(D-z)$-pinned sets and $Z$-gauge operators to be the $(D-x)$-pinned sets.
	The corresponding stabilizer group for this gauge code contains the $x$-pinned sets as $X$-stabilizer generators and $z$-pinned sets as $Z$-stabilizer generators.
	The number of logical qubits is at most the number of logical qubits of the $(x, D-x)$-pin code or of the $(D-z, z)$-pin code obtained from the same relation $F$.
\end{prop}
\begin{IEEEproof}
	Let us first prove that $(D-z)$-pinned sets and $(D-x)$-pinned sets indeed define gauge operators that do not necessarily commute.
	For this, take any qubit in $F$, say $f \in F$, and any two types $T_x$ and $T_z$ of size $\vert T_x\vert = D-z$ and $\vert T_z\vert = D-x$, respectively. 
	Define the two collections $s_x = \Pi_{T_x}(f)$ and $s_z = \Pi_{T_z}(f)$, and the two $(D-z)$- and $(D-x)$-pinned sets, $P_{T_x}(s_x)$ and $P_{T_z}(s_z)$.
	One straightforwardly verifies that these pinned sets are constructed such that their intersection is a singleton, more precisely
	\begin{equation}
	P_{T_x}(s_x) \cap P_{T_z}(s_z) = \{f\},
	\end{equation}
	and they define a $X$-gauge operator and a $Z$-gauge operator, respectively, which therefore do not commute with one another.
	
	Let us prove now that the center of the gauge group, i.e. the stabilizer group, contains the $x$-pinned sets as $X$-stabilizers and $z$-pinned sets as $Z$-stabilizers.
	We have that $x<D-z$, hence by the decomposition property, Prop.~\ref{prop:pinsetdecomp}, $x$-pinned sets as $X$ operators are generated by $D-z$-pinned sets which are $X$ gauge operators.
	Similarly, $z<D-x$ and  $z$-pinned sets as $Z$ operators are generated by $D-x$-pinned sets which are $Z$ gauge operators.
	Moreover, 
	\begin{equation}
	x+(D-x)=z+(D-z)=D,
	\end{equation}
	and so by Proposition~\ref{prop:evenoverlap}, the $x$-pinned sets as $X$ operators and $z$-pinned sets as $Z$ operators commute with all gauge operators and belong to the center of the gauge group.
\end{IEEEproof}
Note that the stabilizer group can be larger than the group generated by $x$-pinned sets as $X$ stabilizers and $z$-pinned sets as $Z$ stabilizers. 
That is why in turn it can happen that the number of logical qubits is strictly smaller than that of the $(x, D-x)$-pin code or of the $(D-z, z)$-pin code from the same relation $F$.

The error correction procedure for a gauge code with only fully $X$-type or fully $Z$-type gauge operators is conveniently performed in two parts.
In one part, one measures the $X$-gauge operators, reconstructs the syndrome for the $X$-stabilizers and uses it to correct $Z$-errors.
In the other part, one measures the $Z$-gauge operators, reconstructs the syndrome for the $Z$-stabilizers and uses it to correct $X$-errors.

The advantages of this procedure in the case of gauge pin codes are two-fold.
First, the weight of the gauge generators, i.e. the number of qubits involved in each generator, is reduced compared to the weight of the stabilizer generators making their measurement easier and less error prone.
Second, the record of gauge operator measurements contains the information of the stabilizer measurements with redundancy.
To understand this redundancy consider a $x$-pinned set and define $k = (D-z)-x$.
This is the number of additional levels to pin in order to decompose the $x$-pinned set into $(D-z)$-pinned sets.
There are $\binom{D+1-x}{k}$ different ways to choose these additional levels to pin and therefore that many different ways to reconstruct the $x$-pinned set.
This redundancy permits a more robust syndrome extraction procedure which can even become in some cases single-shot, meaning that the syndrome measurements do not have to be repeated to reliably decode \cite{bombin_single-shot_2015}.
Meaning that even when the measurements are noisy one can measure the gauge operators only once and process the obtained information to reduce the noise enough and proceed with the computation.

\subsection{Transversality}

We examine here pin codes in regards of Prop.~\ref{prop:transv} and Prop.~\ref{prop:quasitransv}.
Nicely, $x$-pinned sets always have some multi-orthogonality property.
\begin{prop}[Multi-orthogonality of pinned sets]
	\label{prop:stabmultiorth}
	Let $F$ be a $(D+1)$-ary pin code relation. For any $x\in\{1,\ldots,D\}$, the $x$-pinned sets seen as binary vectors in $\mathbb{F}_2^F$ generate a $\lfloor D/x\rfloor$-orthogonal space.
\end{prop}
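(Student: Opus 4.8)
The plan is to show that for any choice of $\ell = \lfloor D/x\rfloor$ many $x$-pinned sets $P_1,\ldots,P_\ell$, their element-wise product (i.e.\ their common intersection $P_1 \cap \cdots \cap P_\ell$) has even cardinality; by the remark after Definition~\ref{def:multiorth} it suffices to verify the $\ell$-orthogonality condition on the generating set, which is exactly the collection of $x$-pinned sets. First I would invoke Proposition~\ref{prop:interpinset} repeatedly: the intersection $P_1\cap\cdots\cap P_\ell$ is either empty — in which case its cardinality is $0$, trivially even — or it is itself a pinned set of type $t_1\cup\cdots\cup t_\ell$ characterized by the pooled collection of pins, where $t_i$ is the type of $P_i$ (a set of $x$ ranks). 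If it is empty we are done, so assume it is a pinned set $P_{t}(s)$ with $t = t_1 \cup \cdots \cup t_\ell$.

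The key estimate is that $|t| = |t_1\cup\cdots\cup t_\ell| \leq x\cdot\ell = x\lfloor D/x\rfloor \leq D$. Since $P_t(s)$ is a $k$-pinned set with $k = |t| \leq D$, I would then apply Proposition~\ref{prop:pinsetdecomp} with $t' = \{0,\ldots,D\}$ (or any $D$-element type containing $t$) to partition $P_t(s)$ into a disjoint union of $D$-pinned sets. Each $D$-pinned set has even cardinality because $F$ is a pin code relation (Definition~\ref{def:pincoderel}), and a disjoint union of sets of even cardinality has even cardinality. Hence $|P_1\cap\cdots\cap P_\ell|$ is even, which is precisely the statement that the element-wise product of any $\ell$ of the generating vectors has even Hamming weight, so the space they generate is $\ell$-orthogonal.

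This is essentially the same argument as Proposition~\ref{prop:evenoverlap}, just iterated $\ell$ times rather than twice, so I do not anticipate a genuine obstacle — the only point requiring a little care is the passage from ``even overlap on the generating set'' to ``$\ell$-orthogonal space,'' which relies on the identity in Eq.~\eqref{eq:idbinadd} and the observation (already made in the text) that $\ell$-orthogonality, unlike a naive weight condition, is preserved under the binary addition that generates the full space from the generators. One should also note the edge case $\ell = \lfloor D/x\rfloor = 1$, i.e.\ $x > D/2$: there the claim is just that each individual $x$-pinned set has even weight, which again follows from Proposition~\ref{prop:pinsetdecomp} and Definition~\ref{def:pincoderel} since a single $x$-pinned set with $x \leq D$ decomposes into $D$-pinned sets.
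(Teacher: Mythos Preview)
Your proposal is correct and follows exactly the same route as the paper's proof: intersect the $\ell=\lfloor D/x\rfloor$ pinned sets via Proposition~\ref{prop:interpinset} to get either the empty set or a pinned set with at most $x\ell\le D$ pins, and then observe that such a pinned set has even cardinality for a pin code relation. The paper's proof is a two-line sketch that leaves implicit the steps you spell out (the decomposition into $D$-pinned sets via Proposition~\ref{prop:pinsetdecomp}, the sufficiency of checking on generators, and the $\ell=1$ edge case), but the argument is the same.
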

\begin{IEEEproof}
	Given $x\in\{1,\ldots,D\}$, we have that $x\cdot\lfloor D/x\rfloor \leq D$.
	Hence by Prop.~\ref{prop:interpinset}, the intersection of $\lfloor D/x\rfloor$ (or less) $x$-pinned sets is either empty or a pinned sets with at most $D$ pins.
	This pinned set with at most $D$ pins can always be decomposed by Prop.~\ref{prop:pinsetdecomp} into a disjoint union of $D$-pinned sets and hence has even weight as $F$ is a pin code relation as per Def.~\ref{def:pincoderel}.
\end{IEEEproof}
Interestingly it is also not too difficult to find pin code relations for which the $1$-pin sets are $D$-even.
For example, using a chain complex whose boundary map have even row and column weights and is regular enough will typically suffice.

One could also hope for the second part of Proposition~\ref{prop:quasitransv} to always holds.
Unfortunately it holds only partially in general.
\begin{prop}[$X$-logical intersection with $X$-stabilizers]
	Let $F$ be a $(D+1)$-ary pin code relation, and consider the associated $(x, z)$-pin code for $x\in\{1,\ldots,D\}$ and $z=D-x$.
	Then for any one $X$-logical operator, $\bs{L}$, and $k$ $X$-stabilizer generators, $\bs{G}^j$, with $k\leq \lfloor D/x\rfloor-1$,
	\[\left \vert\bs{L}\wedge\bs{G}^1\wedge\cdots\wedge\bs{G}^k\right \vert = 0 \pmod 2.\]
\end{prop}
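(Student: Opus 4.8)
The plan is to reduce the quantity $|\bs{L}\wedge\bs{G}^1\wedge\cdots\wedge\bs{G}^k|$ to a statement about the cardinality of an intersection of pinned sets, and then invoke Prop.~\ref{prop:interpinset} together with the fact that $F$ is a pin code relation. First I would use Prop.~\ref{prop:evenlogical} (and the argument behind it) to decompose the support of the $X$-logical operator $\bs{L}$: pick any type $t$ of size $z = D-x$ and write $L$ as a disjoint union $L = \bigsqcup_{s\in S} (L\cap P_t(s))$, where $S = \Pi_t(L)$, and each $P_t(s)$ is a $z$-pinned set. This exhibits $\bs{L}$ as a sum of vectors supported inside single $z$-pinned sets. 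The key point is that element-wise product distributes over this disjoint decomposition, so it suffices to bound $|\bs{v}\wedge\bs{G}^1\wedge\cdots\wedge\bs{G}^k|$ where $\bs{v}$ is supported in a fixed $z$-pinned set $P_t(s)$ and, crucially, $\bs{v}$ restricted to that pinned set can only shrink the support further.

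The second step is to note that each $X$-stabilizer generator $\bs{G}^j$ is (the indicator of) an $x$-pinned set $P_{t_j}(s_j)$. The element-wise product $\bs{G}^1\wedge\cdots\wedge\bs{G}^k$ is then the indicator of $\bigcap_{j=1}^k P_{t_j}(s_j)$, which by Prop.~\ref{prop:interpinset} is either empty or a pinned set with at most $kx$ pins. Intersecting further with $P_t(s)$ gives, again by Prop.~\ref{prop:interpinset}, either the empty set or a pinned set with at most $kx + z = kx + (D-x) = (k-1)x + D$ pins. For this to have even cardinality via the pin code relation we need the number of pins to be at most $D$, i.e. $(k-1)x \leq 0$, which only holds for $k \leq 1$ — so a naive bound is too weak, and the real argument must use the logical structure of $\bs{L}$, not just that its support is a union of $z$-pinned sets.

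The fix, and the part I expect to be the main obstacle, is to use that $\bs{L}$ \emph{commutes with all $X$-stabilizers}, equivalently has even overlap with every $z$-pinned set — indeed every $(D-x')$-pinned set for all relevant $x'$ — and to push the even-overlap property through the wedge with $\bs{G}^1\wedge\cdots\wedge\bs{G}^k$. Concretely, I would fix the intersection $Q = \bigcap_{j=1}^k P_{t_j}(s_j)$, which is a pinned set of some type $t'$ with $|t'| \leq kx \leq D - x$ (using $k \leq \lfloor D/x\rfloor - 1$, so $kx \leq D - x$ since $x \geq 1$). Then $|\bs{L}\wedge\bs{G}^1\wedge\cdots\wedge\bs{G}^k| = |L \cap Q|$, and $Q$ is a pinned set with at most $D - x < D$ pins, so it is itself a disjoint union of $x$-pinned sets that refine it (via Prop.~\ref{prop:pinsetdecomp}, since $|t'| \leq D - x$ means we can add pins to reach size $x$... wait, we need $|t'| \le x$; if $|t'| > x$ this direction fails). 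The cleaner route: since $|t'| \le D-x$, its complementary type has size $\geq x$, so $Q$ decomposes via Prop.~\ref{prop:pinsetdecomp} into disjoint $x'$-pinned sets with $x' = $ anything between $|t'|$ and $D$; choosing to refine $Q$ into a disjoint union of $(D-x)$-pinned sets (each a $z$-pinned set, hence $X$-stabilizer-commuting with $\bs{L}$), we get $|L\cap Q| = \sum |L \cap P_{t''}(s'')|$ over $z$-pinned sets $P_{t''}(s'')$, each term even because $\bs{L}$ commutes with every $X$-stabilizer. Hence the total is even. The obstacle is verifying that $Q$, a pinned set of type $t'$ with $|t'|\leq D-x$, genuinely refines into $z$-pinned sets — i.e. that $|t'| \le z = D-x$ and that we can always extend $t'$ to a type of size exactly $z$ — which holds precisely because $k \le \lfloor D/x\rfloor - 1$ forces $|t'|\le kx \le D - x = z$; I would lay out this counting carefully and confirm the edge cases $k = \lfloor D/x \rfloor - 1$ and small $x$.
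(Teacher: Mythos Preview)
Your final argument in the third paragraph is exactly the paper's proof: intersect the $k$ many $x$-pinned sets to get a pinned set $Q$ with at most $kx \le (\lfloor D/x\rfloor - 1)x \le D - x = z$ pins (Prop.~\ref{prop:interpinset}), refine $Q$ into a disjoint union of $z$-pinned sets (Prop.~\ref{prop:pinsetdecomp}), and use that each $z$-pinned set is a $Z$-stabilizer and hence has even overlap with the $X$-logical $\bs{L}$.

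Two remarks. First, the detours in your first two paragraphs are unnecessary: you never need to decompose $\bs{L}$ itself, only the intersection $Q$ of the stabilizer supports. Second, and more importantly, you repeatedly write that $\bs{L}$ ``commutes with all $X$-stabilizers'' and that $z$-pinned sets are ``$X$-stabilizer-commuting with $\bs{L}$''. This is a slip: $X$-type operators commute with each other trivially; the relevant fact is that $\bs{L}$, being an $X$-logical, commutes with every \emph{$Z$}-stabilizer, and the $z$-pinned sets are precisely the $Z$-stabilizer generators. That is why $|L\cap P_{t''}(s'')|$ is even. With that correction, your argument is complete and identical to the paper's.
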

\begin{IEEEproof}
	Indeed, using Prop.~\ref{prop:interpinset}, the overlap between $\lfloor D/x\rfloor - 1$ (or less) different $x$-pinned sets is either empty or a pinned set with at most $D-x = z$ pins. Hence by Prop.~\ref{prop:pinsetdecomp}, it can be decomposed into $z$-pinned sets, i.e. $Z$-stabilizers which have even overlap with $X$-logical operators by definition.
\end{IEEEproof}

Overlaps involving more than one $X$-logical operator do not have such guarantees in general.

Focusing on the case $\ell=3$, given the two propositions above the only problematic conditions are the ones of type
\begin{equation}
\left \vert \bs{L}^j\wedge\bs{L}^k\wedge \bs{G}^\ell\right \vert = 0 \pmod{2}.\label{eq:2LX}
\end{equation}
In order for these terms to hold, one has to have that the intersection of two $X$-logical operators is always a $Z$-logical operator.
This is the case for example for euclidean color codes.

In Sections~\ref{sub:puncturingexamples} and \ref{sec:ccz} we show how to use the multi-orthogonality properties of pinned-sets to get quantum codes with interesting transversal gates from any pin code relation. 
Another approach would be to restrict the pin code relations in order to devise subfamilies of quantum pin codes for which the condition \eqref{eq:2LX} is fulfilled.

\subsection{Boundaries and free pins}
\label{sec:bound}

The geometrical notion of colored boundaries existing for color codes can also be generalized to pin codes.
The way to do this is to introduce a specific type of pins which will be called \emph{free pins}.

Consider the chain complex approach to building pin code relations presented in Sec.~\ref{sec:chaincomplexeapproach}.
In this construction, it is sometimes necessary to add a rank-$0$ pin $b_0$ (in the level $L_0$) or a rank-$D$ pin $b_D$ (in the level $L_D$) in order to ensure that the relation $F$ is a pin code relation.
The new pin $b_0$ is linked to all the rank-$1$ pins which previously where linked to an odd number of rank-$0$ pins.
So even if the initial boundary relation is sparse, the number of connections to $b_0$ may be large.
As such the $1$-pinned set pinned by this new pin $b_0$ potentially contains a large number of flags.
To keep the size of the $1$-pinned sets under control it is then preferable to not allow to pin $b_0$ alone.
That is why we then call $b_0$ a free pin.
Any of the $D+1$ levels can contain free pins, the chain complex construction potentially put one in $L_0$ and one in $L_D$.
The rule for a larger collection of pins is that if it contains at least one non-free pin then it can define a valid pinned set, but if it is composed of only free pins then it is disregarded.
Finally consider when a flag is only composed of free pins, in that case this flag will not enter any valid pinned sets.
Hence such flags must also be discarded.
This is summarized in the following definition.

\begin{defi}[Pin code with free pins]
	Let $F$ be a pin code relation defined on $D+1$ levels of pins.
	Let some of the pins be labeled as free pins.
	Let $x$ and $z$ be two natural integers such that $x+z\leq D$.
	The associated $(x,z)$-pin code is defined as follows:
	The elements of $F$ containing at least one non-free pin are associated with qubits.
	All the $x$-pinned sets defined by a collection of pins containing at least one non-free pin are associated with $X$-stabilizer generators.
	All the $z$-pinned sets defined by a collection of pins containing at least one non-free pin are associated with $Z$-stabilizer generators.
\end{defi}

As examples we give a representation of Steane's $\llbracket 7,1,3\rrbracket $ code and the $\llbracket 4,2,2\rrbracket $ code as a $(1,1)$-pin codes with free pins in Figure~\ref{fig:steaneCC}.

\begin{figure}[h]
	\centering
	\includegraphics[height=.3\linewidth]{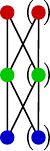}\qquad\qquad
	\includegraphics[height=.3\linewidth]{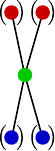}
	\caption{(Left) Representation of Steane's $\llbracket 7,1,3\rrbracket $ code as a $(1,1)$-pin code from a chain complex with free pins.
	There are three levels represented by the colors red green and blue.
	The free pins are represented between parenthesis.
	There are $8$ flags (all the paths going down from top to bottom) but one is composed only of free pins hence only $7$ qubits.
	There are three non-free pins defining three $1$-pinned sets for both $X$- and $Z$-stabilizers.
	(Right) Representation of the $\llbracket 4,2,2\rrbracket $ code as a $(1,1)$-pin code with free pins.
	There are four flags (all the paths going down from top to bottom), and a single non-free pin defining the $X$- and $Z$-stabilizer both containing the four flags.}
\label{fig:steaneCC}
\end{figure}

One idea to introduce free pins in every level could be to consider boundary map matrices which are almost sparse except for a small number of row or columns which could be dense.
The basis element corresponding to these would then be labeled as free pins in the construction of the pin code relation.

Note that in the presence of free pins, the proof of Prop.~\ref{prop:evenlogical} can only be reproduced when at least one level selected by the chosen type $T$ does not contain any free pin.
So as long as at least one level does not contain any free pin, it still holds that all logical operators have even weight.
When all levels contain at least one free pin then the code may contain odd weight logical operators.

The notion of free pins carries over straightforwardly to gauge pin codes.

\section{Examples and applications}
\label{sec:examples}
\subsection{Coxeter groups, hyperbolic color codes}
\label{sec:hyperbolic}
In Section~\ref{sec:construction_tilings} we discussed the construction of pin codes from tilings and Coxeter groups.
Well-known examples of such code families are color codes on euclidean tilings such as the hexagonal tiling in 2D and the bitruncated cubic honeycomb in 3D.
Using the Wythoff construction we can construct tilings which fulfill right pin code condition and therefore have the correct colorability for defining a color code.

\begin{figure}
	\centering
	\subfloat[]{{\includegraphics[width=.25\linewidth]{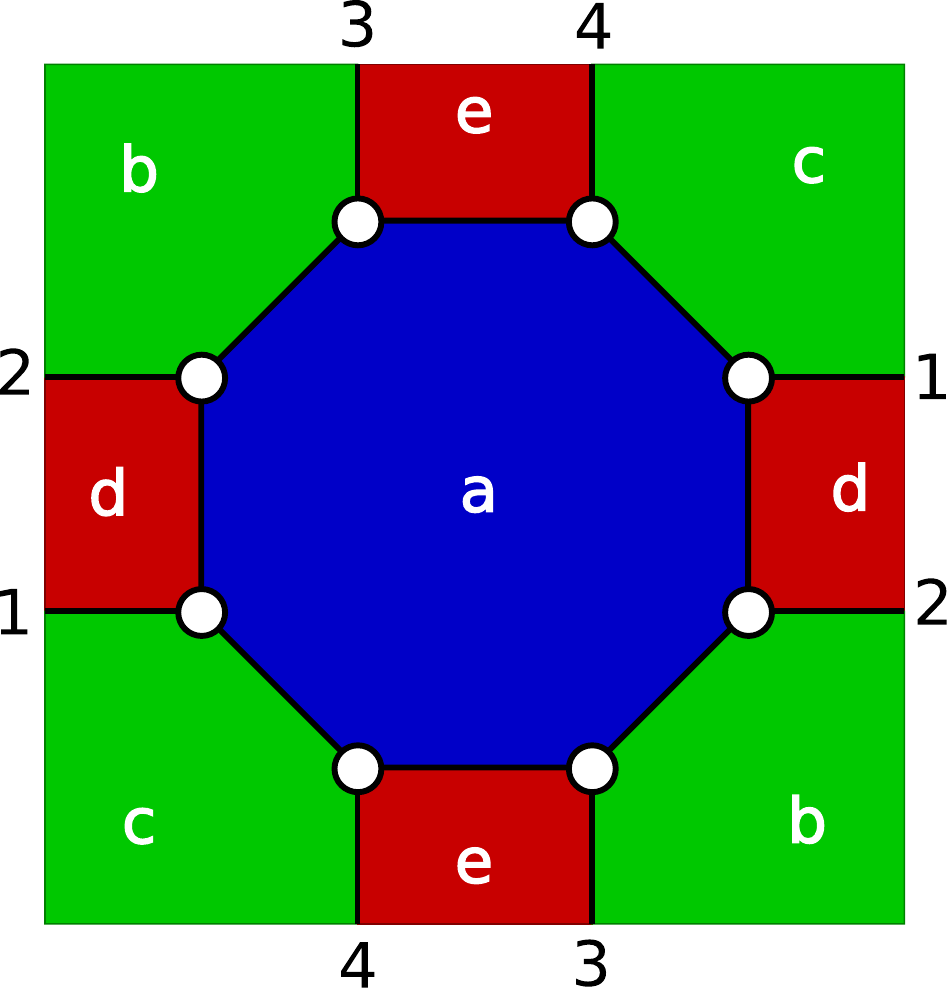}}%
		\label{fig:projective_plane_color_code}}\hfil
	\subfloat[]{{\includegraphics[width=.3\linewidth]{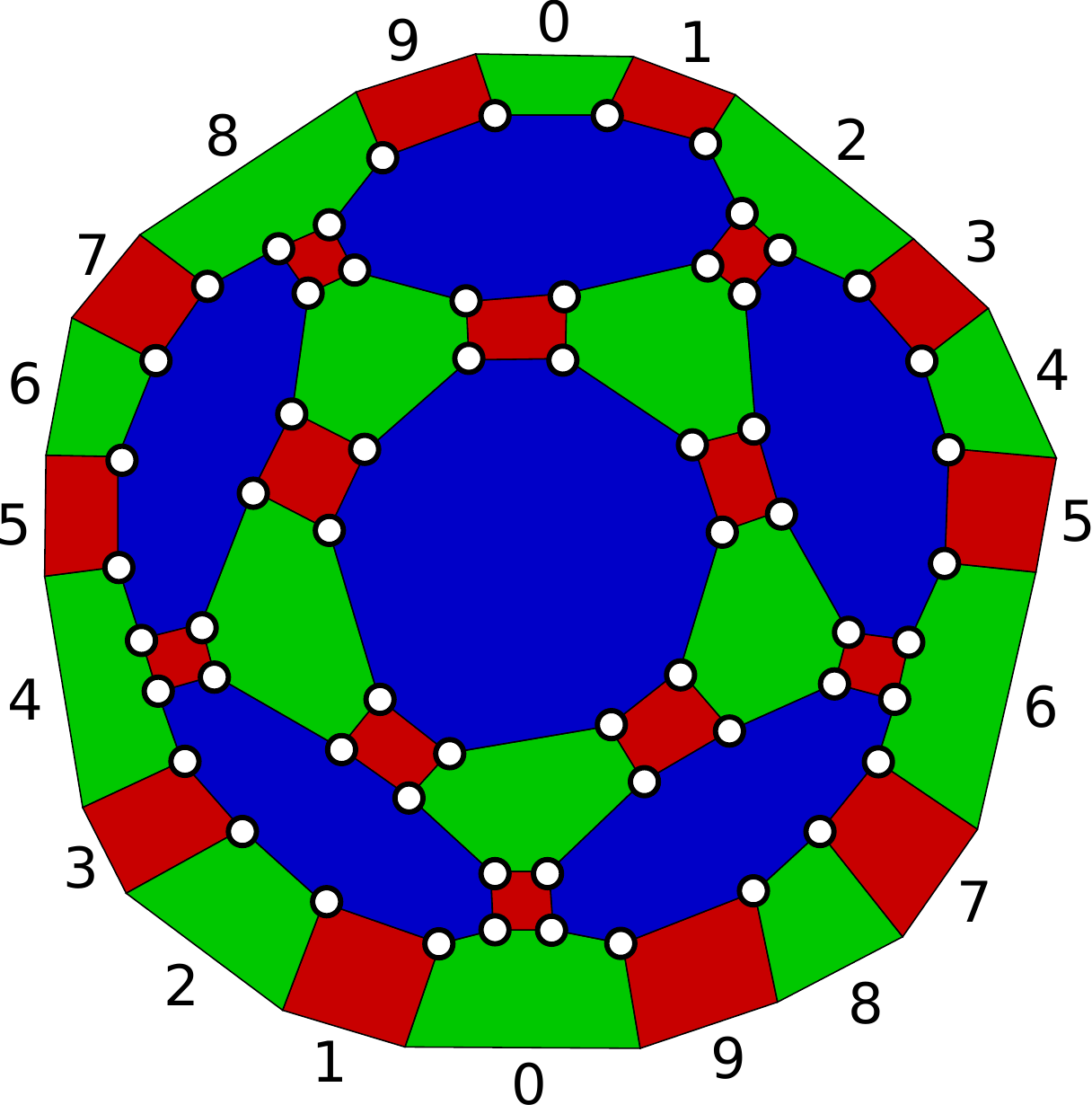}}%
		\label{fig:projective_plane_color_code_53}}\hfil
	\subfloat[]{\includegraphics[width=.25\linewidth]{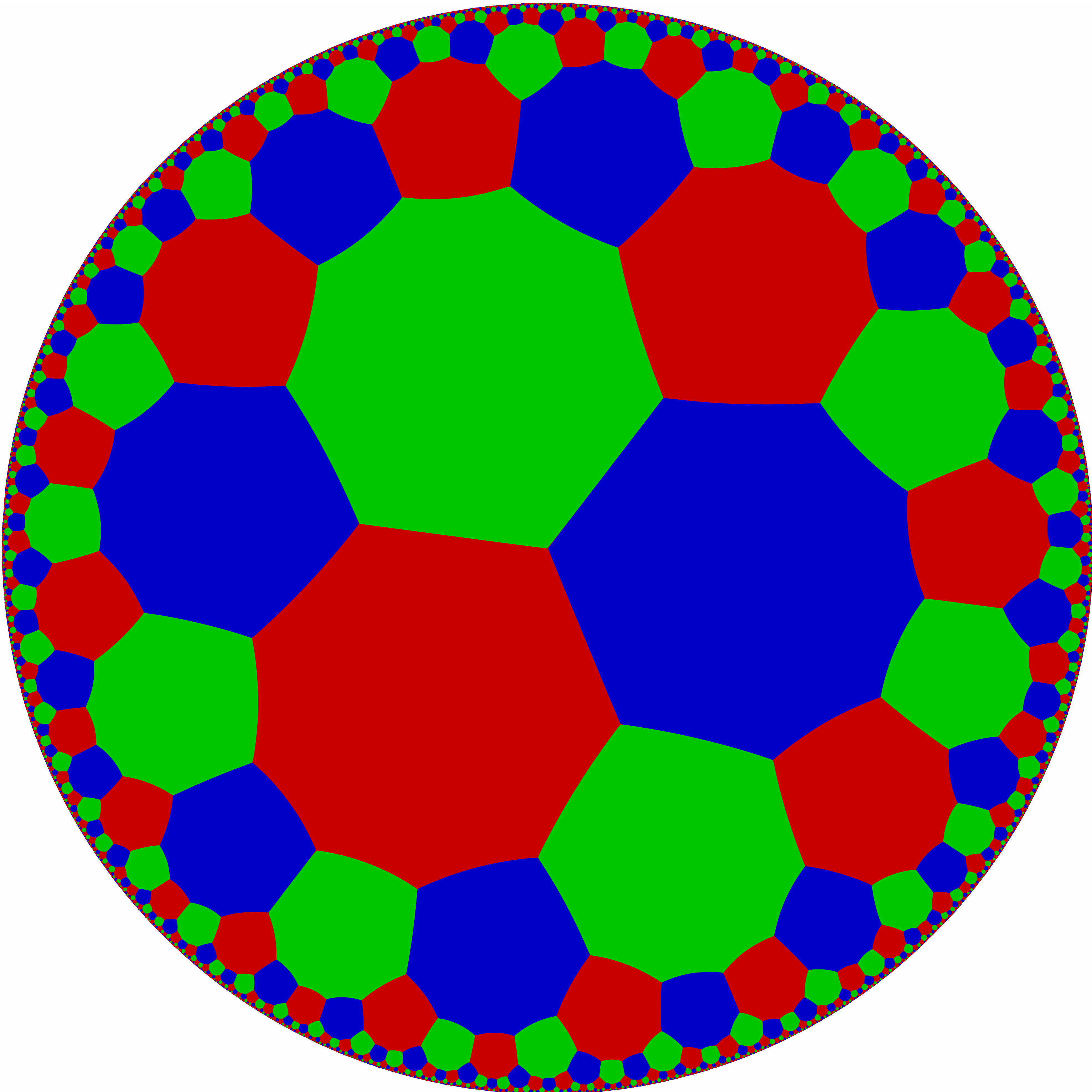}%
		\label{fig:hyperbolic_color_code}}
	\caption{(a)$\llbracket  8,2,2\rrbracket$   color code on the projective plane based on octahedral symmetry. The octahedron is topologically a sphere which can be turned into a projective plane by identifying opposite sides. 
		(b)~$\llbracket 60,2,6\rrbracket$   color code on the projective plane based dodecahedral symmetry.
		(c)~3-colored hyperbolic tiling with edges and vertices forming a 3-valent graph.
	}
	\label{fig:projective_plane}
\end{figure}

Besides the known euclidean examples we can consider tilings of more exotic spaces.
For the projective plane (cf.~\cite{freedman_projective_2001}) there exist two tilings based on the Wythoff construction: The first is based on the symmetry group of an octahedron.
It is an $\llbracket  8,2,2\rrbracket$  -code where the check generators correspond to one octagon, two red squares and two green squares, see Figure~\ref{fig:projective_plane_color_code}.
Note that this code does not quite fit the pin code definition because it contains distinct qubits which would be described by the same flag, for example $(d,c,a)$ on edge $1$.
This degeneracy explains why it escapes Prop.~\ref{prop:distance}.
The second is based on the icosahedral symmetry group, which gives a a $\llbracket  60,2,6\rrbracket$  -code with checks given by 6~decagons (blue), 10~hexagons (green) and 15~squares (red), see Figure~\ref{fig:projective_plane_color_code_53}.

Color codes based on two-dimensional hyperbolic tilings were first considered in \cite{delfosse_tradeoffs_2013} were 3-colorability and 3-valence was postulated (see Figure~\ref{fig:hyperbolic_color_code} for an example).
The Wythoff construction of  Section~\ref{sec:construction_tilings} allows us to obtain color codes from arbitrary regular tilings of closed hyperbolic surfaces.
To define a family of closed surfaces one needs to compactify the infinite lattice as explained in~\cite{breuckmann_constructions_2016}.
There are infinitely many regular tilings of 2D hyperbolic space. 
The lowest weight achievable with our construction is 4.8.10, meaning that checks are squares, octagons and dodecagons.
The smallest code in this family is $\llbracket  120,10,6\rrbracket$   based on a non-orientable hyperbolic surface (cf. Table 3.1 in \cite{breuckmann_homological_2017}).
Another small example is a $\llbracket  160,20,8\rrbracket$   code with stabilizer checks of weight~4 and~10 based on a 4.10.10 tiling of an orientable hyperbolic surface of genus~10.

Using the construction outlined in Section~\ref{sec:construction_tilings} we can consider any $D$-dimensional hyperbolic reflection group and obtain a tiling which is $D+1$-colorable and which has a $D+1$-valent graph.
In particular, we can consider hyperbolic tilings in 3D which are 4-colorable.
There exist four regular hyperbolic tilings in 3D of which two are self-dual tilings and two related by duality.
The self-dual ones are a tiling  by dodecahedra, denoted $\{5,3,5\}$, and one by icosahedra, denoted $\{3,5,3\}$.
The other are a tiling by cubes $\{4,3,5\}$ and its dual $\{5,3,4\}$.
All of these give rise to codes with maximum stabilizer weight 120.
Here we will focus on the $\{5,3,5\}$-tiling, which is the unique self-dual tiling of space by dodecahedra where five dodecahedra are placed around an edge.
Performing the Wythoff construction on a family of closed manifolds, all equipped with a $\{5,3,5\}$-tiling yields a code family where checks are of weight 20 and 120.
The weight of the stabilizer is given by the order of the subgroup of the full reflection group which is generated by all except for one of the generators.
The smallest example is a $\llbracket 7200,5526,4\rrbracket $ code.

\subsection{Pin codes from chain complexes}
\label{sec:hp}
In Sec.~\ref{sec:chaincomplexeapproach} we showed how from any $\mathbb{F}_2$ chain complex one can construct a pin code relation.
In this section we explore some specific examples of chain complexes and the corresponding pin codes.

One way to obtain arbitrary length chain complexes  is to use repeatedly the hypergraph product with a classical code.
The hypergraph product was introduced in \cite{tillich_quantum_2014} as a way to turn any two classical codes into a quantum code.
This product can be viewed as the tensor product of chain complexes, which takes two length-$2$ chain complexes to a length-$3$ chain complex.
More generally the product of a length-$k_1$ and length-$k_2$ chain complexes yields a length-$(k_1+k_2-1)$ chain complex.
This generalization and its characteristics has been studied in the context of homological codes \cite{audoux_tensor_2019, campbell_theory_2019, zengHigherDimensionalQuantumHypergraphProduct2019}.
We consider here the approach of \cite{zengHigherDimensionalQuantumHypergraphProduct2019} but look at the resulting chain complexes from the point of view of pin codes.

The idea goes as follows: consider $\mathcal{A}$, a $\mathbb{F}_2$ chain complex of length $k$, characterized by $\mathbb{F}_2$-vector spaces $(\mathcal{A}_j)_{0\leq j \leq k-1}$ and $(k-1)$ boundary maps $\partial^\mathcal{A}_j: \mathcal{A}_j \rightarrow \mathcal{A}_{j-1}$, obeying \eqref{eq:boundarymap}.
We now take the product with a chain complex of length $2$.
Note that any two vector spaces, $\mathcal{B}_1$ and $\mathcal{B}_0$ and any linear map between them $\partial^\mathcal{B}: \mathcal{B}_1\rightarrow\mathcal{B}_0$ defines a length-$2$ chain complex.
The product, $\mathcal{C}=\mathcal{A}\otimes\mathcal{B}$, is defined by $(k+1)$ vector spaces $\mathcal{C}_j$ for $0\leq j \leq k$,
\begin{equation}
\mathcal{C}_{j}=(\mathcal{B}_1\otimes\mathcal{A}_{j-1})\oplus(\mathcal{B}_0\otimes\mathcal{A}_{j}),
\end{equation}
with the convention that $\mathcal{A}_{-1}$ and $\mathcal{A}_k$ are both the zero vector spaces.
And the $k$ boundary maps, $\partial_j^\mathcal{C}:\mathcal{C}_j\rightarrow\mathcal{C}_{j-1}$, are defined as
\begin{align}
\forall u &= v\oplus w \in (\mathcal{B}_1\otimes\mathcal{A}_{j-1})\oplus(\mathcal{B}_0\otimes\mathcal{A}_{j})\nonumber\\
\partial_{j}^\mathcal{C}(u) &= (\id_{\mathcal{B}_1}\otimes \partial^\mathcal{A}_{j-1} + \partial^\mathcal{B}\otimes\id_{\mathcal{A}_{j-1}})(v) + (\id_{\mathcal{B}_0}\otimes \partial^\mathcal{A}_{j})(w).
\end{align}
One straightforwardly checks that the $\partial^\mathcal{C}_j$ are valid boundary maps, i.e. obeying \eqref{eq:boundarymap}.

Repeatedly taking the product with a length-$2$ chain complex therefore increase the length of the resulting chain complex each time by one.
Moreover any binary matrix defines a valid $\mathbb{F}_2$ chain complex of length $2$ so this approach allows to explore numerically many pin codes.

\begin{figure}[h]
	\centering
	\includegraphics[width=\linewidth]{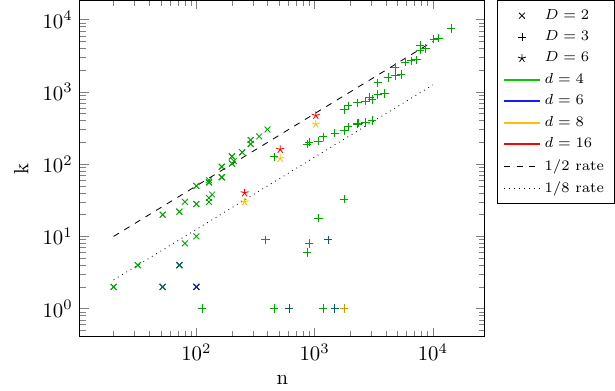}
	\caption{Plot of the $\llbracket n,k,d\rrbracket $ parameters of pin codes from chain complexes described in this section.
				One application of the hypergraph product on $3\times4$ binary matrices yields the $D=2$ pin codes represented by `$\times$' and two applications the $D=3$ pin codes represented by `$+$'.
				The $D=6$ pin codes described in Table.~\ref{tab:dim6pincodes} are represented by stars.
				The colors indicate the distance of the codes.
				The dashed and dotted lines represent the rates $k/n=1/2$ and $k/n=1/8$, respectively.}
	\label{fig:plotnk}
\end{figure}
\begin{figure}[h]
	\centering
	\includegraphics[width=\linewidth]{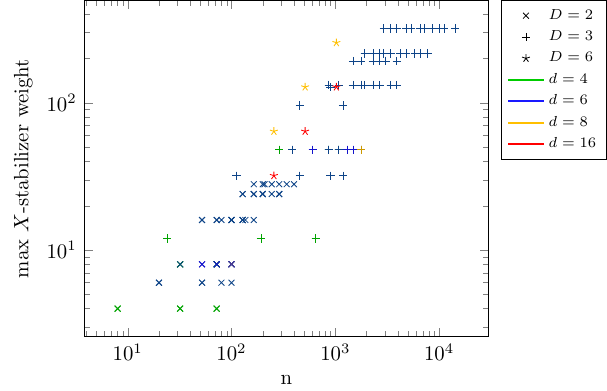}
	\caption{Plot of the maximum $X$-stabilizer weight of the pin codes from chain complexes described in this section.}
	\label{fig:plotnwx}
\end{figure}
We have looked at small binary matrices, up to $3\times 4$, and their self product to form pin code relations with $D=2$ and $D=3$.
We plot in Fig.~\ref{fig:plotnk} the code parameters obtained $\llbracket n,k,d\rrbracket $.
The distance is upper-bounded by numerically finding instances of low-weight logical operators.
Strinkingly these codes seem to show a general trend of high encoding rate $k/n$ for a small distance.
Indeed most of them are around $1/2$ rate but just distance $4$ which is the lower bound guaranteed by Prop.~\ref{prop:distance}.
A few of them seem to reach distance $6$ or $8$ but for significantly smaller rates.
The codes yielding no logical qubits are not displayed in this plot.
Note that this procedure is far from generating all chain complex of a given length.

As a comparison note that in \cite{grassl_codetables_2007} one can find codes with parameters such as $\llbracket104, 52, 10\rrbracket$, $\llbracket 104, 26, 15\rrbracket$ or $\llbracket 104, 13, 19\rrbracket$.
This is to illustrate that even at these high rates ($1/2$, $1/4$ and $1/8$) the distance can be higher.
Although these codes have larger stabilizer weights.

We have also looked at a few pin code relations for $D=6$ using small even size levels and the complete relation for~$F$.
Three notable examples are presented in Table.~\ref{tab:dim6pincodes}.
When writing $2^{\times6}\times4$ we mean that 6 of the levels contain each~$2$ pins and the last one contains $4$.
Since we use the complete relation, the number of flags and the size of the pinned sets are easily computed as a product of the size of some levels.
The number of logical qubits is computed numerically and for the distance we numerically found low-weight logical operators matching the lower bound of Prop.~\ref{prop:distance}.
All these examples give a good indication that the lower bound on the distance is tight.
\begin{table}[h]
	\centering
	\renewcommand{\arraystretch}{1.3}
	\caption{Parameters of some $D=6$ pin codes using the complete relation described by the size of the $D+1=7$ levels.}
	\label{tab:dim6pincodes}
	\begin{tabular}{c c c c }
		$(x,z)$ & $2^{\times6}\times4$ & $2^{\times5}\times4^{\times2}$ & $2^{\times4}\times4^{\times3}$ \\
		\hline
		$(2,4)$ &$\llbracket 256,30,8\rrbracket $& $\llbracket 512,120,8\rrbracket $ & $\llbracket 1024,358,8\rrbracket $\\
		$(3,3)$ &$\llbracket 256,40,16\rrbracket $& $\llbracket 512,160,16\rrbracket $ & $\llbracket 1024,472,16\rrbracket $
	\end{tabular}
\end{table}

We also represent the maximum weight of the $X$-stabilizers for these codes in Fig.~\ref{fig:plotnwx}.
When checking for transversal phase gates for $\ell=3$, most of the codes examined above do not satisfy \eqref{eq:2LX}.

\subsection{Puncturing triply-even spaces}
\label{sub:puncturingexamples}
\begin{figure}[h]
	\centering
	\includegraphics[width=.8\linewidth]{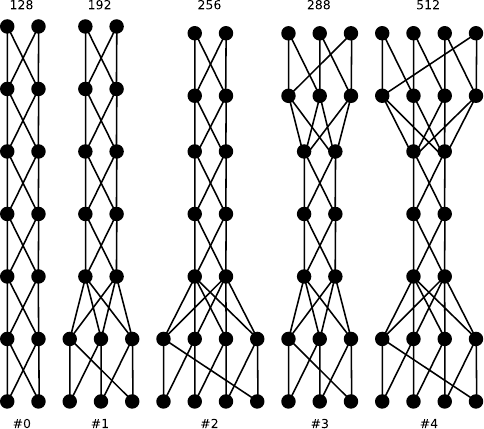}

	\caption{Some possible variations on Reed-Muller codes seen as pin codes on a chain complexes, by modifying the ends of the chain complex, the left most chain complex is used to define Reed-Muller codes.
	Each row of dots represent a level and the relation $F$ is given by all paths going down from top to bottom.
	Example of dimension $6$ are shown here, making variations on $\mathcal{RM}(2,7)$.
	The number of flags is written above each chain complex and an identification number is given below.}
	\label{fig:EiffelTCC}
\end{figure}

If pin codes in general are not guaranteed to fulfill all the requirements of Prop.~\ref{prop:transv} or Prop.~\ref{prop:quasitransv}, their stabilizers always form multi-orthogonal spaces, see Prop.~\ref{prop:stabmultiorth}.
This is directly useful as multi-orthogonal spaces together with puncturing techniques can be used to construct codes fulfilling Prop.~\ref{prop:quasitransv} (or Prop.~\ref{prop:transv} if the space is multi-even), see Section~\ref{sub:puncturingtechnique}.
We focus here on 3-even spaces (Def.~\ref{def:multieven}) and 3-orthogonal spaces (Def.~\ref{def:multiorth}) also referred to as triply-even spaces and tri-orthogonal spaces respectively.

In \cite{haah_codes_2018}, the authors use Reed-Muller codes, $\mathcal{RM}(r,m)$, to obtain initial tri-orthogonal spaces (also triply-even).
Viewed as generated by pinned-set, $\mathcal{RM}(r,m)$ comes from a simple chain complex, represented on the left of Figure~\ref{fig:EiffelTCC}, see also Section~\ref{sub:reedmuller}.
This chain complex can be modified in several ways to obtain different pin codes.
We tried different modifications in the case $D=6$, they are represented in Figure~\ref{fig:EiffelTCC}.
For all of them the $2$-pinned sets generate a triply-even space.

We tried to randomly puncture the pin codes obtained from these chain complexes; similarly to \cite{haah_codes_2018} but without deploying the more advanced techniques.
We were able to find a few interesting codes this way, see Table~\ref{tab:puncturedcodes}, which can be used to distill $T$ magic states.
The obtained parameters $\gamma$, see \eqref{eq:gamma}, are similar but do not improve on the small examples found in \cite{haah_codes_2018}.
\begin{table}[h]
	\centering
	\renewcommand{\arraystretch}{1.3}
	\caption{Some tri-orthogonal codes found by randomly puncturing the pin codes represented in Figure~\ref{fig:EiffelTCC}.}
	\label{tab:puncturedcodes}
	\begin{tabular}{c c c c}
		code \# & initial $n$ & punctured code: $\llbracket n,k,d\rrbracket $ & $\gamma = \frac{\ln\left (n/k\right )}{\ln d}$\\
		\hline
		$0$ &$128$& $\llbracket 116,12,4\rrbracket $ & $1.64$ \\
		$1$ &$192$& $\llbracket 175,17,4\rrbracket $ & $1.68$ \\
		$2$ &$256$& $\llbracket 236,20,4\rrbracket $ & $1.78$ \\
		$3$ &$288$& $\llbracket 261,27,4\rrbracket $ & $1.64$ \\
		$4$ &$512$& $\llbracket 466,46,4\rrbracket $ & $1.67$
	\end{tabular}
\end{table}

\subsection{Logical circuits of CCZs}
\label{sec:ccz}

It is also possible to use the property of multi-orthogonality of pinned sets on a given pin code relation in a slightly different way.
The construction proposed for Reed-Muller codes in \cite{rengaswamy_optimality_2020} can be directly adapted to general pin code relations.

\begin{prop}
	Let $\ell>1$ and $x>1$ be two integers and $F$ be a pin code relation on $x\ell$ sets ($D = x\ell-1$).
	The CSS code defined by $(x-1)$-pinned sets as $X$-stabilizer generators and $x$-pinned sets as $X$-logical operators satisfies Proposition~\ref{prop:quasitransv}.
\end{prop}
\begin{IEEEproof}
	We need for the condition $(i)$ to verify first that the $(x-1)$-pinned sets form a $\ell$-orthogonal space.
	According to Proposition~\ref{prop:stabmultiorth}, they form a $\left \lfloor \frac{D}{x-1}\right \rfloor$-orthogonal space and we can compute 
	\begin{equation}
		\left \lfloor \frac{D}{x-1}\right \rfloor = \left \lfloor\frac{x\ell-1}{x - 1}\right \rfloor = \left \lfloor\ell + \frac{\ell-1}{x-1}\right \rfloor\geq \ell.
	\end{equation}
	Hence $(i)$ is satisfied.
	For the condition $(ii)$ we use the intersection property (Prop.~\ref{prop:interpinset}):
	Taking two integers $s$ and $t$ such that $s\geq 1$, $t\geq 1$ and $s+t\leq\ell$, the intersection of $s$ $x$-pinned sets with $t$ $(x-1)$-pinned sets, gives a pinned sets with at most $(\ell -1)x+(x-1) = x\ell-1 = D$ pins.
	Since $F$ is a pin code relation $D$-pinned sets are always even.
	This proves $(ii)$.	
\end{IEEEproof}

Note that the parameters $D$ and $x$ where chosen not only to obtain a code guaranteed to satisfy Prop.~\ref{prop:quasitransv} but also where the logical operation realized belongs strictly to the level $\ell$ of the Clifford hierarchy because  some intersections of $\ell$ $X$-logical operators will be exactly one.
Indeed $X$-logical operators are $x$-pinned sets so intersecting $\ell$ of them can give (among other things) any pinned set with $x\ell = D+1$ pins which are either empty or singletons of one flag.

We can for example adapt the pin code relations presented in Fig.~\ref{fig:EiffelTCC} to have the correct dimension $D$ by inserting or removing levels of size $2$ in the middle of the chain complexes and look at what code parameters they give.
These parameters are compiled in Table.~\ref{tab:cczscodes}, for $D=5$ we remove the middle level and for $D=8$ we add two levels of size $2$ compared to Fig.~\ref{fig:EiffelTCC}.
All these codes support the transversal $T$ and up to a Clifford correction the logical operation implemented is some circuit of CC$Z$s characterized by which triple of $X$-logical operators have an odd overlap.

\begin{table}
	\centering
	\renewcommand{\arraystretch}{1.3}
	\caption{Alternative construction of CSS codes with transversal $T$ implementing some circuit of CC$Z$ gates on the logical level.}
	\label{tab:cczscodes}
		\begin{tabular}{c c}
		$D=5, x=2$ & $D=8, x=3$\\
		\hline
		$\llbracket 64,     15,   4\rrbracket $ & $\llbracket 512,   84,   8\rrbracket $\\
		$\llbracket 96,     23,   4\rrbracket $ & $\llbracket 768,   126, 8\rrbracket $\\
		$\llbracket 128,   31,   4\rrbracket $ & $\llbracket 1024, 168, 8\rrbracket $\\
		$\llbracket 144,   35,   4\rrbracket $ & $\llbracket 1152, 188, 8\rrbracket $\\
		$\llbracket 256,   63,   4\rrbracket $ & $\llbracket 2048, 332, 8\rrbracket $
	\end{tabular}
\end{table}

\section{Discussion}

Quantum pin codes form a large family of CSS codes which we have just begun to explore.
These codes can be viewed as a vast generalization of quantum color codes and the notions of boundaries, colored logical operators and shrunk lattices all generalize to pin codes.
Pin codes also have a gauge code version with potentially similar advantage as the gauge color codes.
The main property of pin codes is that their $X$- and $Z$-stabilizers form multi-orthogonal spaces.
We have presented two concrete ways of constructing pin codes and numerically explored some examples.
Several aspects of pin codes merit further studying.

First is finding restricted families with good parameters and LDPC property.
Exploring other finite groups with even order generators, other families of sparse chain complexes or finding other constructions of pin code relations altogether would help figuring out the achievable parameters for pin codes.

Second one concerns logical operators.
Understanding if some conditions on the pin code relation $F$ can make the logical operators fulfill the second condition of Prop.~\ref{prop:transv} or Prop.~\ref{prop:quasitransv} would help in the design of codes with transversal gates.
Also, logical operators and boundaries of 2D color codes have a richer structure than the colored logical operators and boundaries that we have explored, it would be interesting to generalize to pin codes with $D=2$ all the ones presented in~\cite{kesselring_boundaries_2018}, as well as for larger $D$.
Moreover, the structure of colored logical operators plays a key role in decoding color codes \cite{delfosse_decoding_2014, aloshious_projecting_2018, kubica_three-dimensional_2018, kubica_efficient_2019}.
Understanding if it can help in finding efficient decoders for more general pin codes is a natural question.

Finally more extensively exploring tri-orthogonal spaces obtained from pin code relations and puncturing them to obtain good $T$ distillation protocols as well as using them as the basis for $T$-to-CC$Z$ or other protocols seems promising, as distilling magic state will constitute a sizable fraction of any fault-tolerant quantum computation.

\appendices

\section{Proofs of Section~\ref{sec:multiorthogonality}}
\label{sec:proofmultiorthogonality}
This Appendix contains examples and proofs omitted in Section~\ref{sec:multiorthogonality}.

We first restate and prove Lemma~\ref{lem:idbinadd}:
\begin{replem}{lem:idbinadd}[Binary addition and integer addition]\label{lem:idbinaddbis}
	Denote binary addition with $\oplus$ and integer addition with $+$ or $\sum$.
	Given $r$ binary vectors $\bs{w}^1,\ldots,\bs{w}^r\in\mathbb{F}_2^n$ it holds that
	\begin{equation}
		\bigoplus_{m=1}^r\bs{w}^m = \sum_{s=1}^r(-2)^{s-1}\sum_{1\leq m_1<\cdots<m_s\leq r}\,\bigwedge_{i=1}^s\bs{w}^{m_i}.\label{eq:idbinaddbis}
	\end{equation}
	Similarly for their Hamming weights
	\begin{equation}
		\left \vert\bigoplus_{m=1}^r\bs{w}^m\right \vert = \sum_{s=1}^r(-2)^{s-1}\sum_{1\leq m_1<\cdots<m_s\leq r}\,\left \vert\bigwedge_{i=1}^s\bs{w}^{m_i}\right \vert.\label{eq:idbinaddhweightbis}
	\end{equation}
\end{replem}
It can be useful to work out an example for \eqref{eq:idbinaddbis} and \eqref{eq:idbinaddhweightbis}, here is one with $r=3$ and $n=4$
\begin{equation}
\begin{pmatrix}
0\\0\\0\\1
\end{pmatrix}\oplus
\begin{pmatrix}
0\\0\\1\\1
\end{pmatrix}\oplus
\begin{pmatrix}
0\\1\\1\\1
\end{pmatrix} = 
\begin{pmatrix}
0\\1\\0\\1
\end{pmatrix}, 
\end{equation}
which can also be computed in the following way
\begin{eqnarray}
\left [
\begin{pmatrix}
0\\0\\0\\1
\end{pmatrix} + \begin{pmatrix}
0\\0\\1\\1
\end{pmatrix} + \begin{pmatrix}
0\\1\\1\\1
\end{pmatrix} \right ] &\\- 2\left [
\begin{pmatrix}
0\\0\\0\\1
\end{pmatrix} + \begin{pmatrix}
0\\0\\0\\1
\end{pmatrix} + \begin{pmatrix}
0\\0\\1\\1
\end{pmatrix} \right ]& \\+ 4\begin{pmatrix}
0\\0\\0\\1
\end{pmatrix}&= 
\begin{pmatrix}
0\\1\\0\\1
\end{pmatrix} .
\end{eqnarray}
\begin{IEEEproof}
	Since the Hamming weight of a binary vector is simply the integer sum of its components we immediately have that $\eqref{eq:idbinaddbis}\Rightarrow\eqref{eq:idbinaddhweightbis}$.
	Moreover since the operations in \eqref{eq:idbinaddbis} are component-wise operations it is sufficient to prove it for $r$ binary numbers $w^1,\ldots,w^r\in\mathbb{F}_2$.
	One directly checks that for $r=2$
	\begin{equation}
		w^1\oplus w^2 = w^1 + w^2 - 2 w^1\wedge w^2.\label{eq:idbinaddinit}
	\end{equation}
	By induction suppose that \eqref{eq:idbinaddbis} holds for some $r\geq2$, then we have
	\begin{align}
		\bigoplus_{m=1}^{r+1}w^m =& \left (\bigoplus_{m=1}^rw^m\right )\oplus w^{r+1}\nonumber\\
		=& \left (\bigoplus_{m=1}^rw^m\right ) + w^{r+1} - 2\left (\bigoplus_{m=1}^rw^m\right )\wedge w^{r+1}\nonumber\\
		=& \sum_{s=1}^r(-2)^{s-1}\sum_{1\leq m_1<\cdots<m_s\leq r}\,\bigwedge_{i=1}^s\bs{w}^{m_i} + w^{r+1} \nonumber\\&- 2 \left (\sum_{s=1}^r(-2)^{s-1}\!\sum_{1\leq m_1<\cdots<m_s\leq r}\,\bigwedge_{i=1}^s\bs{w}^{m_i}\right )\wedge w^{r+1}\nonumber\\
		=&\sum_{s=1}^{r+1}(-2)^{s-1}\sum_{1\leq m_1<\cdots<m_s\leq r+1}\,\bigwedge_{i=1}^s\bs{w}^{m_i},
	\end{align}
	where we first used \eqref{eq:idbinaddinit} then the induction hypothesis then the distributivity of $\wedge$.
\end{IEEEproof}
We can now restate and prove Proposition~\ref{prop:equmultieven}:
\begin{repprop}{prop:equmultieven}[Characterization of multi-even spaces]\label{prop:equmultievenbis}
	Given $\ell\in\mathbb{N}$, a subspace $\mathcal{C}\subset\mathbb{F}_2^n$ is $\ell$-even if and only if for any integer $s\in\{1,\ldots,\ell\}$ and any $s$-tuple of vectors, $(\bs{v}^1,\ldots,\bs{v}^s)\in\mathcal{C}^s$, it holds that
	\begin{equation}
		\left|\bs{v}^1\wedge\cdots\wedge\bs{v}^s\right| = 0 \pmod{2^{\ell - s +1}}.\label{eq:multievenbis}
	\end{equation}
\end{repprop}
\begin{IEEEproof}
	The characterization given in Proposition~\ref{prop:equmultieven} implies the one given in Definition~\ref{def:multieven}, hence it suffice to prove the converse.
	The case $s=1$ is trivially verified, now suppose that for some $1\leq s < \ell$ one has for all $t\leq s$ and all $(\bs{v}^1,\ldots,\bs{v}^t)\in\mathcal{C}^t$
	\begin{equation}
		\left|\bs{v}^1\wedge\cdots\wedge\bs{v}^t\right| = 0 \pmod{2^{\ell - t +1}}.
		\label{eq:strongind}
	\end{equation}
	Take $s+1$ vectors $(\bs{v}^1,\ldots,\bs{v}^{s+1})\in\mathcal{C}^{s+1}$, using \eqref{eq:idbinaddhweightbis} from Lemma~\ref{lem:idbinadd} we can write
	\begin{align}
		(-2)^s\left|\bs{v}^1\wedge\cdots\wedge\bs{v}^{s+1}\right| =& \left \vert\bigoplus_{m=1}^{s+1}\bs{w}^m\right \vert \nonumber\\- \sum_{u=1}^{s}(-2)^{u-1}\!&\sum_{1\leq m_1<\cdots<m_u\leq s+1}\,\left \vert\bigwedge_{i=1}^u\bs{w}^{m_i}\right \vert\nonumber\\
		=& 0 - 0\pmod{2^\ell},
	\end{align}
	using \eqref{eq:strongind} to verify the divisibility by $2^\ell$ of the second part.
	Hence we have
	\begin{equation}
		\left|\bs{v}^1\wedge\cdots\wedge\bs{v}^{s+1}\right| = 0 \pmod{2^{\ell-s}}.
	\end{equation}
\end{IEEEproof}

Finally we restate and prove Proposition~\ref{prop:basisverif}:
\begin{repprop}{prop:basisverif}[multi-even/orthogonal space verification on a basis]\label{prop:basisverifbis}
	It is necessary and sufficient to verify \eqref{eq:multieven} or \eqref{eq:multiorth} on a basis to ensure that a space is multi-even or multi-orthogonal respectively.
\end{repprop}
\begin{IEEEproof}
	As basis vectors belong to the space it is immediate to check that they necessarily verify \eqref{eq:multieven} or \eqref{eq:multiorth} respectively, if the space is multi-even or multi-orthogonal respectively.
	The sufficient proof for both the multi-even and multi-orthogonal cases relies on decomposing vectors over the basis $\mathcal{B}=\left \{\bs{b}^1,\ldots,\bs{b}^m\right \}$ which verifies the property (either \eqref{eq:multieven} or \eqref{eq:multiorth}) and using \eqref{eq:idbinaddhweight} several times.
	Pick $s\in\{1,\ldots,\ell\}$ and $s$ vectors $\left (\bs{v}^1,\ldots,\bs{v}^s\right )\in\mathcal{C}^s$.
	Each of the vectors $\bs{v}^j$ can be decomposed over $\mathcal{B}$:
	\begin{equation}
		\bs{v}^j = \bigoplus_{i=1}^{p_j}\bs{b}^{k^j_i},
	\end{equation}
	where $p_j$ is the number of basis elements in the decomposition of $\bs{v}^j$ over $\mathcal{B}$ and the $k^j_i$ their indices.
	We can now rewrite the Hamming weight of the element-wise product by using the decomposition of the $\bs{v}^j$ and \eqref{eq:idbinaddhweight} one after the other.
	This is done in \eqref{eq:bigwedgesum}.
	\begin{figure*}[!t]
		% ensure that we have normalsize text
		\normalsize
		% Store the current equation number.
		\setcounter{MYtempeqncnt}{\value{equation}}
		% Set the equation number to one less than the one
		% desired for the first equation here.
		% The value here will have to changed if equations
		% are added or removed prior to the place these
		% equations are referenced in the main text.
		\setcounter{equation}{77}
	\begin{align}
		\left \vert \bs{v}^{1}\wedge\bs{v}^2\wedge\cdots\wedge\bs{v}^s\right \vert =& \left \vert \bigoplus_{i=1}^{p_1}\bs{b}^{k^1_i}\wedge\bs{v}^2\cdots\wedge\bs{v}^{s}\right \vert\nonumber\\
		=&\sum_{q_1=1}^{p_1}(-2)^{q_1-1}\sum_{1\leq m^1_1<\cdots<m^1_{q_1}\leq p_1}\left \vert \bigwedge_{r=1}^{q_1}\bs{b}^{k^1_{m^1_r}}\wedge\bs{v}^2\cdots\wedge\bs{v}^{s}\right \vert\nonumber\\
		=&\sum_{q_1,\ldots ,q_s=1}^{p_1,\ldots,p_s}(-2)^{\sum_{j=1}^s(q_j-1)}\sum_{\substack{1\leq m^1_1<\cdots<m^1_{q_1}\leq p_1\\\cdots\\1\leq m^s_1<\cdots<m^s_{q_s}\leq p_s}}\left \vert \bigwedge_{r=1}^{q_1}\bs{b}^{k^1_{m^1_r}}\wedge\cdots\bigwedge_{r=1}^{q_s}\bs{b}^{k^s_{m^s_r}}\right \vert.\label{eq:bigwedgesum}
	\end{align}
	% Restore the current equation number.
	\setcounter{equation}{\value{MYtempeqncnt}}
	% The IEEE uses as a separator
	\hrulefill
	% The spacer can be tweaked to stop underfull vboxes.
	\vspace*{4pt}
\end{figure*}
	
	Considering expression \eqref{eq:bigwedgesum}, the case of the multi-orthogonal property is the most straightforward.
	In this case we assume that \eqref{eq:multiorth} holds for the elements of $\mathcal{B}$ and choose $s=\ell$.
	Elements of the sum in \eqref{eq:bigwedgesum} with $\ell$ basis vector in the element-wise product will be even by \eqref{eq:multiorth} and the other elements have a prefactor of two.	
	This proves that $\left \vert \bs{v}^1\wedge\cdots\wedge\bs{v}^\ell\right \vert=0\pmod 2$ and that the space is $\ell$-orthogonal.
	
	For the case of the multi-even property, we assume that \eqref{eq:multieven} holds for the elements of $\mathcal{B}$.
	Elements of the sum in \eqref{eq:bigwedgesum} where $\sum q_j \leq \ell$ have a Hamming weight for the element-wise product which is zero modulo $2^{\ell - \sum q_j +1}$ by \eqref{eq:multieven}.
	Moreover they are multiplied by $2^{\sum q_j - s}$ and hence are zero modulo $2^{\ell - s +1}$.
	The other terms have a prefactor of $2^{\ell -s +1}$.
	Hence $\left \vert \bs{v}^1\wedge\cdots\wedge\bs{v}^s\right \vert=0\pmod {2^{\ell-s+1}}$ and the space is $\ell$-even.
	
\end{IEEEproof}

\section{Quasi-transversality}
\label{sec:multiorth}

In this appendix we detail the three weighted polynomial in \eqref{eq:Flx}, \eqref{eq:Fly} and {eq:Flxy} which determine the transversal action of $R_\ell$ on code states.
Using identity \eqref{eq:idbinadd} and denoting $\bs{L}^m$ as the $m^\text{th}$ row of matrix $L$ and $\bs{G}^n$ as the $n^\text{th}$ row of matrix $G$ we can write 
\begin{align}
F_\ell(\bs{x}) &= \left \vert\bs{x}L\right \vert\nonumber\\
&= \sum_{s=1}^\ell(-2)^{s-1}\sum_{1\leq m_i\leq k}\,\left\vert\bigwedge_{i=1}^s \bs{L}^{m_i}\right \vert\prod_{i=1}^sx_{m_i},\\[1em]
F^\prime_\ell(\bs{y}) &= \left \vert\bs{y}G\right \vert \nonumber\\
&= \sum_{t=1}^\ell(-2)^{t-1}\sum_{1\leq n_j\leq r}\,\left\vert\bigwedge_{j=1}^t \bs{G}^{n_j}\right \vert\prod_{j=1}^ty_{n_j},\\[1em]
F^{\prime\prime}_\ell(\bs{x}, \bs{y}) &= -2\left \vert\bs{x}L\wedge\bs{y}G\right \vert = \nonumber\\
\sum_{\substack{s+t=2\\s\geq1,\,t\geq1}}^{\ell}(-2&)^{s+t-1}\sum_{\substack{1\leq m_i\leq k\\1\leq n_j\leq r}}\,\left\vert\bigwedge_{i=1}^s \bs{L}^{m_i}\bigwedge_{j=1}^t\bs{G}^{n_j}\right \vert\prod_{i=1}^sx_{m_i}\prod_{j=1}^{t} y_{n_j}.
\end{align}

One can readily see that these are all correctly weighted polynomial, i.e. with a prefactor of $2^{s-1}$ in front of monomials of degree $s$, and their coefficients are given by the size of the overlaps between rows of the matrices $L$ or $G$. 

We can check that \eqref{eq:l-1correction} follows from Proposition~\ref{prop:quasitransv}.
Indeed, assuming Prop.~\ref{prop:quasitransv} holds, then $(i)$ enforces that all coefficients $\left\vert\bigwedge_{j=1}^t \bs{G}^{n_j}\right \vert$ are divisible by $2$ and $(ii)$ that all coefficients $\left\vert\bigwedge_{i=1}^s \bs{L}^{m_i}\bigwedge_{j=1}^t\bs{G}^{n_j}\right \vert$ also are divisible by $2$.
Hence we can pull out a factor $2$ in front of everything while keeping the correct prefactor in front of each monomial.

\section*{Acknowledgment}
C.V. would like to thank B. Audoux, E.T. Campbell and L.P. Pryadko for fruitful discussions at different stages of this project.
C.V. acknowledge support by the European Research Council (EQEC, ERC Consolidator Grant No: 682726) as well as a QuantERA grant for the QCDA consortium. NPB is supported by the UCLQ fellowship.

\ifCLASSOPTIONcaptionsoff
  \newpage
\fi

\bibliographystyle{IEEEtran}

\bibliography{IEEEabrv,PinCodes_final}

\begin{IEEEbiographynophoto}{Christophe Vuillot} received the Ph.D. degree in computer science from TU Delft, Netherlands, in 2020.
	Since 2021 he is Chargé de Recherche (junior researcher) at Inria, in Nancy, France.
	His research focuses on fault-tolerant quantum computation.
\end{IEEEbiographynophoto}
\begin{IEEEbiographynophoto}{Nikolas P. Breuckmann} obtained the Ph.D. degree at RWTH Aachen University working with Prof. Barbara Terhal on quantum fault-tolerance and quantum complexity theory.
	He has worked in industry at PsiQuantum, a Bay Area-based start-up building a silicon-photonics-based quantum computer. He holds a UCLQ Research Fellowship at University College London.
	He is interested in quantum information and related fields.
\end{IEEEbiographynophoto}

\end{document}